\DeclareMathOperator{\parD}{Pareto}
\DeclareMathOperator{\erD}{Erlang}
\DeclareMathOperator{\expD}{Exp}
\DeclareMathOperator{\sexpD}{S-Exp}
\DeclareMathOperator{\BMD}{Bi-Modal}
\DeclareMathOperator{\expec}{\mathbb{E}}
\newtheorem{theorem}{Theorem}
\newtheorem{lemma}{Lemma}
\newtheorem{corollary}{Corollary}
\newtheorem{conjecture}{Conjecture}
\newtheorem{proposition}{Proposition}
\def\squiggly{\bgroup \markoverwith{\textcolor[rgb]{1,0,0}{\lower3.5\p@\hbox{\sixly \char58}}}\ULon}
\newcommand{\prob}[1]{{\mathbb P}}
\begin{document}
%\tableofcontents
%
% paper title
% Titles are generally capitalized except for words such as a, an, and, as,
% at, but, by, for, in, nor, of, on, or, the, to and up, which are usually
% not capitalized unless they are the first or last word of the title.
% Linebreaks \\ can be used within to get better formatting as desired.
% Do not put math or special symbols in the title.
\title{Diversity/Parallelism Trade-off\\ in Distributed Systems with Redundancy}
%
%
% author names and IEEE memberships
% note positions of commas and nonbreaking spaces ( ~ ) LaTeX will not break
% a structure at a ~ so this keeps an author's name from being broken across
% two lines.
% use \thanks{} to gain access to the first footnote area
% a separate \thanks must be used for each paragraph as LaTeX2e's \thanks
% was not built to handle multiple paragraphs
%
\author{Pei Peng, Emina Soljanin,~\IEEEmembership{Fellow,~IEEE,}
Philip Whiting% <-this % stops a space
\thanks{Manuscript received October 5, 2020; revised May 9, 2021; accepted October 8, 2021. This work was supported  by  the  National  Science  Foundation  under  Grant  No.  CIF-1717314. An earlier version of this paper was presented in part at the Proceedings of 2020 IEEE International Symposium on Information Theory (ISIT)\cite{peng2020diversity} [DOI: 10.1109/ISIT44484.2020.9174030]. \emph{(Corresponding author: Pei Peng.)}}%
\thanks{Pei Peng and Emina Soljanin are with the Department
of Electrical and Computer Engineering, Rutgers, The State University of New Jersey, Piscataway, NJ 08854, USA (e-mail: pei.peng@rutgers.edu; emina.soljanin@rutgers.edu).}% <-this % stops a space
\thanks{Philip Whiting is with School of Engineering, Macquarie University, Sydney, NSW 2109, Australia (e-mail: philip.whiting@mq.edu.au).}%
\thanks{Communicated by M. A. Maddah-Ali, Associate Editor for Communications.}
\thanks{Color versions of one or more figures in this article are available at \url{https://doi.org/10.1109/TIT.2021.3127920}.}
\thanks{Digital Object Identifier 10.1109/TIT.2021.3127920}
}%
% note the % following the last \IEEEmembership and also \thanks - 
% these prevent an unwanted space from occurring between the last author name
% and the end of the author line. i.e., if you had this:
% 
% \author{....lastname \thanks{...} \thanks{...} }
%                     ^------------^------------^----Do not want these spaces!
%
% a space would be appended to the last name and could cause every name on that
% line to be shifted left slightly. This is one of those "LaTeX things". For
% instance, "\textbf{A} \textbf{B}" will typeset as "A B" not "AB". To get
% "AB" then you have to do: "\textbf{A}\textbf{B}"
% \thanks is no different in this regard, so shield the last } of each \thanks
% that ends a line with a % and do not let a space in before the next \thanks.
% Spaces after \IEEEmembership other than the last one are OK (and needed) as
% you are supposed to have spaces between the names. For what it is worth,
% this is a minor point as most people would not even notice if the said evil
% space somehow managed to creep in.

% The paper headers
\markboth{IEEE Transactions on Information Theory, to appear in  2021} {Submitted paper}
% The only time the second header will appear is for the odd numbered pages
% after the title page when using the twoside option.
% 
% *** Note that you probably will NOT want to include the author's ***
% *** name in the headers of peer review papers.                   ***
% You can use \ifCLASSOPTIONpeerreview for conditional compilation here if
% you desire.

% If you want to put a publisher's ID mark on the page you can do it like
% this:
%\IEEEpubid{0000--0000/00\$00.00~\copyright~2014 IEEE}
% Remember, if you use this you must call \IEEEpubidadjcol in the second
% column for its text to clear the IEEEpubid mark.

% use for special paper notices
%\IEEEspecialpapernotice{(Invited Paper)}

% make the title area
\maketitle

\begin{abstract}
Distributed computing enables {\it parallel} execution of smaller tasks that make up a large computing job. Its purpose is to reduce the job completion time. However, random fluctuations in task service times lead to straggling tasks with long execution times. Redundancy provides {\it diversity} that allows job completion when only a subset of redundant tasks is executed, thus removing the dependency on the straggling tasks. Under constrained resources (here, a fixed number of parallel servers), increasing redundancy reduces the available resources for parallelism. In this paper, we characterize the {\it diversity vs.\ parallelism} trade-off and identify the optimal strategy among replication, coding, and splitting, which minimizes the expected job completion time. We consider three common service time distributions and establish three models that describe the scaling of these distributions with the task size. We find that different distributions with different scaling models operate optimally at different redundancy levels, thus requiring very different code rates. 
\end{abstract}

% Note that keywords are not normally used for peerreview papers.
\begin{IEEEkeywords}
Distributed systems, straggler mitigation, diversity and parallelism trade-off, erasure coding, service time scaling
\end{IEEEkeywords}

% For peer review papers, you can put extra information on the cover
% page as needed:
% \ifCLASSOPTIONpeerreview
% \begin{center} \bfseries EDICS Category: 3-BBND \end{center}
% \fi
%
% For peerreview papers, this IEEEtran command inserts a page break and
% creates the second title. It will be ignored for other modes.

\IEEEpeerreviewmaketitle

\section{Introduction}
Distributed parallel computing has become necessary for handling machine learning and other algorithms with ever increasing complexity and data requirements. This is because it provides simultaneous execution of smaller tasks that make up larger computing jobs. However, the large-scale sharing of computing resources causes random fluctuations in task service times\cite{dean2013tail}. Therefore, although executed in parallel, some tasks, known as stragglers, take much more time to complete, which consequently increases the job service time. Redundancy, in the form of simple task replication, and more recently, erasure coding, has emerged as a potentially powerful way to shorten the job execution time. Task redundancy allows job completion when only a subset of redundant tasks get executed, thus avoiding stragglers, see e.g.\ \cite{aktacs2019straggler,gardner2015reducing,ananthanarayanan2013effective,kadhe2015availability,kadhe2015analyzing,halbawi2018improving,ForkJoin:AktasKSS21,dutta2019optimal,li2015coded,yang2017coded,yuster2005fast,reisizadeh2019coded,tandon2017gradient,wang2015using,Codes&Qs:JoshiSW17,primorac2021hedge} and references therein.
Redundancy provides diversity since job completion can be accomplished in different ways, e.g., when any fixed-size subset of tasks gets executed.

In distributed, parallel computing with redundancy, both parallelism and diversity are essential in reducing job service time. However, both parallelism and diversity are provided by the same limited system's resources dedicated to the job execution, e.g., a fixed number of servers. To understand this tension on the system's resources that  parallelism and diversity bring about, let us consider two extreme ways to assign a job to $n$ servers. One is {\it splitting} or maximum {\it parallelism} with no redundancy. Here, the job is divided equally among the $n$ workers, and thus it gets completed when all workers execute their tasks.  The other is $n$-fold {\it replication} or maximum {\it diversity}. Here, the entire job is given to each worker, and thus it gets completed when at least one of the workers executes its task. Roughly speaking, splitting (maximal parallelism) is appropriate for large jobs with almost deterministic service time (i.e., no straggling servers), and replication (maximal diversity) is appropriate for small jobs with highly variable service time (i.e., many straggling servers).

 Given a fixed number of workers $n$, the general question is how much parallelism vs.\ diversity should be used. Consider a coding scheme where jobs are split in $k$ tasks and encoded into $n\ge k$ s.t.\ execution of any $k$ is sufficient for job completion. On the one hand, the smaller the $k$, the larger the task each server is given to execute. On the other hand, the smaller the $k$, the smaller the subset of tasks necessary for job completion. The choice of $k$ thus dictates the trade-off between parallelism (increasing with $k$)  and diversity (decreasing with $k$). We are here concerned with characterising the diversity vs.\ parallelism trade-off for different service time and task execution models, i.e., with finding an optimal $k$ for a given $n$.

There is a large body of literature on replication and erasure codes for classical machine learning and other algorithms (see e.g. \cite{CC:duttaCP16,li2015coded,lee2017speeding,dutta2017coded,yang2017coded,yuster2005fast,reisizadeh2019coded,yu2017polynomial,tandon2017gradient,yu2020straggler,merrill2016merge,karakus2019redundancy,kiani2018exploitation,ozfatura2019speeding,raviv2018erasure,ye2018communication,ferdinand2018anytime} and references therein), and thus it is reasonable to assume that codes exist for many job types and any $n$ and $k$ combination. However, very little is known about what exact $n$ and $k$ should be selected in a given scenario in order to optimize a particular metric or goal of interest. 
When the goal is only to have the job completed by a certain time and it is known that at most $\ell$ workers will not respond by that time, then simply setting $k=n-\ell$ will achieve the goal. 
However, service time is a random variable, and one can only talk about the probability of task completion by a certain time \cite{chen2010analysis,reiss2012towards,ananthanarayanan2014grass,wang2015using}. Therefore, the question one should ask is which $k$  minimizes the expected job completion time.
%and which $k$ maximizes the probability of job completion by a certain time. This paper is concerned with the former.

Several recent papers asked how much redundancy should be used in distributed systems. In particular, \cite{gardner2017better,Codes&Qs:JoshiSW17,aktas2019optimizing} are solely concerned with replication systems, and their results do not easily extend to erasure coding system. Coding systems were considered in e.g., \cite{joshi2012coding, joshi2014delay,aktacs2019straggler}. However, the system's resources were not assumed to be limited, and thus the diversity vs.\ parallelism trade-off was not addressed.

To study the diversity vs.\ parallelism trade-off in systems with limited resources, we need to know the service time probability density function (PDF) as well as how it {\it scales} (changes) with the size of the task. Various service time PDFs have been adopted in the literature. 
For theoretical analysis, $\parD$ distribution was used in  e.g.,\cite{aktacs2019straggler,MultServersHeavyTailedWorkload:PsounisMP05, UnfairnessSRPT:BansalH01, LowLatencyViaRed:VulimiriGM13, aktas2017effective, aktas2018straggler,joshi2018synergy}, 
$\erD$ was used in e.g.,\ \cite{Erlang:KwonG16,Erlang:VasantamMM17,Erlang:GorbunovaZM16,Erlang:PoloczekC16,Erlang:Thomasian15,Erlang:BanerjeeKA14}, 
Shifted-Exponential was used in e.g., \cite{joshi2012coding,liang2013fast, CC:duttaCP16,bitar2020minimizing},
the exponential distribution was used in \cite{gardner2015reducing, SimplexQs:AktasNS17, joshi2014delay}, and a certain type of Bi-Modal distribution was used in e.g.,\cite{behrouzi2019redundancy}, 
Some general classes of distributions (log-concave/convex) were considered in \cite{QC:JoshiSW15,Codes&Qs:JoshiSW17}.
The experiments with Amazon EC2 servers reported in \cite[Fig.~2]{GradientCoding:TandonLD17}
 show that the service time  can be modeled by a Bi-Modal distribution (e.g., the one we use in this paper) or a heavy tail distribution (e.g., Pareto). On the other hand, a recent system-level work justifies the exponential distribution by considering their experimental results running on AWS \cite{primorac2021hedge}.

There is no consensus on how the service time PDFs scale with the task size. For example, if the service time for some unit size task is Exponential, then some models assume that the service time for an $s$ times larger task is also Exponential with the $s$ times larger mean (i.e., scaled exponential \cite{CC:duttaCP16,joshi2014delay,reisizadeh2019coded}), while other models assume that it is Erlang (sum of $s$ exponential PDFs). If the service time for some unit size task is Shifted-Exponential, then some models assume that the service time for $s$ times larger task is also Shifted-Exponential, only with an $s$ times larger shift \cite{CodedStorage:HuangSX12,joshi2012coding, lee2017speeding}. More sophisticated models studied how job size changes the tail of the Pareto service time \cite{aktacs2019straggler}.
In this paper, we consider a number of common service time and scaling models. We find that different models operate optimally at very different levels of redundancy, and thus may require very different code rates. The contributions of the paper are stated in more detail in Sec.~\ref{sec:contributions}, after the computing system model is given in detail.  

The paper is organized as follows: In Sec.~\ref{sys:model}, we present the system architecture and the models for the service time and its scaling. In Sec.~\ref{sec:contributions}, we state the problem and summarize the contributions of this paper. In Sec.~\ref{sec:shift}, \ref{sec:Pareto}, and \ref{sec:Bi-Modal}, we  characterize the diversity vs.\ parallelism trade-off for three common service time distributions with three different scaling models. 
%\textcolor[rgb]{1,0,1}{Parts of the results presented in Sec.~\ref{sec:shift} were published in \cite{peng2020diversity}.}  
Conclusions are given in Sec.~\ref{sec:conclusions}.
% \textcolor[rgb]{1,0,0}{Check back.}

\section{System Model and Problem Formulation}
\label{sys:model}
\subsection{System Architecture}
We adopt a system model as shown in Fig.~\ref{fig:MasterRole}, consisting of 
a single front-end master server and $n$ computing servers we refer to as workers. Such distributed, parallel computing system architectures where a single master node manages a computing cluster of nodes are commonly implemented in modern frameworks, e.g., Kubernetes~\cite{KubernetesDoc18} and Apache Mesos~\cite{Mesos:HindmanKZ11}.% \textcolor{blue}{Since the workers process the tasks from the same job, we consider these workers perform similarly. }

\begin{figure}[hbt]
\begin{center}
\begin{tikzpicture}
%\node at (0.6,3) {\includegraphics[scale=0.95]{MasterRole}};
\node at (2.2,0) {\includegraphics[width=0.47\textwidth]{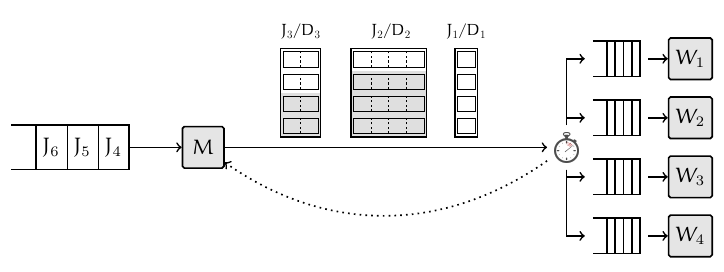}};
%\node at (-0.2,0.7) {\textcolor{red}{\bcquestion}};
%\node at (3.3,1.2) {\includegraphics[scale=0.40]{JobsTasks}};
\end{tikzpicture}
\end{center}

\caption{\textbf{A Distributed Computing System:} Master node $M$ partitions jobs $J_i$ into tasks, possibly generates redundant tasks, and dispatches them to workers $W_1, W_2 , W_3, W_4$.
Shaded regions in the pre-processed jobs $J_2,J_3$ indicate redundancy. Here, each job consists of 4 computing units. Job $J_1$ is executed with maximal parallelism (splitting), $J_2$ with maximal diversity (replication), and $J_3$ is encoded by a $[4,2]$ erasure code.
\label{fig:MasterRole}}
\end{figure}

\subsection{Computing Jobs, Tasks, and Units}
\label{subsec:comp}
We are concerned with computing jobs that can be split into 
tasks which can then be executed independently in parallel by different workers. 
An example of such a job is vector by matrix multiplication, which is a basic operation in regression analysis and PageRank, and also in gradient descent, which resides at the core of almost any  machine learning algorithm \cite{ML:HoCC13,ML:DaiKW15,ML:AbadiAB16,ML:ChenMB16,ML:GemullaNH11,ML:DeanCM12,ML:Krizhevsky14,ML:ZhangCL15}.

We assume that there is some minimum size task for a given job below which distributed computing would be inefficient, and  refer to it as the  \textit{computing unit (CU)}. A task given to each worker can have one or more computing units. For example, if the job is to find a product $A\cdot X$ of a $3\times n$ matrix $A$ and an $n\times 1$ vector $X$, the computing unit could be a scalar product of a row of $A$ and $X$. Let the matrix $A$ be split into two submatrices: a $2\times n$ submatrix $A_1$ and a $1\times n$ submatrix $A_2$, as shown in Fig.~\ref{fig:CUs}.  
\begin{figure}[hbt]
    \centering
    \includegraphics{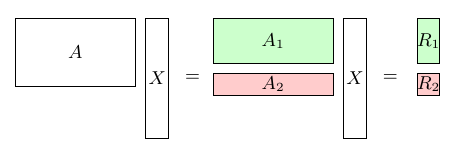}
    \caption{Multiplication $A\cdot X$ of a $3$-row matrix $A$ by vector $X$ is accomplished by parallel multiplication is of a $2$-row submatrix $A_1$ by vector $X$ and a $1$-row submatrix $A_2$ by vector $X$.}
    \label{fig:CUs}
\end{figure}
The job to compute $A\cdot X$, consisting of 3 CUs, can be split in two tasks for parallel execution: task $A_1\cdot X$ with two CUs, and task $A_2\cdot X$ with one CU.
We will measure the sizes of jobs and tasks by the number of their computing units. Although CU corresponds to the execution of identical tasks on different data sets, their execution times are not necessarily identical nor identically distributed, as we discuss in more detail below.

\subsection{Erasure Encoding Model}
\label{sec:coding}
We assume that each job consists of $n$ CUs where $n$ is the number of workers. The master node partitions each job into $k$ tasks, each of size $s=n/k$. It then generates $n-k$ redundant tasks, and dispatches the $n$ tasks to the $n$ workers. Therefore,  each worker is assigned a task of $s=n/k$ CUs. The redundant tasks are generated by an erasure code. If an $[n, k]$ code with minimum distance $d$ is used, then the job is completed when any $m = n - (d - 1)$ out of $n$ tasks are completed. The Singleton bound imposes the constraint $m \ge k$, where $m = k$ for MDS codes. In this paper, we limit our analysis to MDS codes for two main reasons: 1) MDS are the most common codes used in the literature on performance analysis of erasure coded distributed 
systems and 
2) MDS codes are sufficient to show the diversity/parallelism tradeoff and its dependence on multiple features in the system model, which is the purpose of this paper. However, our analysis is not limited to MDS codes, and we outline how it can be extended to general erasure codes in \Cref{sec:conclusions}.

 Fig.~\ref{fig:MasterRole} shows some possible ways in which the master server can preprocess a job, i.e., partition a job into CUs, group the CUs into tasks, and add redundancy. Because of redundancy, not all tasks assigned to the workers will be executed or even partially serviced. Because of that, we refer to the preprocessed jobs as virtual demands. Consider the virtual demands  $D_1$, $D_2$, and $D_3$ in the figure (resulting from processing jobs $J_1$, $J_2$, and $J_3$). Here, all jobs consist of 4 CUs. No redundant tasks are formed for job $J_1$, and thus the virtual demand $D_1$ and the original job are identical. Job $J_2$ is replicated on 4 workers, and thus the size of its virtual demand $D_2$ is 4 times the size of  $J_2$. Job $J_3$ is encoded by a systematic $[4,2]$ MDS code that generates 2 coded tasks of 2 CUs size. Its virtual demand $D_3$ is organized as follows: Workers $W_1$ and $W_2$ are each given a task consisting of 2 different CUs of $J_3$. Workers $W_3$ and $W_4$ are each given a coded task of 2 CUs size. 
 Job $J_1$ is handled by splitting, $J_2$ by replication, and $J_3$ by coding. Roughly speaking, the goal of this paper is to determine which of these three strategies should be used for several service time models for executing single and multiple CUs used in the literature. 
 
\subsection{Computing Unit Service Time Models \label{sec:servicetime}}
We model a computing unit service time as a random variable (RV) $\!X$, and refer to the tasks that are still running after a given time as stragglers. As we discussed in the introduction, there is no consensus on what the probability distribution of $X$ is.
We adopt the following three service time models commonly used in the literature.\\
\textbf{(Shifted)-Exponential:}\space $X \sim \sexpD(\Delta, W)$: Support of $X$ is $[\Delta, \infty)$, $\Delta$ is the minimum service time. The tail distribution is given as $\Pr\{X > x\} = e^{-(x-\Delta)/W}$ for $x > \Delta$.
The larger the $W$ the more likely straggling becomes. If $\Delta=0$, then $X \sim \expD(W)$ is exponential.\\
\textbf{Pareto:}\space $X \sim \parD(\lambda, \alpha)$: Support of $X$ is $[\lambda, \infty)$ where $\lambda$ is the minimum service completion time. Tail distribution is given as $\Pr\{X > x\} = (\frac{\lambda}{x})^{\alpha}$ for $x > \lambda$ where $\alpha$ is known as the tail index and models tail heaviness. Smaller $\alpha$ means a heavier tail, and thus more likely straggling.\\
%ul{Erlang:}\space \space $X \sim \erD(k,W)$: Support of $X$ is $[0, \infty)$.
%Tail distribution is given by $Pr\{X > x\} =\sum _{{n=0}}^{{k-1}}{\frac  {1}{n!}}e^{{-\lambda x}}(\lambda x)^{n}$. It is the distribution of a sum of 
%$k$ independent exponential variables with mean $1/\lambda$.
%The parameter k is called the shape parameter, and the parameter $\lambda$  is called the rate parameter.
%
\textbf{Simple Bi-Modal:}\space $X\sim \BMD(B, \epsilon)$: Under this distribution, $X$ takes only two values:
\begin{equation}
%\text{service time} ~~~ 
X = \begin{cases}
1  & \text{w.p.} \quad  1-\epsilon \\
B>1 & \text{w.p.} \quad \epsilon  ~\leftarrow \text{probability of straggling}
\end{cases} \label{eqn:toy_X_dist}
\end{equation}
 This distribution features two important aspects of service straggling: probability of straggling $\epsilon$ and magnitude of straggling $B$. 
\subsection{Service Time Scaling with the Task Size\label{sec:servicescaling}}
In the previous section, we listed three common models for the service time of a single computing unit. Together with those models, we will adopt three models for the service time of consecutive computing units that have frequently been used in the literature, as discussed in the introduction. The number of computing units that get assigned to each worker (that is, their task sizes) depends on the code rate $k/n$ used in the system, and thus these scaling models are relevant because they tell us how the task service time scales with its size.

  We consider three different commonly adopted models for the service time of consecutive CUs execution on the same server. For all three models, we assume independence across the servers. The models are described next, and their impact on the diversity vs.\ parallelism trade off is one of the main concerns of this paper.\\
\textbf{Model 1} -- {\it Server-Dependent Scaling:} \space
 The assumption here is that the straggling effect depends on the server and is identical for each CU executed on that server. Namely, there is some initial handshake time $\Delta$ after which the server completes its first and each subsequent CU in time $X$, i.e. $Y = \Delta + s\cdot X$. E.g. $X\sim\expD(W)$, then $Y\sim \sexpD(\Delta,sW)$. Note that $\Delta$ may be equal to $0$, giving $Y =  s\cdot X$. For example, when $X\sim\parD(\lambda,\alpha)$, then $Y\sim \parD(s\lambda,\alpha)$.
 \\
\textbf{Model 2} -- {\it Data-Dependent Scaling:}  \space The assumptions here are that 1) each CU in a task of $s$ CUs takes $\Delta$ time units to complete and  2) there are some inherent additive system randomness at each server which does not depend on the task size $s$ that determines the straggling effect  $X$. Therefore, $Y = s\cdot\Delta + X$. 
 \\
 \textbf{Model 3} -- {\it Additive Scaling:} \space
 The assumption here is that the execution times of CUs are i.i.d. Therefore, $Y=\sum_{i=1}^{s}X_{i}$ where $X_{1},X_{2},\cdots,X_{s}$ are independent.

\subsection{Job Completion Time}
As discussed above, the task execution times of the workers are i.i.d. RVs. The PDF of the RV $Y$ modeling task execution time depends on the assumed model for the execution of a single and multiple CUs. When an $[n,k]$ code is used, the job is complete when any $k$ out of $n$ workers complete their size $s$ tasks ($s=n/k$).%typo-1
Thus, the job completion time is also an RV, which we denote by $Y_{k:n}$ since it represents the $k$-th order statistic of $n$ RVs distributed as $Y$.
Let $Y_1,Y_2, \dots, Y_n$ be $n$ samples of some RV $Y$. Then the $k$-th smallest is an RV, commonly denoted by $Y_{k:n}$, and known as the $k$-th order statistic of $Y_1,\dots Y_n$. 

\begin{center}\begin{small}
    \begin{tabular}{ rl}
       $n$  --  &  number of workers and also the job size in CUs\\
       $k$   --  & number of workers that have to execute their tasks\\& for job completion ( {\it diversity/parallelism parameter})\\
       $s$  -- & number of CUs per task, $s=n/k$\\
       $Y_{k:n}$ -- & job completion time when each worker's task size is $s$\\
    \end{tabular}\end{small}
\end{center}

%\newpage
\section{Problem Statement and Summary of the Contributions
\label{sec:contributions}}
Our goal is to characterise the expected job completion time $\expec[Y_{k:n}]$ for the service time and scaling models defined above. We are in particular interested in finding which $k$ (i.e., code rate $k/n$) minimizes $\expec[Y_{k:n}]$. Recall that when $k=1$, we have replication (maximum diversity, no parallelism), and when $k=n$, we have splitting (maximum parallelism, no diversity). When $1<k<n$, we use MDS coding and have a diversity/parallelism trade-off determined by the value of $k$.

The following table summarizes our findings by indicating whether splitting, replication, or coding minimizes the average job completion time $\expec[Y_{k:n}]$. Much more detail is given in the following sections.

\begin{table*}[hbt]
\begin{center}
\begin{threeparttable}
    \caption{ Strategies that minimize the average job completion time %$\expec[Y_{k:n}]$ 
for a given service time PDF and scaling model}
    \label{Table:models}
		\begin{small}
			\begin{tabular}{@{}lllll@{}}
				\toprule 
				& &   \multicolumn{3}{c@{}}{\sc {service time PDF}}
				\\  \cmidrule{3-5}
				& & {\bf Shifted Exponential}
				& {\bf Pareto}
				& {\bf Bi-Modal}
				\\ [2mm]
				\multirow{5}{*}{\rotatebox{90}{
						{\sc{Scaling}}}} & 
				{\bf Server-Dependent} &  R \tnote{1} & S $\longrightarrow$ C \tnote{2}
				 &  S $\longrightarrow$ C $\longrightarrow$ S\\[2mm] 
				& {\bf Data-Dependent}   & S $\longrightarrow$ C $\longrightarrow$ R & S $\longrightarrow$ C $\longrightarrow$ R & S $\longrightarrow$ C $\longrightarrow$ S\\[2mm]
				& {\bf Additive}   & S $\longrightarrow$ C & S $\longrightarrow$ C & S $\longrightarrow$ C $\longrightarrow$ S\\[2mm]
				\bottomrule
			\end{tabular}
		\end{small}
\begin{tablenotes}
\footnotesize
\item[1] Strategies: R - replication, S - splitting, C - coding.
\item[2] $\longrightarrow$ indicates how the optimal strategy changes as the tail of the PDF becomes heavier (straggling becomes more likely).
\end{tablenotes}
     \end{threeparttable}
	\end{center}
\end{table*}

We consider the service time PDF and scaling models that are most commonly adopted in the literature. However, some of our results (we believe) can be extended to general service time PDFs, as we indicate by the claims and conjectures stated throughout the paper. These observations are relevant to practitioners who may have limited knowledge about their systems' behaviour.

To derive our results, we have relied on the following classical probabilistic models and arguments, which to the best of our knowledge, have not been previously used in this context. We introduce 
a generalized birthday problem to analyze the splitting strategy for the (Shifted-)Exponential service time with additive scaling. We recognize the stochastic dominance
of splitting over coding.
We show that the law of large numbers (LLN) can be used as an effective tool in finding the optimal code rate for systems with Bi-Modal service times and large number of workers. Moreover, we demonstrate how an LLN based analysis can be used to establish that for additive scaling and any service time distribution with the $4$-th moment, splitting is a  better strategy than replication for a sufficiently large number of workers.

\section{(Shifted-)Exponential Service Time\label{sec:shift}}
Under the (Shifted-)Exponential model, the CU service time is given by $\Delta+X$, where $X\sim \expD(W)$. The expected job completion time $\expec[Y_{k:n}]$ depends on the service time of $s=n/k$ CUs, which is determined by the service time scaling model. 
In the following three subsections, we determine $\expec[Y_{k:n}]$ for our three scaling models. Some results in this section were published in \cite{peng2020diversity}.

\subsection{Server-Dependent Scaling}
\label{sec:server-shift}
Under the server-dependent scaling, the service time of a task consisting of $s$ CUs is given by $Y=\Delta+s\cdot X$, which means $Y\sim \sexpD(\Delta, sW)$. Therefore, the job completion time is given by
\[
Y_{k:n}= \Delta + s\cdot X_{k:n}, ~~\text{where}~~ X\sim \expD(W)
\]
and, by using the expression for $\expec[X_{k:n}]$ in \eqref{eq.orderstat_mean_variance}, we have
\begin{equation}
\label{Eq:server-shift}
    \expec[Y_{k:n}]= \Delta+sW(H_{n}-H_{n-k}), ~~\text{where}~~ s=\frac{n}{k}
\end{equation}
The minimum expected job completion time is given by the following theorem: 
\begin{theorem} 
\label{thm:server}
The expected job completion time for $\sexpD(\Delta,W)$ service time with server-dependent scaling is minimized by replication (maximal diversity), i.e. $k=1$.
\end{theorem}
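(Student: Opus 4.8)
The plan is to work directly with the closed form \eqref{Eq:server-shift}. Since $\Delta$ is an additive constant independent of $k$, minimizing $\expec[Y_{k:n}]$ over $k\in\{1,2,\dots,n\}$ is equivalent to minimizing
\[
g(k)\;=\;\frac{n}{k}\bigl(H_n-H_{n-k}\bigr).
\]
First I would rewrite the harmonic difference as a sum of $k$ consecutive reciprocals,
\[
H_n-H_{n-k}\;=\;\sum_{j=n-k+1}^{n}\frac{1}{j},
\]
and observe that each of these $k$ summands lies in $\bigl[\tfrac1n,\tfrac1{n-k+1}\bigr]$; in particular every summand is at least $\tfrac1n$. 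Hence $H_n-H_{n-k}\ge k/n$, and therefore $g(k)\ge 1$ for every admissible $k$.

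The second step is to pin down the equality case and the boundary value. At $k=1$ the sum degenerates to the single term $\tfrac1n$, so $g(1)=n\cdot\tfrac1n=1$ and $\expec[Y_{1:n}]=\Delta+W$. For $k\ge 2$ the summand $\tfrac1{n-1}$ is strictly larger than $\tfrac1n$, so the bound $H_n-H_{n-k}\ge k/n$ is strict and $g(k)>1$. Combining the two steps gives $\expec[Y_{k:n}]\ge \Delta+W=\expec[Y_{1:n}]$ for all $k$, with equality only at $k=1$, which is precisely the theorem.

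There is essentially no hard step here; the only thing to resist is the temptation to relax $k$ to a real variable and differentiate $g$, which is more painful and unnecessary. An equivalent and perhaps more conceptual phrasing, which I may include for intuition, is that $\tfrac1k\sum_{j=n-k+1}^n\tfrac1j$ is the average of the increasing finite sequence $\tfrac1n<\tfrac1{n-1}<\cdots<\tfrac1{n-k+1}$; extending the window from $k$ to $k+1$ appends the term $\tfrac1{n-k}$, which exceeds the current average and hence raises it. Thus $g(k)=n\cdot\tfrac1k(H_n-H_{n-k})$ is (strictly) increasing in $k$ and is minimized at $k=1$, i.e.\ by full replication.
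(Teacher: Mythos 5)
Your proof is correct. The primary argument you give is genuinely different from the paper's and is in fact more elementary: you lower-bound $H_n-H_{n-k}=\sum_{j=n-k+1}^{n}\tfrac1j\ge k/n$ term by term (strict for $k\ge 2$), which immediately yields $\expec[Y_{k:n}]\ge\Delta+W=\expec[Y_{1:n}]$ with equality only at $k=1$. The paper instead computes the increment $\expec[Y_{k+1:n}]-\expec[Y_{k:n}]=\tfrac{Wn}{k+1}\bigl[\tfrac1{n-k}-\tfrac1k(H_n-H_{n-k})\bigr]$ and argues the bracket is positive, thereby establishing the strictly stronger fact that $\expec[Y_{k:n}]$ is monotone increasing in $k$. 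Your closing ``conceptual phrasing'' (appending the term $\tfrac1{n-k}$, which exceeds the running average $\tfrac1k(H_n-H_{n-k})$, must raise the average) is exactly the paper's bracket-positivity step in disguise, so you have in effect supplied both a direct comparison to $k=1$ and the paper's monotonicity argument. The trade-off: your direct bound is shorter and avoids the telescoping bookkeeping, while the paper's monotonicity gives more information (the whole curve $\expec[Y_{k:n}]$ is increasing, not just bounded below by its value at $k=1$), which is the behavior visible in Fig.~\ref{fig:scale}.
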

\begin{proof}
 From \eqref{Eq:server-shift}, we see that $\expec[Y_{k:n}]$ is an increasing function of $k$ for a given $n$, as follows
\begin{align*}
    \expec[Y_{k+1:n}] & = \Delta+W\frac{n}{k+1}(H_{n}-H_{n-k-1})\\
    & = \Delta+W\frac{n}{k+1}\Bigl(H_{n}-H_{n-k}+\frac{1}{n-k}\Bigr)\\
    & = \expec[Y_{k:n}]+\frac{Wn}{k+1}
    \Bigl[\frac{1}{n-k} - \frac{1}{k}(H_{n}-H_{n-k})\Bigr]
\end{align*}
Since the term in square brackets above is positive, we have $\expec[Y_{k+1:n}]>\expec[Y_{k:n}]$ for any positive integer $k\le n$.
\end{proof}
\paragraph*{Numerical Analysis}
We evaluate \eqref{Eq:server-shift} to see
how the expected job completion time $\expec[Y_{k:n}]$ changes with the diversity/parallelism parameter $k$. We consider a system with $n=12$ workers and the following six different combinations of $W$ and $\Delta$: $\Delta=1$ with $W\in\{0, 5, 10\}$, and  $W=1$ with $\Delta\in\{0, 5, 10\}$.  %Note that different values of $W$ and $\Delta$ with the equal $W/\Delta$ give different values of $\expec[Y_{k:n}]$.  Therefore the results plotted in the figures throughout the paper should not be compared only on the basis of $W/\Delta$.
\begin{figure}
    \centering
    \includegraphics[width=0.39\textwidth]{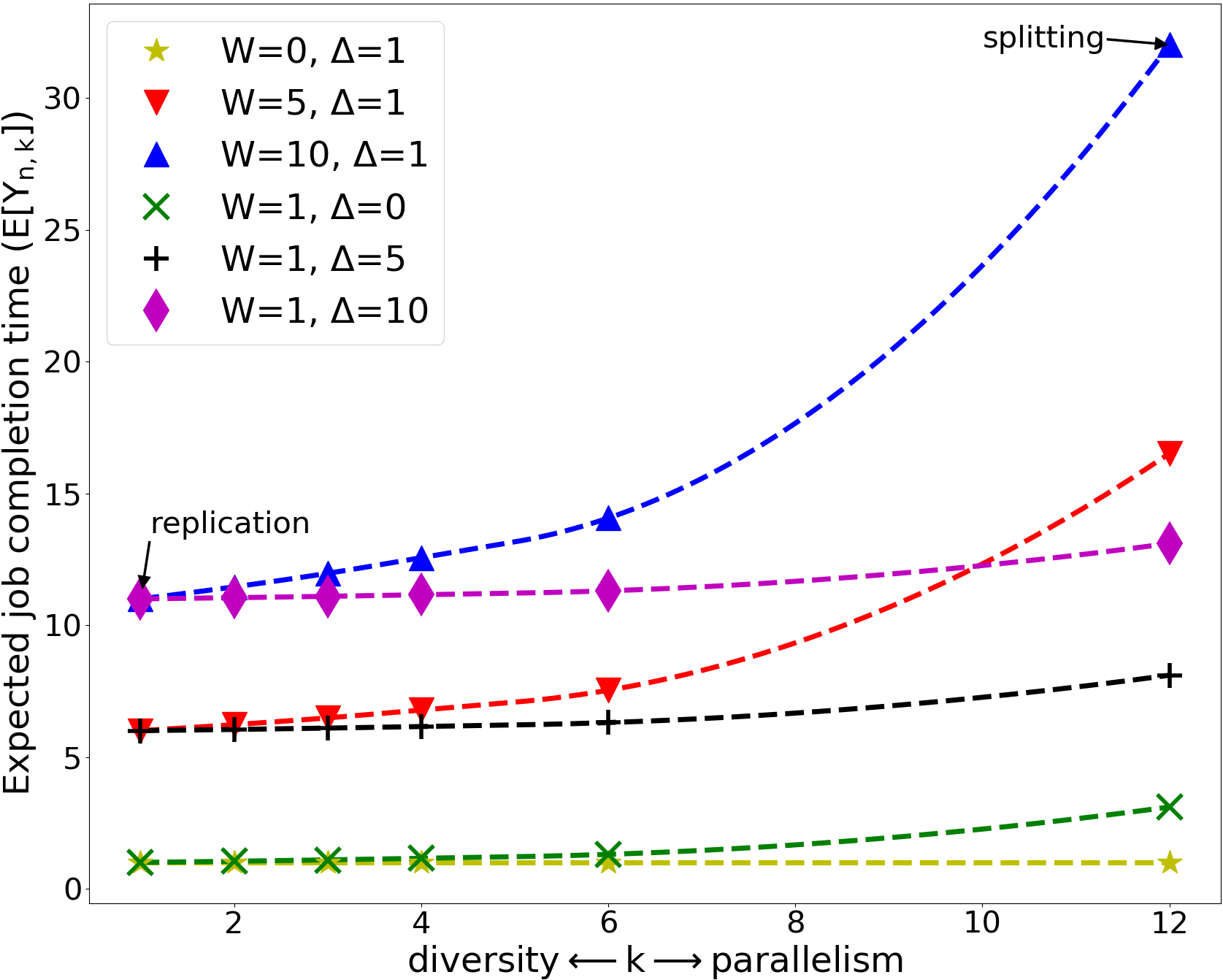}
    \caption{Expected job completion time $\expec[Y_{k:n}]$ for (Shifted-)Exponential service time with server-dependent scaling as a function of the diversity/parallelism parameter $k$ (cf.~\eqref{Eq:server-shift}). The number of workers (job size) is $n=12$, and task size per worker is $s=n/k$ (Since both $k$ and $s$ are integers, we have $k\in\{1,2,3,4,6,12\}$. We use  dashed curves to connect the points corresponding to different allowed values of $k$ for a given $\Delta$ and $W$ combination.)
    Replication is optimal for minimizing $\expec[Y_{k:n}]$.}
    \label{fig:scale}
\end{figure}
The results are plotted in Fig.~\ref{fig:scale}. $W=0$ corresponds to the  special scenario where $\expec[Y_{k:n}]=\Delta$, that is, the service time is deterministic and does not change with $k$. When $W>0$, $\expec[Y_{k:n}]$ reaches its minimum at $k=1$. When $W=1$ and $\Delta\in\{0, 5, 10\}$, $\expec[Y_{k:n}]$ increases with $\Delta$, but changes little with $k$. When $\Delta=1$ and $W\in\{0, 5, 10\}$, the slope of the corresponding curves increases with $W$. Although maximal diversity is optimal for all values of the parameters, it is much more effective in reducing the expected job completion time when $W$ is large compared to when $W$ is small. 
%These results, as we expected, are consistent with Theorem~\ref{thm:server}, and they still hold for large $n$.

\subsection{Data-Dependent Scaling}
\label{sec:data}
Under the data-dependent scaling, the service time of a task consisting of $s$ CUs is given by $Y=s\cdot\Delta+X$, which means $Y\sim \sexpD(s\Delta, W)$, Therefore, the job completion time is given by
\[
Y_{k:n}= s\cdot\Delta + X_{k:n}, ~~\text{where}~~ X\sim \expD(W)
\]
and, by using the expression for $\expec[X_{k:n}]$ in  \eqref{eq.orderstat_mean_variance}, we have
\begin{equation}
\label{Eq:data-shift}
\expec[Y_{k:n}]=  s\Delta+W(H_{n}-H_{n-k})=W
\bigl[\frac{n}{k}\cdot \frac{\Delta}{W} + (H_{n}-H_{n-k})\bigr].
\end{equation}

\begin{theorem} 
\label{thm:data}
The expected job completion time for
$\sexpD(\Delta,W)$ 
service time with data-dependent scaling is minimal when $k=k^*$, 
where $ k^* =\mathop{\arg\min}\limits_{k}{W
\Bigl[\frac{nd}{k} + (H_{n}-H_{n-k})\Bigr]}$, $d = \Delta/W$. Furthermore, $k^*$ takes the value $\lceil n (-d/2 +\sqrt{d + d^2/4})\rceil$ or $\lfloor n (-d/2 +\sqrt{d + d^2/4})\rfloor$. 
\end{theorem}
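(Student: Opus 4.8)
The plan is to show that the objective is discretely convex in $k$, so that its integer minimiser is pinned down by the single sign change of the forward difference, and then to identify that sign-change location with the closed form in the statement. Since $W>0$ merely rescales $\expec[Y_{k:n}]$, it suffices to minimise $f(k)=nd/k+H_n-H_{n-k}$ over $k\in\{1,\dots,n\}$, with $d=\Delta/W$. First I would compute, using $H_{n-k}-H_{n-k-1}=1/(n-k)$,
\[
f(k+1)-f(k)=-\frac{nd}{k(k+1)}+\frac{1}{n-k},
\]
and observe that both terms strictly increase with $k$ on $\{1,\dots,n-1\}$; hence $f$ is discretely convex, and therefore unimodal, and its minimiser over $\{1,\dots,n\}$ is $\min\{n,\max\{1,\lceil k_0\rceil\}\}$, where $k_0>0$ is the positive root of $k(k+1)=nd(n-k)$, i.e.\ of $k^2+(nd+1)k-n^2d=0$. (The clamping is genuine: $k_0\to n$ as $d\to\infty$, giving splitting, and $k_0\to 0$ as $d\to 0$, giving $k^{*}=1$, i.e.\ replication --- consistent with Table~\ref{Table:models}.)

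It then remains to connect $k_0$ to the stated expression. Let $q(t)=t^2+ndt-n^2d$; it is strictly increasing on $t\ge 0$ and its unique positive root is $x^{*}=n\bigl(-d/2+\sqrt{d+d^2/4}\bigr)$, obtained by solving the quadratic $x^2+ndx-n^2d=0$ --- equivalently, by minimising the logarithmic surrogate $nd/x+\ln\frac{n}{n-x}$, whose derivative is $-nd/x^2+1/(n-x)$ and whose second derivative is manifestly positive on $(0,n)$. Because $k_0$ satisfies $k_0^2+ndk_0-n^2d=-k_0$, we have $q(k_0)=-k_0<0=q(x^{*})$, so monotonicity gives $k_0<x^{*}$, and the mean value theorem yields $x^{*}-k_0=k_0/(2\xi+nd)<k_0/(2k_0)=\tfrac12$ for some $\xi\in(k_0,x^{*})$. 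Hence $0<x^{*}-k_0<\tfrac12$, which forces $\lceil k_0\rceil\in\{\lfloor x^{*}\rfloor,\lceil x^{*}\rceil\}$, exactly the form claimed for $k^{*}$.

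The main obstacle is this last reconciliation step: the clean closed form comes from the quadratic without the lower-order $+1$ term (equivalently, from replacing the harmonic sum by a logarithm), so one has to bound how far $x^{*}$ can drift from the true discrete optimum $\lceil k_0\rceil$. The mean-value estimate above keeps that gap below one, which is precisely what the floor-or-ceiling conclusion needs. The only remaining bookkeeping is the boundary behaviour at $k=1$ and $k=n$ and the degenerate case $d=0$ (where $f$ is strictly increasing and $k^{*}=1$), all of which is routine.
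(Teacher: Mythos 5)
Your proof is correct and is substantially more rigorous than the paper's. The paper's argument is a single sentence --- replace $H_n-H_{n-k}$ by $\log\frac{n}{n-k}$ and set the derivative of the resulting continuous surrogate to zero, which yields the quadratic whose root is $x^{*}=n\bigl(-d/2+\sqrt{d+d^2/4}\bigr)$; it then asserts without further argument that the integer minimiser is the floor or ceiling of $x^{*}$. Your route differs in two substantive ways. First, you work directly with the exact discrete objective: you compute the forward difference $f(k+1)-f(k)=-\tfrac{nd}{k(k+1)}+\tfrac{1}{n-k}$, observe both terms are increasing, and conclude discrete convexity, so the exact minimiser is $\lceil k_0\rceil$ where $k_0$ is the positive root of $k^2+(nd+1)k-n^2d=0$ (the paper never identifies this exact threshold). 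Second, and more importantly, you supply the reconciliation step the paper omits: the mean-value estimate $0<x^{*}-k_0<\tfrac12$ (using $q(k_0)=-k_0$ and $q'(\xi)=2\xi+nd>2k_0$) bounds the drift between the exact discrete threshold and the surrogate root, which is exactly what is needed to conclude $\lceil k_0\rceil\in\{\lfloor x^{*}\rfloor,\lceil x^{*}\rceil\}$. In short, the paper's proof gives the closed form but not the error control, so it does not literally establish the ``ceiling or floor'' claim; your proof does, and also handles the boundary clamping at $k=1$ and $k=n$ explicitly. The one thing worth making explicit in your write-up is the case distinction showing $\lceil k_0\rceil\in\{\lfloor x^{*}\rfloor,\lceil x^{*}\rceil\}$ from $0<x^{*}-k_0<\tfrac12$ (it holds, but it deserves a line rather than ``which forces''), and the sanity check that the clamping cannot push $k^{*}$ outside the stated set when $x^{*}<1$ or $x^{*}>n$.
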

\begin{proof}
The result is obtained by simple calculus using the $\log$ approximation to the harmonic numbers in \eqref{Eq:data-shift}.
\end{proof}
%\[
%k^*=\frac{n}{1+W/\Delta}
%\]
%This expression shows that the diversity vs.\ parallelism depends on the ratio of the 
Note that this expression depends only on the ratio 
$d = \Delta/W$. For  $\Delta\gg W$ (large $d$), the service time is essentially deterministic and it is optimal to use maximum parallelism, that is, splitting ($k=n$) is optimal.
On the other hand, when $W\gg \Delta$ (small $d$) execution time is much more variable and it is optimal to operate with maximum diversity, that is, replication ($k=1$) is optimal (cf.\ \cite{joshi2012coding}).
\\[1ex]
{\it Numerical Analysis:}
We evaluate \eqref{Eq:data-shift} for $\expec[Y_{k:n}]$ vs.\ $k$. We consider a system with $n=12$ workers the following five different values of $W/\Delta$: 1. $W=0$ ($\Delta=10$); 2. $W/\Delta=0.1$ ($W=1$, $\Delta=10$); 3. $W/\Delta=1$ ($W=5$, $\Delta=5$); 4. $W/\Delta=10$ ($W=10$, $\Delta=1$); 5. $\Delta=0$ ($W=10$).
\begin{figure}
    \centering
    \includegraphics[width=0.4\textwidth]{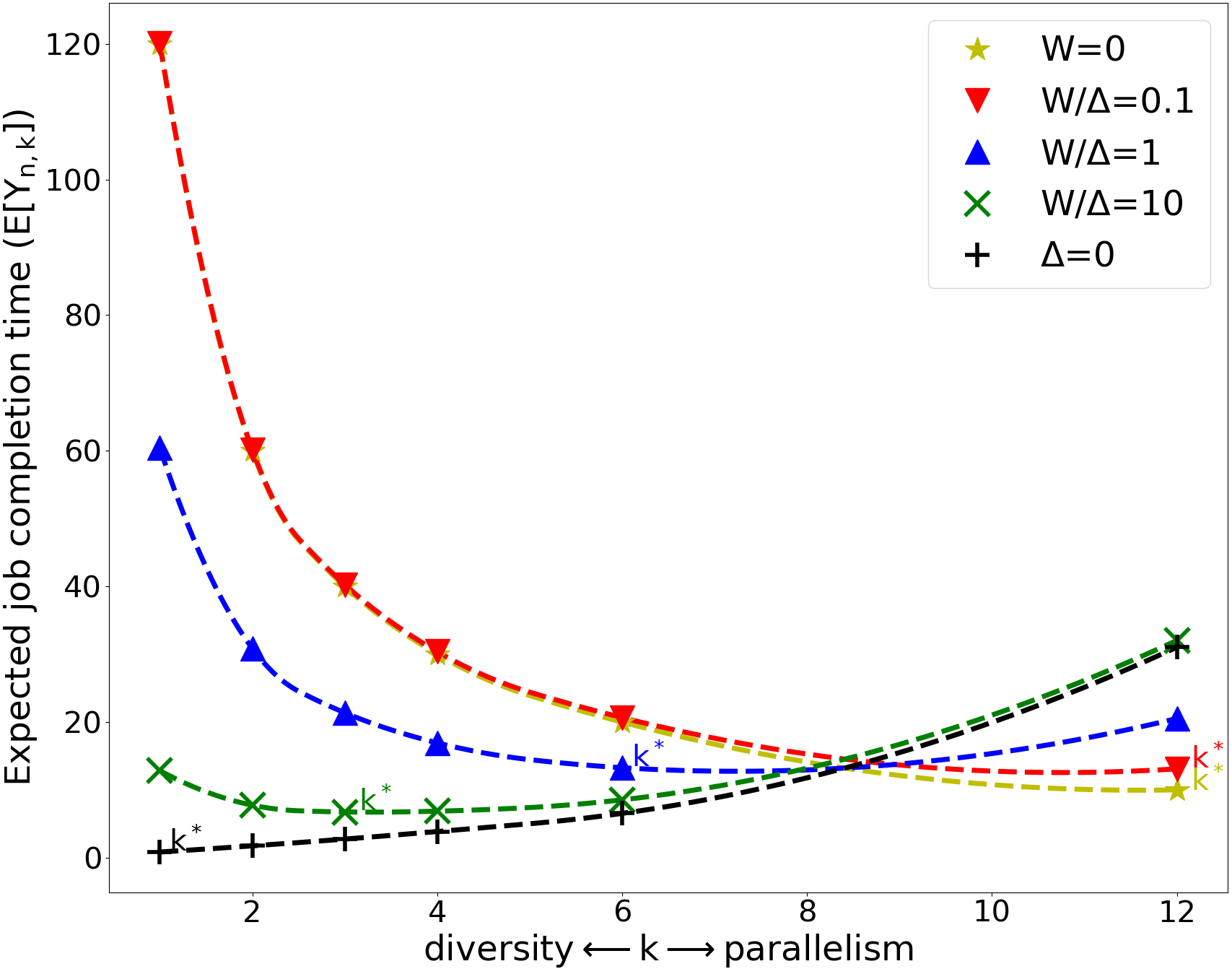}
    \caption{Expected job completion time $\expec[Y_{k:n}]$ for (Shifted-)Exponential service time with data-dependent scaling as a function of the diversity/parallelism parameter $k$ (cf.~\eqref{Eq:data-shift}). The number of workers (job size) is $n=12$, and task size per worker is $s=n/k$ (Since both $k$ and $s$ are integers, we have $k\in\{1,2,3,4,6,12\}$. We use dashed curves to connect the points corresponding to different allowed values of $k$ for a given $\Delta$ and $W$ combination.) Parallelism outperforms diversity for small $W/\Delta$ and vice versa if $W/\Delta$ is large. }
    \label{fig:shift}
\end{figure}
The results are plotted in Fig.~\ref{fig:shift}. By comparing different $W/\Delta$ scenarios, we conclude that when $W/\Delta$ (e.g. $0$, $0.1$) is small, then $\expec[Y_{k:n}]$ decreases as $k$, increases, which means that splitting is optimal. When $W/\Delta$ is large (and $\Delta=0$), the $\expec[Y_{k:n}]$ increases with $k$, which means that replication is optimal. Otherwise, $\expec[Y_{k:n}]$ reaches its minimum at $1 < k < 12$, which means that coding at a certain non-trivial rate is optimal. These observations are consistent with the theoretical analysis for $k^*$.

\subsection{Additive Scaling}
\label{sec:additive}
Under the additive scaling, the service time of a task consisting of $s$ CUs is given by $Y=s\cdot\Delta+(X_1+\dots +X_{s})=s\cdot\Delta+Z$, where $Z\sim \erD(s,W)$. Therefore, the job completion time is given by
\[
Y_{k:n}=s\cdot\Delta + Z_{k:n}, ~~\text{where}~~ Z\sim \erD(s,W)
\]
The expectation of the $k$-th order statistic of Erlang distribution is given by \eqref{Eq:Erlang-order}, 
which can be used for numerical results but is unsuitable for theoretical analysis. Asymptotics are available for large $n$ and $k= \mathcal{O}(1)$.
We now derive analytical expressions for the expected job completion time under splitting and replication, and show that splitting outperforms replication for sufficiently large $n$. We then show that rate $1/2$ coding outperforms splitting when $\Delta=0$. 

\paragraph{Splitting vs.\ Replication}
Under splitting, the job completion time is given by
\begin{equation}
Y_{n:n}=\Delta+X_{1:n}+X_{1:(n-1)}+\dots + X_{1:2}+X_{1:1}
\label{eq:AddSum}
\end{equation}
where $X_i$'s are i.i.d. $\expD(W)$, then $X_{1:n}$ is $\expD(W/n)$, and therefore,
\[
\expec[Y_{n:n}]= \Delta + W H_n
\]

Under replication, we have 
\[
\expec[Y_{1:n}]= n\Delta + W \frac{1}{n}\int_0^\infty \!\! e^{-t} \Bigl[R_n\Bigl(\frac{t}{n}\Bigr)\Bigr]^ndt 
\]
\[
~~ \text{where} ~~
R_n(x)=1+\frac{x}{1!}+\frac{x^2}{2!}+\dots+\frac{x^{n-1}}{(n-1)!}
\]
This result is a corollary of Theorem~\ref{Thm:birth}.
%Considering the exponential service time with additive scaling, the generalized birthday problem can be used to find the expected job completion time for replication in Theorem~\ref{Thm:birth}.

\begin{theorem}
\label{Thm:birth}
Let the service times of CUs be independent and exponential with rate 1.
 If a job with $d$ CUs is replicated over $n$ workers, then the expected job completion time is 
\begin{equation}
\label{Eq:add_repli}
    \frac{1}{n}\int_0^\infty \!\! e^{-t} \Bigl[S_d\Bigl(\frac{t}{n}\Bigr)\Bigr]^ndt 
\end{equation}
\[
~~ \text{where} ~~
S_d(x)=1+\frac{x}{1!}+\frac{x^2}{2!}+\dots+\frac{x^{d-1}}{(d-1)!}
\]
\end{theorem}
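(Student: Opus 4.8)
The plan is to reduce \eqref{Eq:add_repli} to an elementary order-statistic computation for the Erlang law. First I would observe that under replication each of the $n$ workers is handed the entire job of $d$ CUs, and under additive scaling the completion time of worker $i$ is $T_i = X_{i,1} + \cdots + X_{i,d}$, a sum of $d$ independent rate-$1$ exponentials, so $T_i \sim \erD(d,1)$, and $T_1,\dots,T_n$ are i.i.d. The job is done as soon as the fastest worker finishes, hence the job completion time is $T = \min\{T_1,\dots,T_n\}$, the first order statistic of $n$ i.i.d.\ $\erD(d,1)$ variables.

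Next I would record the tail of the Erlang law, $\Pr\{T_i > t\} = e^{-t}\sum_{j=0}^{d-1} t^j/j! = e^{-t} S_d(t)$. The cleanest justification is the Poisson--Erlang duality: letting $N_i(t)$ be the number of CUs worker $i$ has finished by time $t$, the process $(N_i(t))_{t\ge 0}$ is a rate-$1$ Poisson process and $\{T_i > t\} = \{N_i(t) \le d-1\}$, which gives $\Pr\{T_i>t\} = \sum_{j=0}^{d-1} e^{-t} t^j/j!$ at once (alternatively one integrates the Erlang density by parts, or inducts on $d$). By independence across workers, $\Pr\{T > t\} = \prod_{i=1}^{n}\Pr\{T_i > t\} = e^{-nt} S_d(t)^n$.

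Then I would compute the mean via the tail integral, $\expec[T] = \int_0^\infty \Pr\{T>t\}\,dt = \int_0^\infty e^{-nt} S_d(t)^n\,dt$ (valid since the integrand is nonnegative), and conclude with the substitution $u = nt$, which turns the right-hand side into $\tfrac{1}{n}\int_0^\infty e^{-u} S_d(u/n)^n\,du$, exactly \eqref{Eq:add_repli}. The stated corollary for $\expec[Y_{1:n}]$ then drops out by setting $d=n$ (so $S_n$ coincides with $R_n$), reinstating the rate parameter $W$ by scaling time, and adding the deterministic shift $n\Delta$.

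There is no genuine analytic obstacle; the difficulty is conceptual rather than computational. The point worth emphasizing is the interpretation that makes the later splitting-versus-replication comparison tractable: superposing the $n$ independent Poisson clocks yields a single rate-$n$ clock whose events are labeled uniformly at random among the workers, and the job finishes when some worker (``day'') first accumulates $d$ completed CUs (``birthdays''); by Wald's identity the expected completion time equals $1/n$ times the expected number of draws in this generalized birthday problem, which is the form \eqref{Eq:add_repli} encodes. Stating this explicitly is the main thing I would be careful to get right, since it is what allows \eqref{Eq:add_repli} to be estimated and compared to $\expec[Y_{n:n}] = \Delta + W H_n$.
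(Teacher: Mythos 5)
Your proof is correct, but it takes a genuinely different route from the paper's. The paper superposes the $n$ per-server Poisson clocks into a single rate-$n$ process, writes the completion time as $t_{\ell_{d,n}} = \sum_{j=1}^{\ell_{d,n}} (t_j - t_{j-1})$, applies Wald's identity to get $\expec[t_{\ell_{d,n}}] = \tfrac{1}{n}\expec[\ell_{d,n}]$, and then recognizes $\ell_{d,n}$ as the stopping time of a generalized birthday problem, whose mean $\int_0^\infty e^{-t}\bigl[S_d(t/n)\bigr]^n\,dt$ it quotes from Klamkin--Newman (Appendix~\ref{Ap:birthday}). You bypass all of that: you identify the job completion time as the minimum of $n$ i.i.d.\ $\erD(d,1)$ variables, use the Poisson--Erlang duality $\Pr\{T_i>t\} = e^{-t}S_d(t)$, integrate the tail $\expec[T] = \int_0^\infty e^{-nt}S_d(t)^n\,dt$, and substitute $u=nt$. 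This is more elementary and fully self-contained --- no Wald, no external birthday-problem reference --- and it arrives at~\eqref{Eq:add_repli} in a few lines. What the paper's route buys, and what you rightly flag at the end, is the birthday-problem interpretation itself: it is precisely this framing that makes the Klamkin--Newman asymptotics $\expec(n,d) \sim \sqrt[d]{d!}\,\Gamma(1+1/d)\,n^{1-1/d}$ available (used in~\eqref{Eq:approx} and Theorem~\ref{th:splitvsrepli}), and the Wald decomposition mirrors the telescoping decomposition of $Y_{n:n}$ in~\eqref{eq:AddSum}, which is what makes the splitting-versus-replication comparison transparent. In short: your derivation of the formula is cleaner; the paper's derivation is designed to set up the asymptotic comparison that follows, and you would want to keep that interpretation (as you note) even if you adopt your shorter proof of the identity itself.
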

\begin{proof}
Let $t_1, t_2,\dots$ be time epochs at which a CU gets completed on any of the $n$ servers. Because all $d$ CUs of the job are replicated on each of the $n$ servers, the job is completed when $d$ CUs get completed on any single server, which happens at some time $t_{\ell_{d,n}}$. Note that $\ell_{d,n}$ is a random variable. We represent $t_{\ell_{d,n}}$ as a sum of the CU inter-completion times. 
\begin{equation}
t_{\ell_{d,n}}=\sum_{j=1}^{\ell_{d,n}} (t_j-t_{j-1}), ~~\text{where $t_0$ is set to 0.}
\label{eq:time}
\end{equation}
Note that 1) $t_j-t_{j-1}$ are independent and exponentially distributed with rate $n$ (the minimum of $n$ independent exponentials with rate 1), and 2) $t_j-t_{j-1}$ are independent from $\ell_{d,n}$. Observe next that Wald's identity (Ch.10.2 in \cite{grimmett2020probability}) can be applied to \eqref{eq:time}. Therefore,
\[
\mathbb{E}[t_{\ell_{d,n}}] =\frac{1}{n}\cdot \mathbb{E}[\ell_{d,n}] 
\]
Now observe that $\ell_{d,n}$ corresponds to a generalized birthday problem that the expected number of draws from $n$ coupons until a coupon shows up $d$ times. The claim follows from the result for $\mathbb{E}[\ell_{d,n}]$ in Appendix~\ref{Ap:birthday}.
\end{proof}

In Appendix~\ref{Ap:birthday}, we also have an asymptotic result for \eqref{Eq:add_repli}. For $n$ large, we further simplify the expression of $\expec[Y_{1:n}]$:
\begin{equation}
\label{Eq:approx}
    \expec[Y_{1:n}] \sim n\Delta + \frac{W}{n}\sqrt[n]{n!}\,\Gamma(1+1/n)n^{1-\frac{1}{n}}, ~~\text{as}~~n\rightarrow \infty
\end{equation}

\begin{theorem}
\label{th:splitvsrepli}
For large enough $n$, splitting (maximal parallelism) outperforms replication (maximal diversity).
\end{theorem}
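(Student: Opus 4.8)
The plan is to compare the two expected completion times purely through their growth rates in $n$: $\expec[Y_{n:n}]$ grows only logarithmically, whereas $\expec[Y_{1:n}]$ grows linearly, so for $n$ large enough splitting must be the winner. Throughout I would assume $W>0$, since for $W=0$ the service time is deterministic and the comparison $\Delta \le n\Delta$ is immediate.

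First I would dispense with splitting: the additive-scaling formula gives $\expec[Y_{n:n}] = \Delta + W H_n$, and using $H_n = \ln n + O(1)$ this is $W\ln n + O(1)$, so $\expec[Y_{n:n}] = O(\ln n)$.

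Next I would bound replication from below by $\Omega(n)$. Recall $\expec[Y_{1:n}] = n\Delta + \expec[Z_{1:n}]$ with $Z\sim\erD(n,W)$. If $\Delta>0$ this is already at least $n\Delta$, and we are done. The case $\Delta = 0$ is the only one that needs work: here I would invoke the asymptotic estimate~\eqref{Eq:approx}, which follows from Theorem~\ref{Thm:birth} and the generalized-birthday count in Appendix~\ref{Ap:birthday}, namely $\expec[Y_{1:n}] \sim \frac{W}{n}\sqrt[n]{n!}\,\Gamma(1+1/n)\,n^{1-1/n}$; by Stirling ($\sqrt[n]{n!}\sim n/e$), $n^{1/n}\to1$, and $\Gamma(1+1/n)\to1$ this simplifies to $\expec[Y_{1:n}] \sim Wn/e$, hence $\expec[Y_{1:n}] = \Omega(n)$. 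If one prefers to avoid the exact constant, a Chernoff bound on $Z = X_1 + \dots + X_n$ (i.i.d. $\expD(W)$) gives $\Pr\{Z < nW/2\} \le e^{-cn}$ with $c = \ln 2 - \tfrac12 > 0$, whence $\Pr\{Z_{1:n}\ge nW/2\} = \bigl(\Pr\{Z\ge nW/2\}\bigr)^n \ge (1-e^{-cn})^n \to 1$ and $\expec[Y_{1:n}] \ge \tfrac{nW}{2}(1-e^{-cn})^n = \Omega(n)$.

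Finally, since $\expec[Y_{1:n}] = \Omega(n)$ while $\expec[Y_{n:n}] = O(\ln n)$ and $n/\ln n\to\infty$, there is some $N$ beyond which $\expec[Y_{1:n}] > \expec[Y_{n:n}]$, which is exactly the claim. I expect the only real obstacle to be making the $\Delta=0$ replication asymptotics rigorous; but Theorem~\ref{Thm:birth} has already reduced that to a coupon-collector-type identity, and in any event the Chernoff estimate delivers the crude $\Omega(n)$ lower bound that the argument actually relies on.
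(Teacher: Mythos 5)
Your proof is correct, and the governing idea is the same as the paper's: replication grows like $\Omega(n)$ while splitting grows like $O(\log n)$, so the latter eventually wins. Where you diverge is worth noting. The paper proves the replication lower bound uniformly (for all $\Delta\ge 0$) by chaining Stirling estimates through the generalized-birthday asymptotic~\eqref{Eq:approx}, arriving at the somewhat weaker bound $n\Delta + \tfrac{W}{2e}n^{1-1/(2n)} = \Omega(\sqrt n)$, which is still enough to beat $O(\log n)$. You instead split cases: for $\Delta>0$ the deterministic term $n\Delta$ alone gives $\Omega(n)$ with no work; for $\Delta=0$ you either simplify the same asymptotic to the sharper $\expec[Y_{1:n}]\sim Wn/e$, or sidestep Theorem~\ref{Thm:birth} and Appendix~\ref{Ap:birthday} entirely with a Chernoff / large-deviations bound on the Erlang sum $Z=X_1+\cdots+X_n$. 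The Chernoff route is the genuinely different (and more elementary) contribution: $\Pr\{Z<nW/2\}\le e^{-cn}$ with $c=\ln 2 - \tfrac12>0$, whence
\[
\expec[Z_{1:n}] \;\ge\; \tfrac{nW}{2}\,\Pr\{Z_{1:n}\ge nW/2\} \;=\; \tfrac{nW}{2}\bigl(1-e^{-cn}\bigr)^n \;=\;\Omega(n),
\]
since $(1-e^{-cn})^n\to 1$. This delivers the needed linear lower bound without invoking Wald's identity, the coupon/birthday identity~\eqref{Eq:birth}, or the Klamkin--Newman asymptotic~\eqref{Eq:birthapprox}. The trade-off is that the paper's machinery also yields the \emph{exact} expression for $\expec[Y_{1:n}]$ used in the numerical plots, so it is not superfluous in context; but for the bare comparison in this theorem, your Chernoff argument is leaner and self-contained.
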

\begin{proof}
By using the Stirling's formula for $\sqrt[n]{n!}$ in 
(\ref{Eq:approx}), we have
\begin{align*}
   &n\Delta + \frac{W}{n}\sqrt[n]{n!}\,\Gamma(1+1/n)n^{1-\frac{1}{n}}\\
   &\ge  n\Delta + \frac{W}{n}\sqrt[n]{\sqrt{2\pi}n^{n+\frac{1}{2}}e^{-n}}\,\Gamma(1+1/n)n^{1-\frac{1}{n}}\\
 %   & >  \frac{W}{e}\sqrt[n]{\sqrt{2\pi}n^{\frac{1}{2}}}n^{1-\frac{1}{n}} > \frac{W\sqrt{n}}{e}\\
 &=n\Delta + \frac{W}{n}\sqrt[n]{\sqrt{2\pi}n^{n+\frac{1}{2}}e^{-n}}\,\frac{\Gamma(2+1/n)}{1+1/n}n^{1-\frac{1}{n}}\\
    & > n\Delta +
 \frac{W}{en}\sqrt[2n]{2\pi}\,n^{1+\frac{1}{2n}}n^{1-\frac{1}{n}}\frac{1}{1+1/n}
    >  n\Delta + \frac{W}{2e}n^{1-\frac{1}{2n}} 
\end{align*}
For a large enough $n$, $\expec[Y_{1:n}]$ is well approximated by $ n\Delta + \frac{W}{n}\sqrt[n]{n!}\,\Gamma(1+1/n)n^{1-\frac{1}{n}}$. Recall that $H_n=\mathcal{O}(\log n)$ and  $n^{1-\frac{1}{2n}}/2e>\sqrt{n}/2e=\Omega(\sqrt{n})$. Therefore,
\[
\expec[Y_{1:n}] > \Delta + WH_n=\expec[Y_{n:n}], ~~\text{as}~~n\rightarrow \infty
\]
Note that the theorem holds for $\Delta=0$.
\end{proof} %reviewer1-5

\paragraph{Rate $1/2$ Coding, $s=2$}

We consider the special case when $\Delta=0$, $n$ is even, and $s=n/k=2$.
Therefore,
$k=n/2$ workers have to complete their two CUs in order for the job itself to be complete. 

Let $Y_{n:n}$ be the time to complete the job under splitting, as given by \eqref{eq:AddSum} for $\Delta=0$, and $Y_{n/2:n}$ the random time to complete the job under coding with $s=2$. Theorem~\ref{thm:sca}
below shows that ${\sf P}\{Y_{n/2:n}>x\} \le {\sf P}\{Y_{n:n}>x\}$. It follows that
\[
\expec[Y_{n/2:n}]\le \expec[Y_{n:n}]
\]
since for any non-negative random variable $X$, we have
$\expec[X]=\int_0^\infty{\sf P}(X>x)\,dx$.

It is, therefore, better to use a rate half code than splitting.

\begin{theorem}
\label{thm:sca}
Suppose that $n = 2k\geq 4$ is even. Then $Y_{n:n}$
stochastically dominates $Y_{n/2:n}$, that is,
\[
    {\sf P}\{Y_{n/2:n}>x\} \le {\sf P}\{Y_{n:n}>x\}
\]
\end{theorem}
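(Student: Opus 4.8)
The plan is to compare the two survival functions directly. Under splitting with $\Delta=0$ each of the $n$ workers runs one CU, so $Y_{n:n}$ is the maximum of $n$ i.i.d.\ $\expD(1)$ variables and $\Pr\{Y_{n:n}>x\}=1-(1-e^{-x})^n$. Under additive scaling with $s=2$ the $n$ worker completion times are i.i.d.\ $\erD(2,1)$, and the job finishes once $n/2$ of them do, so $\{Y_{n/2:n}>x\}$ is the event that fewer than $n/2$ workers finish by time $x$; writing $p=p(x)=1-(1+x)e^{-x}$ for the $\erD(2,1)$ distribution function at $x$, we get $\Pr\{Y_{n/2:n}>x\}=\Pr\{\BinomD(n,p)\le n/2-1\}$. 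Hence the theorem is equivalent to the scalar inequality $\Pr\{\BinomD(n,p(x))\ge n/2\}\ge(1-e^{-x})^n$ for all $x\ge 0$ and even $n\ge 4$.

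Two observations frame it. First, the right side is $\Pr\{\BinomD(n,q(x))=n\}$ with $q(x)=1-e^{-x}>p(x)$: splitting demands \emph{all} $n$ trials to succeed at a higher per-trial probability, coding only \emph{half}, and this imbalance is what we exploit. Second, the threshold $n\ge 4$ is forced by the $x\downarrow 0$ behaviour, where the two sides are $\sim\binom{n}{n/2}(x^2/2)^{n/2}$ and $\sim x^n$: the leading coefficients compare as $\binom{n}{n/2}2^{-n/2}$ vs.\ $1$, and $\binom{n}{n/2}\ge 2^{n/2}$ holds exactly for $n\ge 4$ (equality at $n=2$); since $\binom{n}{n/2}^{2/n}$ is increasing in even $n$, estimates of this flavour stay uniform in $n$.

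I would prove the scalar inequality in three ranges of $x$. \textbf{Small $x$} ($x\le x_0$ for a small absolute constant): bound the left side below by its central term $\binom{n}{n/2}(p(1-p))^{n/2}$, use $p(x)\ge\tfrac{x^2}{2}e^{-x}$ (equivalently $e^x\ge 1+x+\tfrac{x^2}{2}$), $1-p(x)\ge 1-p(x_0)$, and $1-e^{-x}\le x$; the inequality collapses to $\binom{n}{n/2}^{2/n}\ge 2e^{2x_0}/(1+x_0)$, true for all even $n\ge 4$ once $x_0$ is small, by the observation above. \textbf{Large $x$} ($x\ge x_1$): bound the lower binomial tail by $\Pr\{\BinomD(n,p)<n/2\}\le 2^n(1-p)^{n/2+1}=2^n(1+x)^{n/2+1}e^{-x(n/2+1)}$ and use $1-(1-e^{-x})^n\ge e^{-x}$; the claim reduces to $2^n(1+x)^{n/2+1}\le e^{xn/2}$, which (take logs, divide by $n$, use $1/n\le\tfrac14$) holds for every $n\ge 4$ and every $x\ge x_1$ with $x_1$ absolute. \textbf{Intermediate $x$} ($x_0\le x\le x_1$): here $q(x)^n\le q(x_1)^n$, so for all large $n$ a Chernoff/central-limit estimate for $\BinomD(n,p(x))$ beats $q(x)^n$ uniformly on this compact window, and the finitely many remaining small even $n$ are checked directly.

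The main obstacle is the intermediate window, and in particular making the bound uniform in $n$ there: near the $x$ with $p(x)=\tfrac12$ the binomial straddles $n/2$, the target $q(x)^n$ is only moderately small, and neither the single-term lower bound nor the crude-tail bound is sharp, so one needs a quantitative (Berry--Esseen-type) concentration estimate together with a finite check on the bounded set of pairs $(n,x)$ not yet covered. One alternative that may streamline the bookkeeping: embed both schemes in a single rate-$n$ Poisson stream of uniform coupons (by the Poisson colouring theorem the per-coupon streams are independent, which is what makes $Y_{n:n}$ a genuine maximum of i.i.d.\ exponentials), so that $Y_{n:n}$ and $Y_{n/2:n}$ are the times of the $m_1$-th and $m_2$-th arrivals, where $m_1$ is the number of draws needed to see all $n$ coupons and $m_2$ the number needed until $n/2$ coupons appear twice; these counts are independent of the arrival times and $\Pr\{\mathrm{Poisson}(nx)<m\}$ is increasing in $m$, so it suffices to prove the purely combinatorial comparison $m_2\preceq_{\mathrm{st}}m_1$, whose minimal-draw case ($m_1=m_2=n$) already reproduces $\binom{n}{n/2}\ge 2^{n/2}$.
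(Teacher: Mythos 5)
Your reduction of the theorem to the scalar inequality $\Pr\{\BinomD(n,p(x))\ge n/2\}\ge(1-e^{-x})^n$ is correct, and your small-$x$ analysis rightly identifies why $n\ge 4$ is needed (the leading-coefficient comparison $\binom{n}{n/2}\ge 2^{n/2}$ is an equality at $n=2$). However, neither route you outline is a complete proof. In the three-range plan you state explicitly that the intermediate window requires ``a quantitative (Berry--Esseen-type) concentration estimate together with a finite check on the bounded set of pairs $(n,x)$ not yet covered,'' and you do not carry that out; that is a genuine gap, and it is the hard part of the problem precisely because near $p(x)=1/2$ the binomial mass straddles $n/2$ and the crude single-term and tail bounds are both slack. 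The Poisson-embedding alternative is cleaner in spirit, but it reduces the theorem to the unproved combinatorial claim $m_2\preceq_{\mathrm{st}}m_1$. That claim is not obvious: it fails pathwise in the natural coupling (after all $n$ coupons have appeared once, fewer than $n/2$ may have appeared twice), so it needs a fresh idea you have not supplied; checking the minimal-draw case $m_1=m_2=n$ is encouraging but nowhere near the whole inequality.

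For contrast, the paper's proof is a short sample-path coupling that avoids estimation entirely. Consider the $s=2$ system run under an artificially restrictive schedule: each server is halted the moment it finishes its first CU; once $n/2$ servers have each finished one CU (at time $T_1 = X_{1:n}+X_{1:(n-1)}+\cdots+X_{1:(n/2+1)}$) halt the remaining $n/2$ servers for good, then restart the $n/2$ servers with one CU done and let them each complete their second CU (time $T_2 = X_{1:(n/2)}+\cdots+X_{1:1}$, by memorylessness). The halted schedule's completion time is $T_1+T_2 = X_{1:n}+\cdots+X_{1:1}$, which is exactly the spacing decomposition of $Y_{n:n}$; since halting servers can only delay completion, $Y_{n/2:n}$ is pathwise dominated by $T_1+T_2$, and the theorem follows immediately. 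Your Poisson-stream idea is morally similar --- both hinge on the spacing decomposition of $Y_{n:n}$ --- but the halting/restarting construction works directly on exponential order statistics and needs no estimates, whereas your reduction trades the original inequality for a nontrivial coupon-collecting comparison that remains to be verified.
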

\begin{proof}
Consider the system with $s=2$ where scheduling until job completion is done as follows. The system runs until one server completes the first of its two CUs, at which point it is halted. This happens at a random time distributed as $X_{1:n}$. The system of the remaining $n-1$ servers runs until one server completes the first of its 2 CUs, at which point it is halted. This happens at a random time distributed as $X_{1:(n-1)}$ measured from the moment the first server was halted. The process continues in the same manner until $k=n/2$ servers have completed the first of their 2 CUs, at which point all remaining servers are halted. This happens at a random time $T_1$ given as
\[
T_1=X_{1:n}+X_{1:(n-1)}+\dots+X_{1:(n-k+1)}
\]
At this point, the $n-k=k$ servers which have completed one CU are restarted. The job is complete when each server completes the remaining CU, which happens 
at a random time $T_2$ given as
\[
T_2=X_{1:(n-k)}+X_{1:(n-k-1)}+\dots+X_{1:1}
\]
Note that, because some servers are halted, this system cannot perform better than the %typo-4
original $s=2$ system. On the other hand, it performs as well as the $s=1$ system since we have $Y_{n:n}=T_1+T_2$.
\end{proof}

\paragraph{Numerical Analysis}
We evaluate the derived expression for $\expec[Y_{k:n}]$, and the results are shown in Fig.~\ref{fig:add}. 
\begin{figure}
    \centering
    \includegraphics[width=0.4\textwidth]{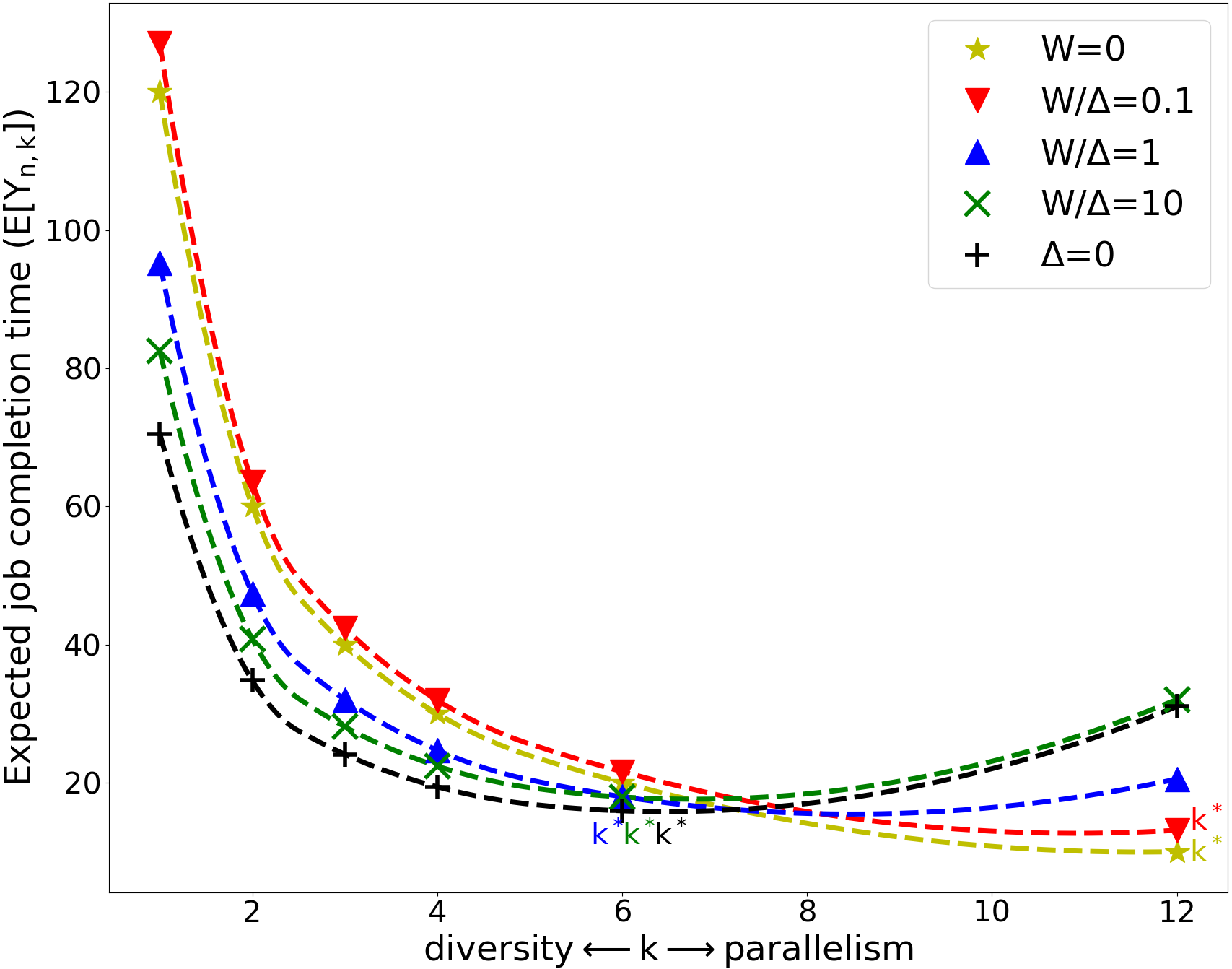}
    \caption{EExpected job completion time $\expec[Y_{k:n}]$ for (Shifted-)Exponential service time with additive scaling as a function of the diversity/parallelism parameter $k$ (cf.~\eqref{Eq:Erlang-order}). The number of workers (job size) is $n=12$, and task size per worker is $s=n/k$ (Since both $k$ and $s$ are integers, we have $k\in\{1,2,3,4,6,12\}$. We use dashed curves to connect the points corresponding to different allowed values of $k$ for a given $\Delta$ and $W$ combination.) When $W/\Delta$ is, splitting ( maximal parallelism) is the best. When $W/\Delta$ is large, there is a balance between diversity and parallelism.}
    \label{fig:add}
\end{figure}
We see that when $W/\Delta$ is small (e.g. $0$, $0.1$), splitting (maximum parallelism) gives the best performance. On the other hand, when $W/\Delta$ is large (e.g. $1$, $10$, $\infty$), we need coding in order to be optimal. The figure confirms Theorem~\ref{th:splitvsrepli} and Theorem~\ref{thm:sca} that say that splitting is better than replication and the rate half code is better than splitting when $\Delta=0$. 

Under the additive model, parallelism outperforms diversity, which was not always the case under the server-dependent and data-dependent models. Coding is optimal for some values of $\Delta$ and $W$, and the optimal code rate is around $1/2$.

\section{Pareto Service Time}
\label{sec:Pareto}
Under the Pareto model, the CU service time is given by $X$, where $X\sim \parD(\lambda, \alpha)$. The expected job completion time $\expec[Y_{k:n}]$ depends on the service time of $s=n/k$ CUs, which is determined by the service time scaling model. 
We next determine $\expec[Y_{k:n}]$ for our three scaling models.

\subsection{Server-Dependent Scaling}
\label{subsec:pareto-server}
Under the server-dependent scaling, the service time of a task consisting of $s$ CUs is given by $Y=s\cdot X$. Therefore, the job completion time is given by
\[
Y_{k:n}= s\cdot X_{k:n}, ~~\text{where}~~ X\sim \parD(\lambda, \alpha)
\]
and, by using the expression for $\expec[X_{k:n}]$ in \eqref{Eq:Pareto}, we have
\begin{equation}
\label{Eq:pareto-server}
    \expec[Y_{k:n}]= s\lambda\frac{n!}{(n-k)!}\frac{\Gamma(n-k+1-1/\alpha)}{\Gamma(n+1-1/\alpha)}
\end{equation}
The minimum expected job completion time is given by the following theorem: 
\begin{theorem}
\label{Th:Paretomethod}
The expected job completion time for $\parD(\lambda, \alpha)$ service time with server-dependent scaling reaches the minimum when $k^*$ is the ceiling or floor of $\frac{\alpha n-1}{\alpha+1}$.
\end{theorem}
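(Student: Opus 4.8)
The plan is to start from the closed form \eqref{Eq:pareto-server}, strip off the factors that do not depend on $k$, and then settle the minimization by a discrete-monotonicity (ratio-of-consecutive-terms) argument. Writing $s=n/k$ in \eqref{Eq:pareto-server} and dropping $\lambda$, $n$, $n!$ and $\Gamma(n+1-1/\alpha)$ — all constant in $k$ — the task reduces to minimizing over $k\in\{1,\dots,n\}$ the quantity
\[
g(k)\;=\;\frac{1}{k}\cdot\frac{\Gamma(n-k+1-1/\alpha)}{\Gamma(n-k+1)}.
\]
Here I use the standing assumption $\alpha>1$ (otherwise $\expec[X_{n:n}]$ is infinite and the statement is vacuous); then $1/\alpha\in(0,1)$, every Gamma argument above lies in $[1-1/\alpha,\infty)$, and $g(k)>0$ throughout.

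Next I would compute the ratio of consecutive terms. Applying $\Gamma(z+1)=z\,\Gamma(z)$ to both the numerator and the denominator gives, for $1\le k\le n-1$,
\[
\frac{g(k+1)}{g(k)}\;=\;\frac{k}{k+1}\cdot\frac{n-k}{\,n-k-1/\alpha\,},
\]
and the denominator $n-k-1/\alpha$ is strictly positive on this range because $\alpha>1$. Clearing these positive denominators, the inequality $g(k+1)<g(k)$ is equivalent to $k(n-k)<(k+1)(n-k-1/\alpha)$; expanding and cancelling the common term $k(n-k)$ leaves $(1+1/\alpha)\,k<n-1/\alpha$, i.e.\ $k<\frac{\alpha n-1}{\alpha+1}$. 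Running the same computation with the inequality reversed shows $g(k+1)>g(k)$ exactly when $k>\frac{\alpha n-1}{\alpha+1}$, with equality only in the degenerate case that $\frac{\alpha n-1}{\alpha+1}$ is an integer equal to $k$.

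Consequently $g$ — and hence $\expec[Y_{k:n}]$ — is strictly unimodal on $\{1,\dots,n\}$: strictly decreasing while $k<\frac{\alpha n-1}{\alpha+1}$ and strictly increasing once $k>\frac{\alpha n-1}{\alpha+1}$. Its integer minimizer is therefore $\big\lfloor\frac{\alpha n-1}{\alpha+1}\big\rfloor$ or $\big\lceil\frac{\alpha n-1}{\alpha+1}\big\rceil$ (these coincide unless the critical value is itself an integer, in which case both neighbouring integers are minimizers), which is precisely the assertion of the theorem.

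There is essentially no obstacle here — this is a routine calculus-of-differences argument. The only points needing care are: correctly identifying which prefactors in \eqref{Eq:pareto-server} are independent of $k$; the sign bookkeeping when clearing denominators, which is legitimate precisely because $n-k-1/\alpha>0$ for $\alpha>1$ and $k\le n-1$; and keeping the standing assumption $\alpha>1$ in view. Note also that the endpoint $k=n$ needs no separate treatment, since the ratio test above already compares $g(n)$ with $g(n-1)$. Finally, one may add the remark that if — as in the numerical examples — one additionally requires $s=n/k$ to be a positive integer, then by the established unimodality the optimal admissible $k$ is simply the divisor of $n$ lying closest to $\frac{\alpha n-1}{\alpha+1}$.
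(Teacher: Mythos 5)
Your proof is correct and follows essentially the same route as the paper's: both reduce to the ratio of consecutive values of $\expec[Y_{k:n}]$, apply $\Gamma(z+1)=z\Gamma(z)$ to obtain $\frac{k}{k+1}\cdot\frac{n-k}{n-k-1/\alpha}$, and read off the threshold $\frac{\alpha n-1}{\alpha+1}$. Your version is a touch more careful with the positivity checks ($\alpha>1$, $n-k-1/\alpha>0$) and with explicitly stating strict unimodality, but the argument is the same one the paper gives.
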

\begin{proof}
From the definition of Gamma function, we have $\Gamma(z+1)=z\Gamma(z)$, and thus
\[
\expec[Y_{k:n}]=s\lambda\prod_{i=0}^{k-1}\frac{n-i}{n-1/\alpha-i}=\frac{n\lambda}{k}\prod_{i=0}^{k-1}\frac{n-i}{n-1/\alpha-i}
\]
Therefore, 
\[
\frac{\expec[Y_{k:n}]}{\expec[Y_{k+1:n}]}=\frac{(k+1)(n-1/\alpha-k)}{k(n-k)}.
\]
From this ratio, we see that %typo-5 
when $k\le\frac{\alpha n-1}{\alpha+1}$, then $\expec[Y_{k:n}]\ge\expec[Y_{k+1:n}]$, and when $k\ge\frac{\alpha n-1}{\alpha+1}$, then $\expec[Y_{k:n}]\le\expec[Y_{k+1:n}]$.  %typo-6
Since $k$ is an integer, the minimum $\expec[Y_{k:n}]$ is reached by setting $k$ to the ceiling or floor of $\frac{\alpha n-1}{\alpha+1}$.
\end{proof}

Pareto distribution has a finite mean only when $\alpha>1$. As $\alpha$, the tail index, decreases, the right tail of Pareto becomes heavier. From Theorem~\ref{Th:Paretomethod}, we know that the optimal $k$ ($k^*$) increases with $\alpha$. When $\alpha\downarrow1$, $k^*\approx\lceil\frac{n-1}{2}\rceil$ or $\lfloor\frac{n-1}{2}\rfloor$, and $\expec[Y_{k:n}]$ is minimized by coding. When $\alpha\rightarrow\infty$,  then $k^*\approx n$, and $\expec[Y_{k:n}]$ is minimized by splitting. Recall that $\alpha\rightarrow\infty$ implies an almost deterministic distribution, where splitting is expected to be optimal.

{\it Numerical Analysis:}
We evaluate $\expec[Y_{k:n}]$ to see how the expected job completion time changes with $k$. We consider a system with $n=12$ workers for four different values of $\alpha\in\{1.5,2,3,5\}$. We assume the Pareto scale parameter is $\lambda=1$. 
\begin{figure}
    \centering
    \includegraphics[width=0.4\textwidth]{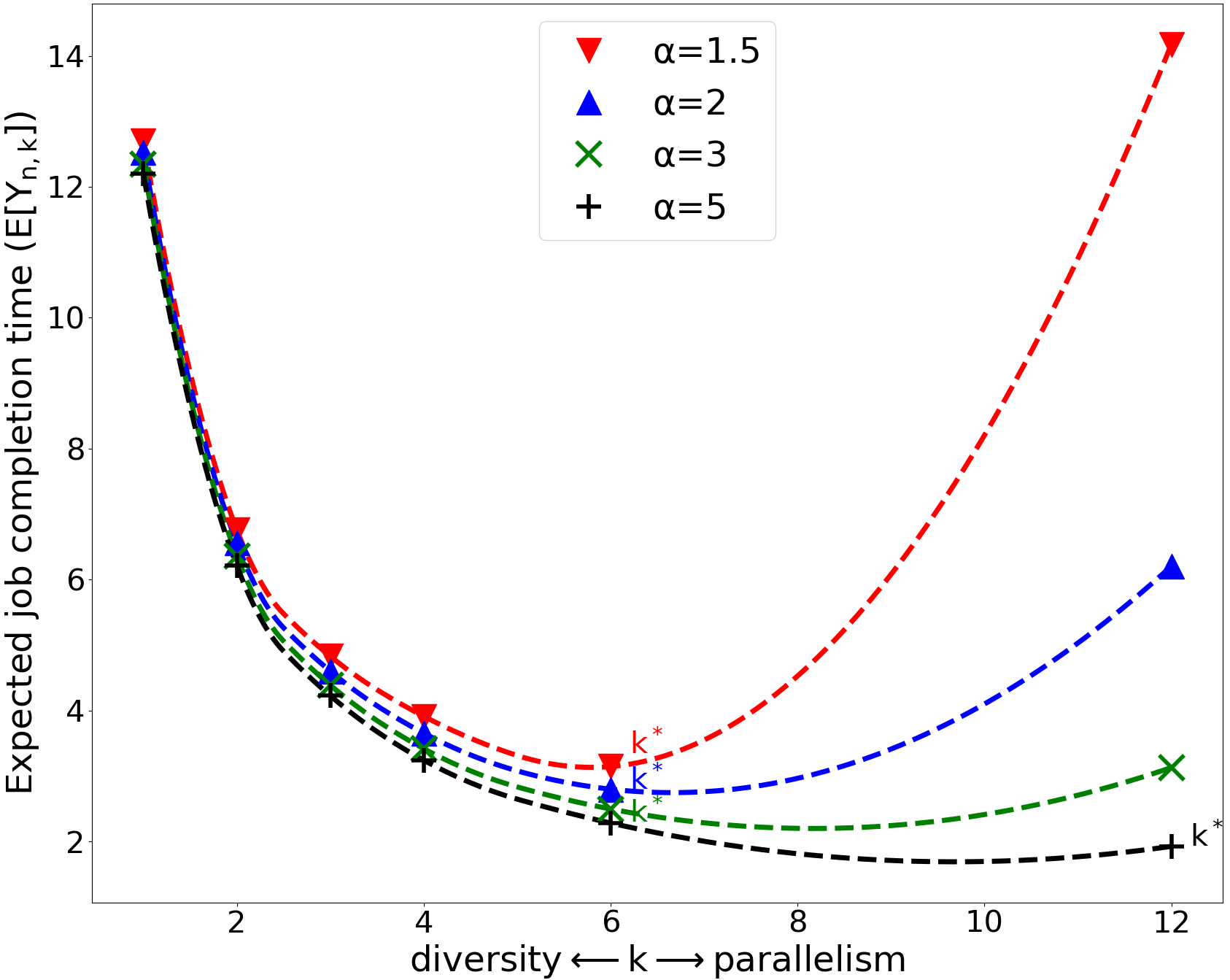}
    \caption{Expected job completion time $\expec[Y_{k:n}]$ for Pareto service time with server-dependent scaling as a function of the diversity/parallelism parameter $k$ (cf.~\eqref{Eq:pareto-server}). The number of workers (job size) is $n=12$, and task size per worker is $s=n/k$ (Since both $k$ and $s$ are integers, we have $k\in\{1,2,3,4,6,12\}$. We use dashed curves to connect the points corresponding to different allowed values of $k$ for a given $\alpha$.) The Pareto scale parameter is $\lambda=1$. Splitting or rate 1/2 coding is optimal according to different $\alpha$.}
    \label{fig:scale_Pareto}
\end{figure}
The results are plotted in Fig.~\ref{fig:scale_Pareto}. When the tail is heavy ($\alpha$=$1.5$), then $\expec[Y_{k:n}]$ reaches its minimum at $k=6$, and coding with the rate $1/2$ is optimal. Both replication and splitting have poor performance in this case. When the tail is light ($\alpha$=$5$), then $\expec[Y_{k:n}]$ is minimized by splitting. Replication still performs poorly. Otherwise ($\alpha$=$2$, $3$), coding with the rate $1/2$ is optimal. Splitting performs better than replication. From Theorem~\ref{Th:Paretomethod}, we calculate the optimal $k^*=6.8$, $7.7$, $8.8$, and $9.8$ respectively for the four scenarios. Since $k\in\{1,2,3,4,6,12\}$, $k^*$ is either $6$ or $12$. The theoretically optimal  $k^*$'s are consistent with the results in Fig.~\ref{fig:scale_Pareto}.

\subsection{Data-Dependent Scaling}
\label{Sec:data-pareto}

Under the data-dependent scaling, the service time of a task consisting of $s$ CUs is given by $Y=s\cdot\Delta+X$. Therefore, the job completion time is given by
\[
Y_{k:n}= s\cdot\Delta + X_{k:n}, ~~\text{where}~~ X\sim \parD(\lambda, \alpha)
\]
and, by using the expression for $\expec[X_{k:n}]$ in  \eqref{Eq:Pareto}, we have
\begin{equation}
\label{Eq:pareto-data}
    \expec[Y_{k:n}]= s\Delta + \lambda\frac{n!}{(n-k)!}\frac{\Gamma(n-k+1-1/\alpha)}{\Gamma(n+1-1/\alpha)}
\end{equation}
We cannot easily derive the $k^*$ that minimizes  $\expec[Y_{k:n}]$ from the above equation. However, according to the approximation for the ratio of two gamma functions given in \eqref{Ap:Pareto}, we have \[
\expec[Y_{k:n}]\approx \frac{n\Delta}{k}+\lambda\Bigl(\frac{n}{n-k}\Bigl)^{1/\alpha}.\]
Notice that the first term %typo-7
decreases with $k$, calling for maximal parallelism, whereas the second term increases with $k$, calling for maximal diversity. 

Since the service time $Y=s\cdot \Delta+X$, we can understand $Y$ as a shifted Pareto RV. Thus, $Y$ can be approximated to a constant by Pareto RV based on whether the shift $s\Delta$ is far larger or smaller than the mean of Pareto $\frac{\alpha\lambda}{\alpha-1}$. As for the Shifted-Exponential distribution with data-dependent scaling, we  conclude the following. When $\Delta\gg \frac{\alpha\lambda}{\alpha-1}$, i.e., $\expec[Y_{k:n}]\approx \frac{n\Delta}{k}$, it is optimal to operate in the maximal parallelism mode. When $\Delta \ll \frac{\alpha\lambda}{\alpha-1}$, i.e., $\expec[Y_{k:n}]\approx \lambda(\frac{n}{n-k})^{1/\alpha}$, it is optimal to operate in the maximal diversity mode. %reviewer1-6

\paragraph*{Remark}
We have concluded here for the Pareto service time and in the previous section for the exponential service time that
replication is optimal when $\Delta\ll \expec[X]$, splitting is optimal when $\Delta\gg \expec[X]$, and coding is optimal otherwise. That conclusion applies to any service time PDF which has the first moment and the $\Delta$ and $X$ components as in the cases considered here.
The optimal code rate depends on how  $\Delta$ compares to $\expec[X]$.

\paragraph*{Numerical Analysis}
We evaluate $\expec[Y_{k:n}]$ to see how the expected job completion time changes with the values of $k$. We consider a system with $n=12$ workers for four different values of $\alpha\in\{1.5,2,3,5\}$. 
\begin{figure}
    \centering
    \includegraphics[width=0.4\textwidth]{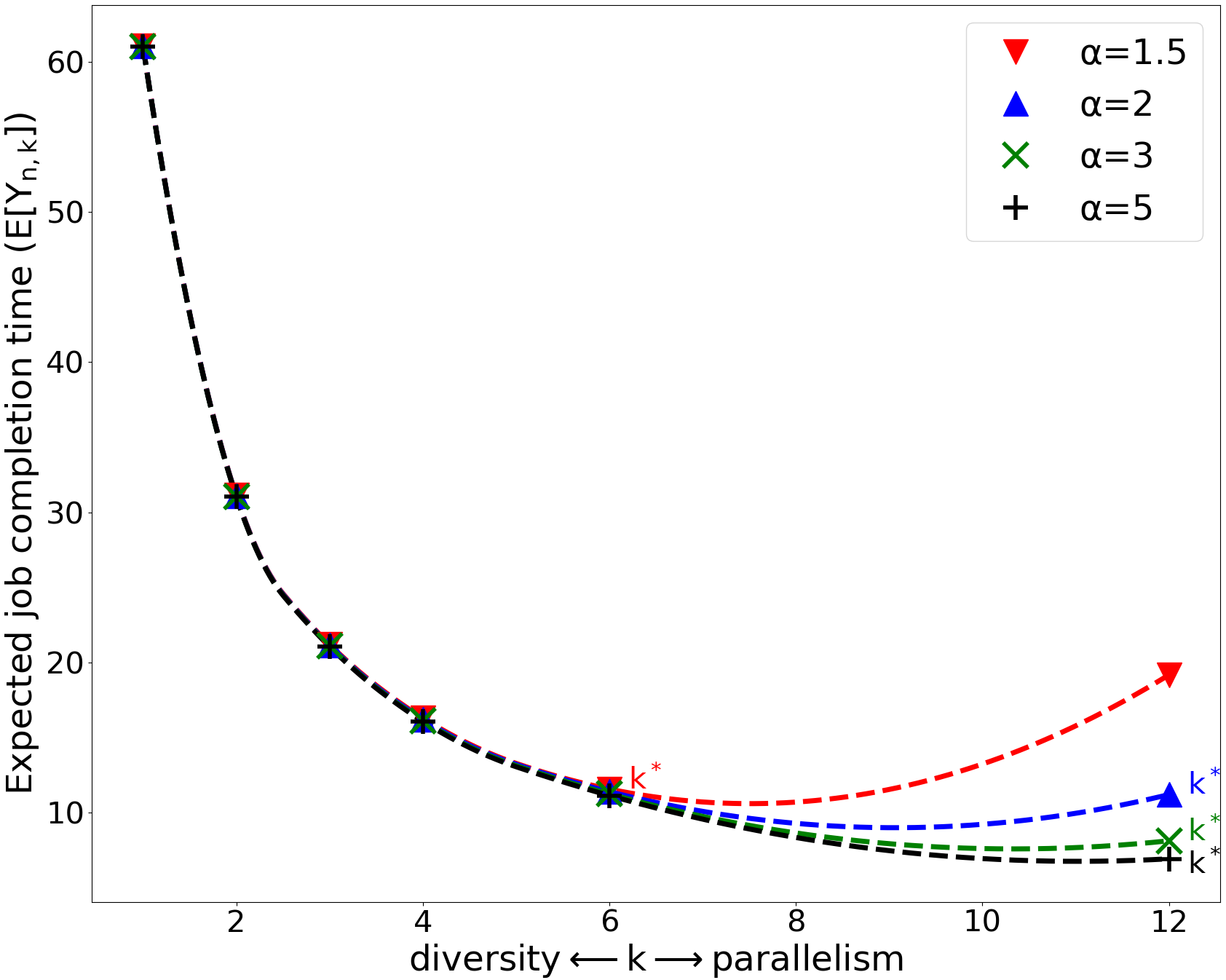}
    \caption{Expected job completion time $\expec[Y_{k:n}]$ for Pareto service time with data-dependent scaling as a function of the diversity/parallelism parameter $k$ (cf.~\eqref{Eq:pareto-data}). The number of workers (job size) is $n=12$, the shift parameter is $\Delta=5$ and task size per worker is $s=n/k$ (Since both $k$ and $s$ are integers, we have $k\in\{1,2,3,4,6,12\}$. We use dashed curves to connect the points corresponding to different allowed values of $k$ for a given $\alpha$.) The Pareto scale parameter is $\lambda=1$. Splitting or coding is optimal depending on the value of  $\alpha$.}
    \label{fig:shift_Pareto}
\end{figure}
The results are plotted in Fig.~\ref{fig:shift_Pareto}. We conclude that  splitting is optimal when the Pareto tail is light ($\alpha$ is large) and coding becomes optimal when the tail gets heavier ($\alpha$ is small).

\begin{figure}
    \centering
    \includegraphics[width=0.4\textwidth]{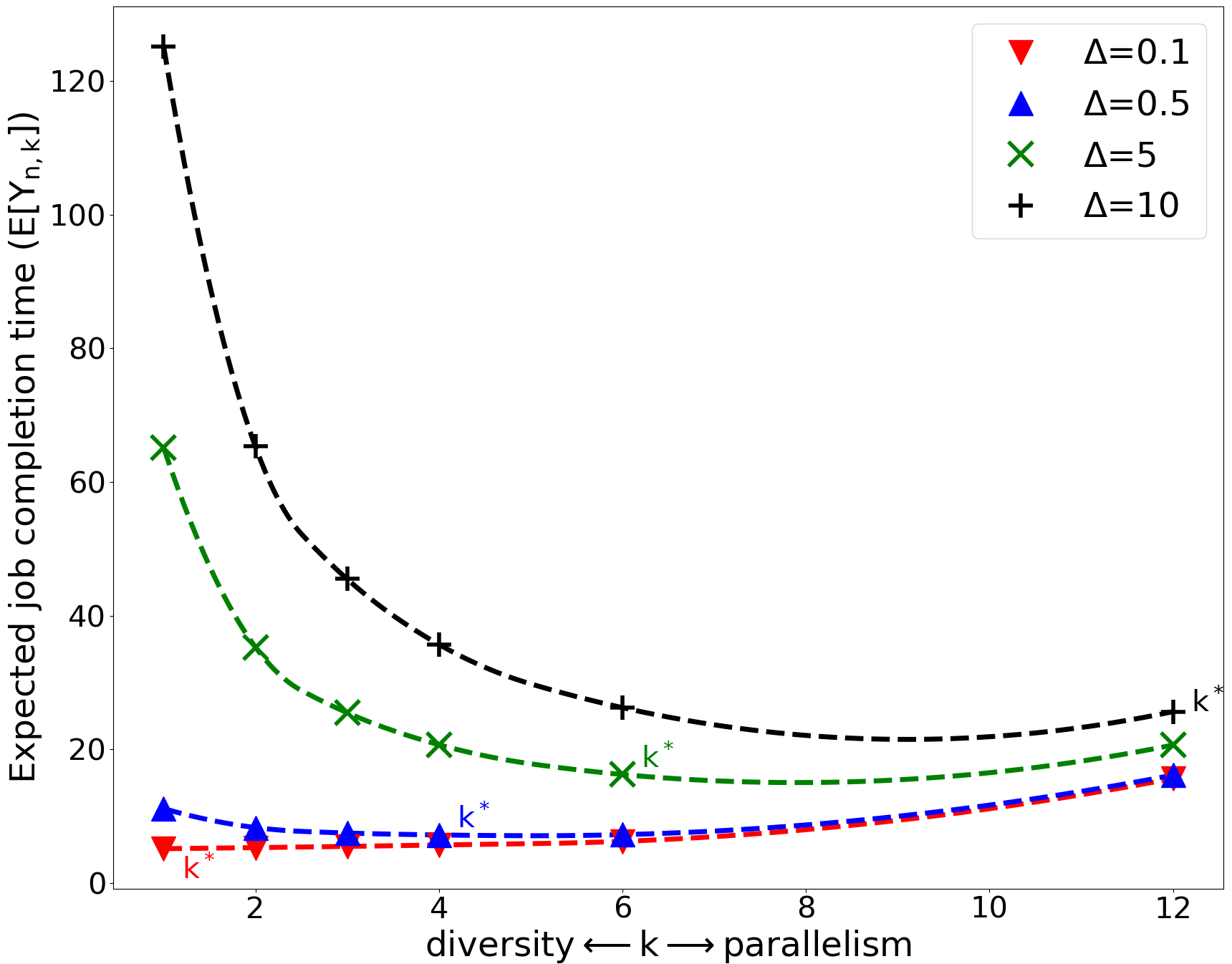}
    \caption{Expected job completion time $\expec[Y_{k:n}]$ for Pareto service time with data-dependent scaling as a function of the diversity/parallelism parameter $k$ (cf.~\eqref{Eq:pareto-data}). The number of workers (job size) is $n=12$, and task size per worker is $s=n/k$ (Since both $k$ and $s$ are integers, we have $k\in\{1,2,3,4,6,12\}$. We use dashed curves to connect the points corresponding to different allowed values of $k$ for a given $\Delta$.) The Pareto scale parameter is $\lambda=5$ and shape parameter is $\alpha=3$. The optimal code rate increases with $\Delta$.}
    \label{fig:shift_Pareto_delta}
\end{figure}
In Fig.~\ref{fig:shift_Pareto_delta}, we consider the same system for four different values of $\Delta\in\{0.1,0.5,5,10\}$. It is easy to calculate the Pareto mean $\frac{\alpha\lambda}{\alpha-1}=7.5$. When $\Delta \ll 7.5$ (e.g. $0.1$, $0.5$), replication or low-rate coding is optimal. When $\Delta$ approaches $7.5$ (e.g. $5$, $10$), splitting or high-rate coding is optimal.%typo-8
These observations validate the above analysis that the optimal strategy changes with the ratio of $\Delta$ and the Pareto mean. 

\subsection{Additive Scaling}
\label{Sec:add-pareto}
Under the additive scaling, the service time of a task consisting of $s$ CUs is given by 
\[
Y=X_1+\dots +X_{s} ~~\text{where} ~~ X_i\sim \parD(\lambda, \alpha)
\]
and the job completion time is $Y_{k:n}$.

For splitting ($s=1$),  $Y$ is a Pareto RV. Thus, the job completion time $\expec[Y_{n:n}]=\expec[X_{n:n}]=\lambda n!\frac{\Gamma(1-1/\alpha)}{\Gamma(n+1-1/\alpha)}$
\cite{Pareto:Arnold15}. %\typo-10
For coding and replication, the corresponding expressions are difficult to derive. Nevertheless, we can compare splitting and replication when $n$ is large by using the Law of Large Numbers (LLN). We conclude in Theorem~\ref{LLN:Pareto} that splitting outperforms replication when $n$ is sufficiently large.
\begin{theorem}
\label{LLN:Pareto}
Under the $\parD(\lambda, \alpha)$ service time with additive scaling, if the $4$-th moment exists, i.e., if $\alpha>4$, when $n$ is sufficiently large, we have $\expec[Y_{1:n}]>\expec[Y_{n:n}]$, which means splitting outperforms replication by achieving a lower expected job completion time .
\end{theorem}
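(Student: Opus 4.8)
The plan is to play the two scaling extremes against one another. Under splitting, $Y_{n:n}$ is the maximum of $n$ i.i.d.\ $\parD(\lambda,\alpha)$ variables, whose mean grows only like a sub-linear power of $n$; under replication, $Y_{1:n}$ is the minimum of $n$ i.i.d.\ \emph{sums} of $n$ Pareto variables, and by the law of large numbers each such sum — hence their minimum — sits at $(1-o(1))n\mu$ with $\mu=\expec[X]$, i.e.\ grows linearly in $n$. Linear beats sub-linear, so replication loses once $n$ is large.

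First I would pin down the splitting side. As recorded above, $\expec[Y_{n:n}]=\lambda\,n!\,\Gamma(1-1/\alpha)/\Gamma(n+1-1/\alpha)$; since $n!=\Gamma(n+1)$, the standard ratio-of-Gamma-functions asymptotic (cf.~\eqref{Ap:Pareto}) gives $\Gamma(n+1)/\Gamma(n+1-1/\alpha)\sim n^{1/\alpha}$, so $\expec[Y_{n:n}]\sim \lambda\,\Gamma(1-1/\alpha)\,n^{1/\alpha}$. Because $\alpha>1$, this is $\mathcal{O}(n^{1/\alpha})=o(n)$.

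Next I would lower-bound the replication side. Write $Y_{1:n}=\min_{1\le j\le n}Y^{(j)}$ with $Y^{(j)}=\sum_{i=1}^{n}X_i^{(j)}$, all $X_i^{(j)}$ mutually independent $\parD(\lambda,\alpha)$; set $\mu=\expec[X]=\alpha\lambda/(\alpha-1)$ (finite since $\alpha>1$) and $\bar X=X-\mu$. This is where the hypothesis enters: $\alpha>4$ is exactly the condition for $\expec[\bar X^4]<\infty$. Expanding $\expec\bigl[(\sum_{i=1}^n\bar X_i)^4\bigr]$ and dropping every monomial that contains a variable to an odd power (these vanish by independence and $\expec[\bar X]=0$) leaves $n\,\expec[\bar X^4]+3n(n-1)\bigl(\expec[\bar X^2]\bigr)^2=\mathcal{O}(n^2)$, with the implied constant depending only on $\lambda,\alpha$. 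An $L^4$ union bound then yields
\[
\expec\Bigl[\max_{1\le j\le n}\bigl|Y^{(j)}-n\mu\bigr|\Bigr]
\le\Bigl(\sum_{j=1}^n\expec\bigl[(Y^{(j)}-n\mu)^4\bigr]\Bigr)^{1/4}
=\bigl(n\cdot\mathcal{O}(n^2)\bigr)^{1/4}=\mathcal{O}(n^{3/4}),
\]
and since $Y_{1:n}\ge n\mu-\max_{j}\bigl|Y^{(j)}-n\mu\bigr|$, this gives $\expec[Y_{1:n}]\ge n\mu-\mathcal{O}(n^{3/4})=\Theta(n)$.

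Combining the two estimates, $\expec[Y_{1:n}]=\Theta(n)$ while $\expec[Y_{n:n}]=\mathcal{O}(n^{1/\alpha})$ with $1/\alpha<1$, so $\expec[Y_{1:n}]>\expec[Y_{n:n}]$ for all sufficiently large $n$. I expect the only genuine obstacle to be the fluctuation estimate in the third step, and that is precisely why the hypothesis is a \emph{fourth} moment rather than just a first: a second-moment (Chebyshev) argument only bounds $\expec[\max_j|Y^{(j)}-n\mu|]$ by $\mathcal{O}(n)$, which is not $o(n)$ and cannot exclude the minimum over $n$ tasks from dropping to a sub-linear order; the fourth moment pushes this fluctuation down to $\mathcal{O}(n^{3/4})$, comfortably $o(n)$. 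The remainder is the Gamma-ratio asymptotic and bookkeeping of constants that depend on $(\lambda,\alpha)$ only.
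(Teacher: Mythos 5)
Your proof is correct and shares the paper's central quantitative ingredient --- the fourth-moment estimate $\expec\bigl[(\sum_{i=1}^n\bar X_i)^4\bigr]=\mathcal{O}(n^2)$ --- but converts it into the replication lower bound by a genuinely different device. The paper feeds that bound into Markov's inequality to obtain a pointwise probability $p_n=\mathcal{O}(n^{-2})$ that one replica deviates from $n m_{\lambda,\alpha}$ by more than $n\eta$, takes the $n$-fold product $(1-p_n)^n\to 1$ to conclude that $Y_{1:n}>n(m_{\lambda,\alpha}-\eta)$ with probability approaching $1$, and then passes to expectations via nonnegativity. You instead use an $L^4$ maximal inequality, $\expec[\max_j|Y^{(j)}-n\mu|]\le\bigl(\sum_j\expec[(Y^{(j)}-n\mu)^4]\bigr)^{1/4}=\mathcal{O}(n^{3/4})$, which operates directly at the level of expectations, gives an explicit $\mathcal{O}(n^{3/4})$ error term, and dispenses with the probability-to-expectation translation; it also makes transparent why a fourth moment (and not a second) is the right hypothesis, since the $L^2$ version of the same bound would only give $\mathcal{O}(n)$. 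On the splitting side you are likewise more direct: rather than the paper's second Markov/union-bound argument showing $\Pr\{Y_{n:n}<n(m_{\lambda,\alpha}-\eta)\}\to 1$ (whose passage to $\expec[Y_{n:n}]<n(m_{\lambda,\alpha}-\eta)$ implicitly needs upper-tail control), you simply read off $\expec[Y_{n:n}]\sim\lambda\,\Gamma(1-1/\alpha)\,n^{1/\alpha}=o(n)$ from the exact order-statistic formula and the Gamma-ratio asymptotic already recorded in the paper. Both arguments then close by observing that linear beats sublinear; yours achieves this with a tighter and somewhat cleaner chain of estimates.
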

\begin{proof}
We prove that $\expec[Y_{1:n}]>\expec[Y_{n:n}]$ for $n\rightarrow \infty$, by showing that there is a function $f(n)$ such that
\begin{equation}
\expec[Y_{n:n}]\le f(n) < \expec[Y_{1:n}] %typo-9
    \label{eq:fchain}
\end{equation}

We first show that there is a function $f(n)$ which satisfies $\expec[Y_{1:n}]>f(n)$ when $n\rightarrow \infty$. For the $\parD(\lambda, \alpha)$ distribution, the mean $m_{\lambda,\alpha}=\lambda\alpha/(\alpha-1)$ exist if $\alpha>1$. Therefore, by the LLN, we have $\frac{1}{n}\sum_{j=1}^{n}X_j\rightarrow m_{\lambda,\alpha} \ as \ n\rightarrow  \infty $. 

If $\alpha>4$, the $4$-th moment of Pareto distribution exists. Let $\expec[X^4]=\xi<\infty$ and $Z=X-m_{\lambda,\alpha}$. By applying Jensen's inequality and removing negative terms, it follows that $\expec[Z^4]=\expec[X^4 - 4X^3 m_{\lambda,\alpha} + 6 X^2 m_{\lambda,\alpha}^2 -4m_{\lambda,\alpha}^3 X] +m_{\lambda,\alpha}^4=\expec[X^4 + 6 X^2 m_{\lambda,\alpha}^2 ] - 4\expec[X^3] m_{\lambda,\alpha} -3 m_{\lambda,\alpha}^4\le 7\xi$. Next, define an RV $S$ to be $S=\sum_{i=1}^{n} Z_i$. Since $Z_i$ are i.i.d. (for $i=1,2,\dots,n$) and $\expec[Z_i]=0$, by expanding $S^4$, we know the terms which have 4 or 3 different indices have 0 expectation. For example, $\expec[Z_1^2Z_2Z_3]=\expec[Z_1^2]\expec[Z_2]\expec[Z_3]=0$. The terms which have 2 different indices have the expectation $\expec[Z_i^2Z_j^2]\le \xi$ by Jensen's inequality. And the coefficient is $3n^2-3n$. Similarly, the terms which have 1 index have the expectation $\expec[Z_i^4]\le \xi$ and the coefficient $n$. Thus we have $\expec[S^4]\le3\cdot 7\xi\cdot n^2$. %typo-12
Furthermore, by Markov's inequality, we obtain
\begin{equation}
 \label{eq:p1}
    p_n={\Pr}\{|S|/n\ge\eta\}\le \frac{\expec[S^4]}{n^4\eta^4}\le \frac{21\xi}{n^2\eta^4},
\end{equation}
where $|S|$ is the absolute value of $S$, and $\eta$ is a small positive number.
Note that $S+nm_{\lambda,\alpha}=\sum_{i=1}^{n} X_i$ is the time a worker takes to complete his task  under the replication strategy. Let %typo-11
$S_{i}+nm_{\lambda,\alpha}$ be the time that the $i$-th worker takes to complete his task. Then, the job completion time is $Y_{1:n}=nm_{\lambda,\alpha}+\min_i S_i$. Since $S_{1},\dots, S_{n}$ are independent, we have that $
    {\Pr}\{Y_{1:n}> n(m_{\lambda,\alpha}-\eta)\}={\Pr}\{Y_{1:n}-nm_{\lambda,\alpha}> -n\eta)\}=({\Pr}\{S_{1}>-n\eta\})^n
    > (1-{\Pr}\{|S|/n\ge \eta\})^n $.

By using the bound in \eqref{eq:p1}, it follows that 
\begin{equation}
\label{eq:p3}
    {\Pr}\{Y_{1:n}> n(m_{\lambda,\alpha}-\eta)\}>(1-p_n)^n\ge \Bigl(1-\frac{21\xi}{n^2\eta^4}\Bigr)^n
\end{equation}
When $n\rightarrow \infty$, the rightmost term above tends to $1$. Therefore, when $n$ is sufficiently large, we have that ${\Pr}\{Y_{1:n}> n(m_{\lambda,\alpha}-\eta)\})\rightarrow 1$, and thus 
 $\expec[Y_{1:n}]> n(m_{\lambda,\alpha}-\eta)$. Let $f(n)=n(m_{\lambda,\alpha}-\eta)$.
 
We next prove that $f(n)=n(m_{\lambda,\alpha}-\eta)$ satisfies $f(n)>\expec[Y_{n:n}]$ for $n\rightarrow \infty$.
The job completion time for splitting is 
$Y_{n:n}=\max_{1\le i\le n} X_i$.
Since $\alpha>4$, the second moment of Pareto distribution exists. Let $\expec[X^2]=\zeta<\infty$. Then, by Markov's inequality, we have
\begin{equation}
 \label{eq:p4}
    q_n={\Pr}\{X\ge n(m_{\lambda,\alpha}-\eta)\}\le \frac{\zeta}{n^2(m_{\lambda,\alpha}-\eta)^2}
\end{equation}
Since $X_{1},\dots, X_{n}$ are independent, we have that $
    {\Pr}\{Y_{n:n}< n(m_{\lambda,\alpha}-\eta)\}=({\Pr}\{X_{1}<n(m_{\lambda,\alpha}-\eta)\})^n= (1-{\Pr}\{X\ge n(m_{\lambda,\alpha}-\eta)\})^n$.
By using the bound in \eqref{eq:p4}, it follows that $
    {\Pr}\{Y_{n:n}< n(m_{\lambda,\alpha}-\eta)\}=(1-q_n)^n\ge \Bigl(1-\frac{\zeta}{n^2(m_{\lambda,\alpha}-\eta)^2}\Bigr)^n$
    
When $n\rightarrow \infty$, the rightmost term above tends to $1$. Therefore, when $n$ is sufficiently large, we have ${\Pr}\{Y_{n:n}< n(m_{\lambda,\alpha}-\eta)\}\rightarrow 1$, and thus $\expec[Y_{n:n}]< n(m_{\lambda,\alpha}-\eta)=f(n)$. 

We have shown that when $n$ is sufficiently large, the inequalities \eqref{eq:fchain} hold, which proves the theorem. 
\end{proof}

From the proof of Theorem~\ref{LLN:Pareto}, we get the lower bound on the expected time for replication: $\expec[Y_{1:n}]\ge n(m_{\lambda,\alpha}-\eta)r_n$. Here $\eta$ is a small and positive and $r_n=(1-\frac{21\xi}{n^2\eta^4})^n$ by \eqref{eq:p3}.

Observe that having Pareto service time plays no special role, and the arguments we used apply to any PDF which has the $4$-th moment, as we formally express in Corollary~\ref{Le:general}.
\begin{corollary}
\label{Le:general}
For a general service time distribution with the fourth moment, splitting results in a smaller expected job completion time than replication under additive scaling when the number of workers is sufficiently large. 
\end{corollary}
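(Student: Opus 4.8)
The plan is to recognize that the proof of Theorem~\ref{LLN:Pareto} uses nothing about the Pareto law beyond two facts: that the service-time mean $m=\expec[X]$ is finite and strictly positive, and that the fourth moment $\expec[X^4]=\xi<\infty$ (which in particular makes the second moment $\expec[X^2]=\zeta$ finite). So I would re-run that argument essentially verbatim for a general nonnegative service-time RV $X$ with $\expec[X^4]<\infty$, keeping the separating function $f(n)=n(m-\eta)$ for a fixed $\eta\in(0,m)$ and establishing $\expec[Y_{n:n}]\le f(n)<\expec[Y_{1:n}]$ for all sufficiently large $n$.

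For the replication side, put $Z=X-m$; expanding $(X-m)^4$ and bounding the lower-order moments by H\"older gives $\expec[Z^4]\le c\,\xi$ for an absolute constant $c$, and then the $S^4$-expansion used in the proof of Theorem~\ref{LLN:Pareto} --- monomials containing a lone factor $Z_i$ vanish because $\expec[Z_i]=0$, leaving only $\expec[Z_i^4]$- and $\expec[Z_i^2Z_j^2]$-type terms with $\mathcal{O}(n^2)$ total weight --- yields $\expec[S^4]\le c'\xi\,n^2$ for $S=\sum_{i=1}^n Z_i$, whence $p_n:=\Pr\{|S|/n\ge\eta\}\le c'\xi/(n^2\eta^4)$. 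Since a worker finishes the replicated $n$-CU job at time $nm+S_i$ with $S_1,\dots,S_n$ i.i.d.\ copies of $S$, we get $\Pr\{Y_{1:n}>n(m-\eta)\}=(\Pr\{S_1>-n\eta\})^n\ge(1-p_n)^n\to1$ (as $np_n\to0$), so $\expec[Y_{1:n}]\ge n(m-\eta)(1-p_n)^n=\Theta(n)$. For the splitting side, $Y_{n:n}=\max_{1\le i\le n}X_i\ge0$, so Jensen gives $\expec[Y_{n:n}]\le\bigl(\expec[(\max_i X_i)^4]\bigr)^{1/4}\le(n\xi)^{1/4}=\mathcal{O}(n^{1/4})$ (equivalently, one may use the second-moment Markov bound exactly as in Theorem~\ref{LLN:Pareto}). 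Hence $\expec[Y_{1:n}]-\expec[Y_{n:n}]\ge\Theta(n)-\mathcal{O}(n^{1/4})\to\infty$, which gives $\expec[Y_{1:n}]>\expec[Y_{n:n}]$ for large $n$ and proves the corollary.

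I do not expect a real obstacle: the corollary is simply the abstraction of Theorem~\ref{LLN:Pareto}, and the only computation worth re-checking in the general setting is that a finite fourth moment of the general PDF still yields $\expec[Z^4]<\infty$ and $\expec[S^4]=\mathcal{O}(n^2)$ with constants independent of $n$. The remaining mild points are that the mean is strictly positive --- automatic for any non-degenerate nonnegative service time, the deterministic case being trivial since then splitting completes in one CU-time while replication needs $n$ of them --- and that nonnegativity of $X$ is what lets Jensen be applied to $\max_i X_i$ directly.
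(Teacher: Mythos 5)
Your proposal is correct and follows exactly the route the paper intends: the paper introduces Corollary~\ref{Le:general} with the remark that the proof of Theorem~\ref{LLN:Pareto} uses nothing Pareto-specific beyond finiteness of the first and fourth moments, so the corollary is established by replaying that argument verbatim with $m=\expec[X]$, $\xi=\expec[X^4]$ in place of the Pareto constants. Your Jensen bound $\expec[Y_{n:n}]\le\bigl(\expec[\max_i X_i^4]\bigr)^{1/4}\le(n\xi)^{1/4}$ on the splitting side is a slightly cleaner way to certify $\expec[Y_{n:n}]=o(n)$ than the paper's high-probability Markov estimate (which, strictly speaking, bounds a probability rather than the expectation and needs a small extra tail-integral step), but this is a local polish within the same overall argument rather than a different route.
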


\paragraph*{Simulation Analysis}
Although the expression of $\expec[Y_{k:n}]$ is unknown, we can analyze $\expec[Y_{k:n}]$ vs.\ $k$ by simulation. For each worker, we sum $s$ Pareto RVs samples. We then compare the $n$ workers' service times to get $Y_{k:n}$. We estimate $\expec[Y_{k:n}]$ by calculating the average of $10000$ values of $Y_{k:n}$. We consider a system with $n=12$ workers and four different values of the tail index $\alpha\in\{1.3,2,3,5\}$.  
The simulation results are plotted in Fig.~\ref{fig:add_Pareto}.
\begin{figure}
    \centering
    \includegraphics[width=0.4\textwidth]{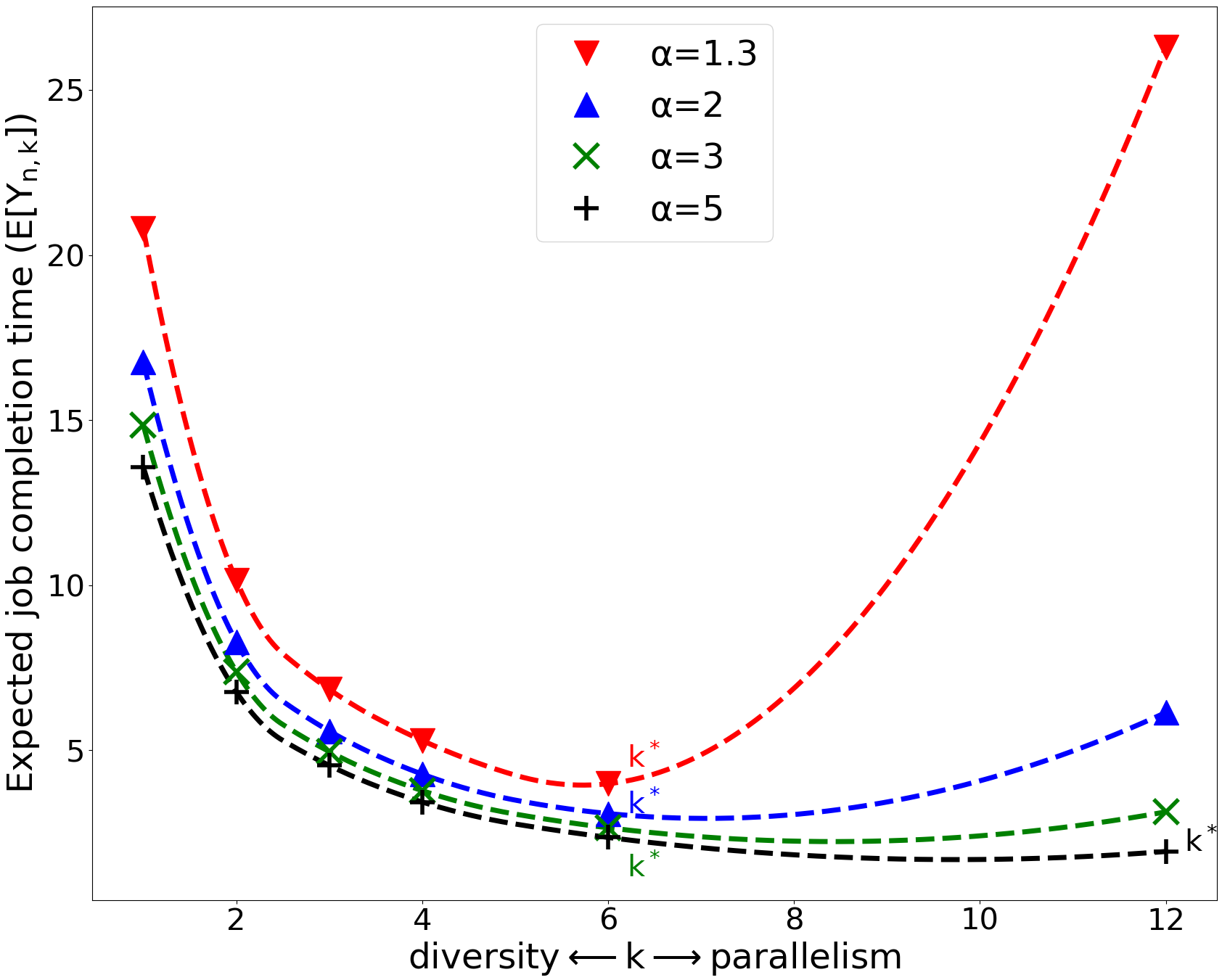}
    \caption{Expected job completion time $\expec[Y_{k:n}]$ for Pareto service time with additive scaling as a function of the diversity/parallelism parameter $k$. The number of workers (job size) is $n=12$, and task size per worker is $s=n/k$ (Since both $k$ and $s$ are integers, we have $k\in\{1,2,3,4,6,12\}$. Results are obtained by simulation. We use dashed curves to connect the points corresponding to different allowed values of $k$ for a given $\alpha$.) The Pareto scale parameter is $\lambda=1$. Splitting or coding is optimal depending on the value of $\alpha$.}
    \label{fig:add_Pareto}
\end{figure}
 We observe that splitting is optimal for a light tail (large $\alpha$), and coding is optimal for a heavy tail (small $\alpha$). When coding is optimal, the optimal code rate is close to $1/2$. (Recall that we do not consider all fractions as possible code rates.)
 %, e.g., code rates $5/12$ and $2/3$.
 %, due to two assumptions that $s$ is an integer and the number of CUs in each job is equal to the number of workers. %reviewer1-7

\begin{figure}
    \centering
    \includegraphics[width=0.4\textwidth]{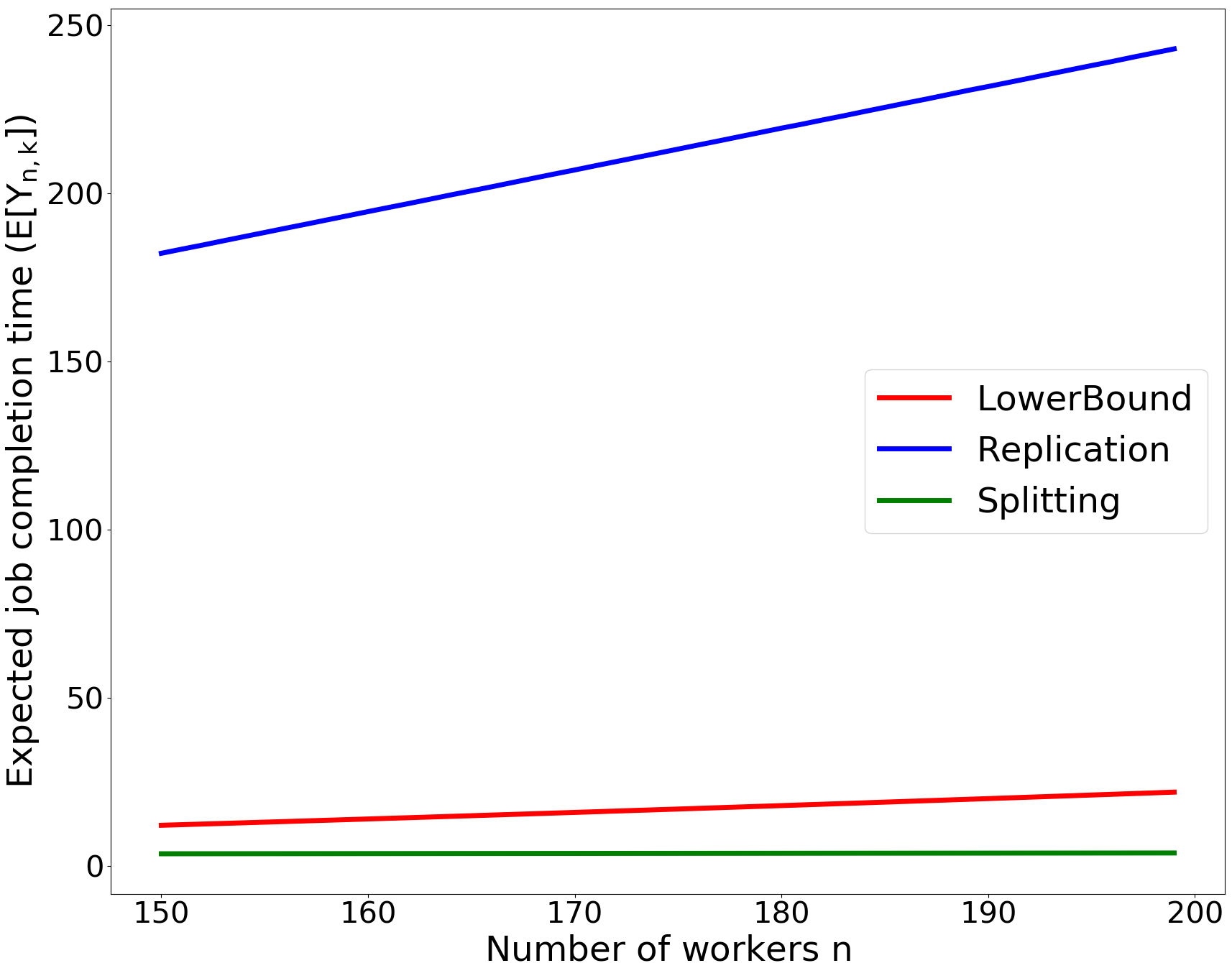}
    \caption{Expected job completion time $\expec[Y_{k:n}]$ for Pareto service time with additive scaling as a function of the number of workers (job size) $n$.  The Pareto scale parameter is $\lambda=1$ and the shape parameter is $\alpha=4.5$. We set $\eta=1$.  Replication results are obtained by simulation. For Splitting and Lower Bound, we use \eqref{Eq:Pareto} and \eqref{eq:p3}, respectively. The lower bound on replication shows that splitting outperforms replication. }
    \label{fig:add_Pareto_LLN}
\end{figure}
In Fig.~\ref{fig:add_Pareto_LLN}, we compare replication and its lower bound with splitting. The result for replication is from simulation, and the results for splitting and the lower bound are from their expressions. We observe that $\expec[Y_{1:n}]$ with replication is much larger than its lower bound. Nevertheless, the lower bound is clearly larger than $\expec[Y_{n:n}]$ with splitting. Thus we conclude that splitting outperforms replication by achieving a lower expected job completion time for the large $n$ scenario. 

%%\newpage
\section{Bi-Modal Service Time}
 \label{sec:Bi-Modal}
% In this section, we consider the system with Bi-Modal service time, and we analyze three task execution time models. 
 Under the Bi-Modal model, the CU service time is given by $X$, where $X\sim \BMD(B, \epsilon)$ as given in \eqref{eqn:toy_X_dist}. The expected job completion time $\expec[Y_{k:n}]$ depends on the service time of $s=n/k$ CUs, which is determined by service time scaling model. 
In the following three subsections, we determine $\expec[Y_{k:n}]$ for our three scaling models.

\subsection{Server-Dependent Scaling}
Under the server-dependent scaling, the service time of a task consisting of $s$ CUs is given by $Y=s\cdot X$. Therefore, the job completion time is given by
\[
Y_{k:n}= s\cdot X_{k:n}, ~~\text{where}~~X \sim \BMD(B, \epsilon)
\]
It is easy to see that $X_{k:n}$ is a Bi-Modal random variable
\begin{align*}
X_{k:n} = \begin{cases}
B ~~&\text{w.p.} ~~ \textstyle{\sum_{i=0}^{k-1}\binom{n}{i}(1-\epsilon)^i\epsilon^{n-i}}\\
1 ~~&\text{w.p.} ~~ 1 - \textstyle{\sum_{i=0}^{k-1}\binom{n}{i}(1-\epsilon)^i\epsilon^{n-i}}
\end{cases} 
\end{align*}
and that its expectation is given by
\begin{equation}
\label{Eq:bimodal_server}
    \expec[Y_{k:n}]= s+s(B-1)\sum^{k-1}_{i=0} \binom{n}{i} \epsilon^{n-i}(1-\epsilon)^{i}.
\end{equation}

From the definition of the $\BMD(B, \epsilon)$ distribution 
given in \eqref{eqn:toy_X_dist}, we see that when the probability of  straggling $\epsilon$ is small ($\epsilon\rightarrow 0$), then $X$ is highly concentrated around $1$, and when $\epsilon$ is large ($\epsilon\rightarrow 1$), then $X$ is highly concentrated around $B$.  Therefore, in these two extreme cases, we have little variation in $X$, and thus diversity at the expense of parallelism does not help. Therefore, $\expec[Y_{k:n}]$ reaches its minimum at $k=n$, which means that splitting is optimal.

Another case, in which we also have little variation in $X$, is when the magnitude of straggling $B$ is small. We formally show that splitting is optimal in Proposition~\ref{Th:Bi-Modal} for $B \le 2$. %typo-13

\begin{proposition}
\label{Th:Bi-Modal}
 For $\BMD(B,\epsilon)$ service time with server-dependent scaling, if $B \le 2$, the expected job completion time $\expec[Y_{k:n}]$ reaches its minimum at $k=n$ (maximal parallelism).
\end{proposition}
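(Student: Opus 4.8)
The plan is to reduce the statement to two one‑line estimates by exploiting the feasibility constraint that $s=n/k$ must be a positive integer, so that the only admissible value of $k$ exceeding $n/2$ is $k=n$ itself. First I would extract two elementary bounds directly from \eqref{Eq:bimodal_server}. Since $B>1$ and the sum $\sum_{i=0}^{k-1}\binom{n}{i}\epsilon^{n-i}(1-\epsilon)^i$ equals the probability $\Pr\{X_{k:n}=B\}\in[0,1]$, that formula immediately gives the crude lower bound $\expec[Y_{k:n}]\ge s=n/k$ valid for every feasible $k$, together with the crude upper bound $\expec[Y_{n:n}]=1+(B-1)\Pr\{X_{n:n}=B\}\le 1+(B-1)=B$ at the splitting endpoint $s=1$. (One may also note $\Pr\{X_{n:n}=B\}=1-(1-\epsilon)^n$ by the binomial theorem, but only the bound $\le 1$ is needed.)

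Next I would combine these with the hypothesis $B\le 2$. Any admissible $k$ with $k<n$ is a proper divisor of $n$, hence $s=n/k\ge 2$, and therefore $\expec[Y_{k:n}]\ge s\ge 2$. On the other hand, $\expec[Y_{n:n}]\le B\le 2$. Chaining the two inequalities yields $\expec[Y_{n:n}]\le 2\le \expec[Y_{k:n}]$ for every admissible $k\neq n$, which is exactly the assertion that $\expec[Y_{k:n}]$ is minimized at $k=n$ (maximal parallelism).

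There is no genuine analytic difficulty here; the single point that must not be glossed over is the integrality of $s=n/k$. Without it the claim is false — near $k\approx n/2$ the prefactor $n/k$ is close to $2$ while the binomial mass $\Pr\{X_{k:n}=B\}$ can be substantial, and a direct computation shows, for instance, $\expec[Y_{n-1:n}]<\expec[Y_{n:n}]$ for suitable $n$ and $\epsilon$. It is precisely the divisor structure of the admissible set that upgrades ``$k<n$'' to ``$s\ge 2$'', and the threshold $B\le 2$ is tailored to match that gap of size $2$; extending the argument to larger $B$ would require genuinely estimating the binomial sum rather than bounding it by $1$, which is why the proposition is stated only for $B\le 2$.
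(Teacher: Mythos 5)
Your proof is correct and takes essentially the same route as the paper's: both exploit the fact that $s=n/k$ must be a positive integer to reduce the admissible set to $\{k=n\}\cup\{k\le n/2\}$, then combine the endpoint bound $\expec[Y_{n:n}]\le B\le 2$ with the crude bound $\expec[Y_{k:n}]\ge s\ge 2$ for $k\le n/2$. Your discussion of why the integrality is essential (and why $B\le 2$ is exactly the threshold matching the gap $s\ge 2$) is a nice explanatory addition but not a different argument.
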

\begin{proof}
Since $k$ is an integer that divides $n$, we know that either $k=n$ or $k\le n/2$. 
When $k=n$, then $\expec[Y_{n:n}]= 1+(B-1)\sum^{n-1}_{i=0} \binom{n}{i} \epsilon^{n-i}(1-\epsilon)^{i}<B \le 2$.
When $k \le n/2$, then $\expec[Y_{k:n}]> s = n/k \ge 2$.
\end{proof}

When $B>2$, computing $k$ that minimizes the expression for $\expec[Y_{k:n}]$ in \eqref{Eq:bimodal_server} becomes harder, and we use the LLN to approximate $\expec[Y_{k:n}]$ for large $n$, as follows:
\begin{theorem}
\label{Thm:bimodal_LLN}
For the $\BMD(B, \epsilon)$, where $B>2$, service time with server-dependent scaling, we have
\begin{equation}
    \expec[{Y_{k:n} }] \sim\frac{1}{r} p_r + \frac{B}{r} q_r~~\text{as}~~n\rightarrow \infty
    \label{eq:BMD-LLN}
\end{equation}
Here, $r=k/n$ is the code rate, $p_r\rightarrow 1$ if $ 1 - \epsilon > r$ and $p_r\rightarrow 0$ if $ 1 - \epsilon < r$, and $q_r=1-p_r$. %typo-14
\end{theorem}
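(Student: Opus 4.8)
The plan is to exploit the fact that under server-dependent scaling the job completion time is itself a two-valued random variable, so that its expectation is governed entirely by a single binomial-tail probability, whose limit then follows from the law of large numbers. Concretely, with $s=n/k=1/r$, the task time of each worker is either $s$ (with probability $1-\epsilon$) or $sB$ (with probability $\epsilon$), and $Y_{k:n}=s\,X_{k:n}$, so $Y_{k:n}\in\{s,sB\}$.

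First I would introduce $N$, the number of \emph{fast} workers (those whose CU service time equals $1$); then $N\sim\BinomD(n,1-\epsilon)$. The $k$-th order statistic $X_{k:n}$ equals $1$ exactly on the event $\{N\ge k\}$ and equals $B$ otherwise, hence $Y_{k:n}=s$ on $\{N\ge k\}$ and $Y_{k:n}=sB$ on $\{N\le k-1\}$. Setting $p_r:=\Pr\{N\ge k\}$ and $q_r:=1-p_r=\Pr\{N\le k-1\}$, and using that $\sum_{i=0}^{k-1}\binom{n}{i}(1-\epsilon)^i\epsilon^{n-i}=\Pr\{N\le k-1\}=q_r$, this recovers the exact identity already recorded in \eqref{Eq:bimodal_server}:
\[
\expec[Y_{k:n}]=s\bigl(p_r+Bq_r\bigr)=\frac{1}{r}p_r+\frac{B}{r}q_r .
\]
So the displayed formula in the theorem is in fact an equality for every $n$; the only remaining content is the limiting behaviour of $p_r$ and $q_r$.

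Next I would apply the weak law of large numbers to $N/n$: since $N/n\to 1-\epsilon$ in probability, for any fixed code rate $r\neq 1-\epsilon$ the event $\{N/n\ge r\}=\{N\ge k\}$ has probability tending to $1$ when $r<1-\epsilon$ and to $0$ when $r>1-\epsilon$. (One may instead invoke a Chernoff/Hoeffding bound to obtain exponential convergence, but convergence in probability is all the ``$\sim$'' in the statement requires.) Thus $p_r\to 1,\ q_r\to 0$ if $1-\epsilon>r$, and $p_r\to 0,\ q_r\to 1$ if $1-\epsilon<r$; substituting back into the identity above completes the argument.

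There is no deep obstacle here; the points that warrant a little care are bookkeeping ones. One should check that the off-by-one between $\Pr\{N\le k-1\}$ and $\Pr\{N\le k\}$ — a single binomial term, $o(1)$ uniformly — does not affect the limit, and note that the boundary case $r=1-\epsilon$ is deliberately excluded from the statement, since there $N/n$ sits exactly on the concentration threshold and $p_r$ need not converge to $0$ or $1$ (it tends to $1/2$). Finally I would remark that the hypothesis $B>2$ is not actually used in this argument: it is inherited from the surrounding discussion, where Proposition~\ref{Th:Bi-Modal} already settles the case $B\le 2$, and the asymptotic formula itself is valid for every $B>1$.
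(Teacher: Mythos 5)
Your argument is correct and is essentially the paper's own proof: you introduce the count of fast workers (which the paper calls $M$ and you call $N$), observe that $Y_{k:n}$ is two-valued with $\Pr\{Y_{k:n}=s\}=\Pr\{N\ge k\}$, and apply the law of large numbers to $N/n\to 1-\epsilon$. Your side remarks are also accurate — the displayed formula is an exact identity for every $n$ with the stated $p_r$, the boundary $r=1-\epsilon$ is genuinely excluded (there $p_r\to 1/2$ by the CLT), and the hypothesis $B>2$ plays no role in the asymptotics, being carried over only because Proposition~\ref{Th:Bi-Modal} already disposes of $B\le 2$.
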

\begin{proof}
At server $i$, $i=1,\dots, n$, the task completion time $Y_i$ is a Bi-Modal random variable taking value $s\cdot 1$ or $s\cdot B$. Therefore, the job completion time $Y_{k:n}$ is also a Bi-Modal random variable taking value $s$ or $sB$. We define an indicator function $\mathbf{1}_{\{Y_i|\{s\}\}}:\{Y_i|\{s,sB\}\}\rightarrow \{0,1\}$, which takes value one when $Y_i$ takes value $s$ and zero otherwise. Let $M$ be the sum of $n$ i.i.d. indicators $\mathbf{1}_{\{Y_i|\{s\}\}}$ where $i=1,\dots, n$. ($M$ is the number of servers whose completion time took value $s$ in a given realization.)
Then $
    {\Pr}\{Y_{k:n}= s\} = {\Pr}\{M\ge k\} ~~ and~~ {\Pr}\{Y_{k:n}=sB\} = {\Pr}\{M<k\}.$
    
We next look into $M$ through the LLN lens. Let $r \doteq k/n$ and $M_n \doteq M/n$. Observe that $M_n \rightarrow 1 - \epsilon$ as $n \rightarrow \infty$.  We see that $p_r = {\Pr} \{ Y_{k:n} = s\}  = {\Pr} \{M_n \geq r\} \rightarrow 1$ if $ 1 - \epsilon > r$ and 0 if the inequality is reversed. Since $s = 1/r$, the LLN approximation yields the expression \eqref{eq:BMD-LLN}.

\end{proof}

From \eqref{eq:BMD-LLN}, we see that $\expec[{Y_{k:n} }]$ is a convex, unimodal function of $r$ on $[0,1-\epsilon]$, and a decreasing function of $r$ on $(1-\epsilon,1]$. Therefore, $\expec[{Y_{k:n} }]$ has two local minimums: $1/(1 - \epsilon)$ at $r=1-\epsilon$ and $B$ at $r=1$, which we compare and conclude the following. Whether coding or splitting  is optimal depends on how the probability of straggling $\epsilon$ and straggling magnitude 
$B$ compare to each other.  When $\epsilon \le (B-1)/{B}$, the global minimum is $1/(1 - \epsilon)$, and thus coding with the code rate $r=1-\epsilon$ is optimal. When $\epsilon > (B-1)/{B}$, the global minimum is $B$, and thus splitting is optimal. 

Notice that the LLN based approximation provides both qualitative and quantitative insights. We obtain a useful approximation to the optimal code rate in \eqref{eq:BMD-LLN}, which is much simpler and insightful than the exact expression \eqref{Eq:bimodal_server}. Moreover, from the numerical analysis below, we can observe that the approximation is useful even for small $n$.

\paragraph*{Remark}
Based on the insight we gained from the above analysis, we make the following conjecture about the optimal strategy for a general class of service time PDFs. The claim in the conjecture holds for the PDFs considered in this paper.
\begin{conjecture}
\label{conject:server}
Under server-dependent scaling and a CU service time $X=\delta+Z$ where $\delta\ge 0$ is a constant and $Z$ is a random variable whose support includes $0$,  the expected job completion time is minimized by replication when  $\delta=0$ and by coding or splitting otherwise. 
\end{conjecture}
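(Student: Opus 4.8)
\emph{Plan.} The first move is to reduce \Cref{conject:server} to a monotonicity property of expected order statistics. Under server-dependent scaling a worker holding $s=n/k$ CUs finishes in time $Y=s\cdot X=s(\delta+Z)$; since $\delta$ is deterministic, $X_{k:n}=\delta+Z_{k:n}$, so that
\[
\expec[Y_{k:n}]=\frac{n}{k}\bigl(\delta+\expec[Z_{k:n}]\bigr)=n\Bigl(\frac{\delta}{k}+g(k)\Bigr),\qquad g(k):=\tfrac{1}{k}\expec[Z_{k:n}].
\]
Everything then hinges on how the \emph{diversity term} $g(k)$ competes with the strictly decreasing \emph{parallelism term} $\delta/k$ as $k$ runs over the divisors of $n$.

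\emph{The $\delta=0$ case.} Here $\expec[Y_{k:n}]=n\,g(k)$, so I would show directly that $g$ is non-decreasing on $\{1,\dots,n\}$, which a fortiori gives $k^\ast=1$. Using the quantile function $Q=F^{-1}$ of $Z$ and the Beta density representation $\expec[Z_{k:n}]=\int_0^1 Q(u)\,\beta_{k,n-k+1}(u)\,du$, a short computation gives
\[
g(k)-g(k-1)=\frac{1}{k}\binom{n}{k-1}\int_0^1 Q(u)\,\kappa_k(u)\,du,\qquad \kappa_k(u):=u^{k-2}(1-u)^{n-k}\bigl((n+1)u-k\bigr).
\]
The kernel $\kappa_k$ changes sign exactly once on $(0,1)$ --- from negative to positive at $u=k/(n+1)$ --- has negative total mass $\int_0^1\kappa_k<0$, yet $K_k(v):=\int_v^1\kappa_k$ satisfies $\int_0^1 K_k=0$. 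Because $0\in\operatorname{supp}(Z)$ we have $Q(0^{+})=0$, so integrating by parts converts the integral into $\int_0^1 Q'(v)K_k(v)\,dv$; since $K_k$ is negative-then-positive with zero mean, a cut-point inequality yields $\int_0^1 Q'K_k\ge Q'(v^\ast)\int_0^1 K_k=0$ provided $Q'$ is non-decreasing, i.e.\ provided $Z$ has a non-increasing density ($F$ concave). That holds for the exponential and the shifted-Pareto components used in this paper, and the Bi-Modal law is the obvious discrete analogue (cf.\ \Cref{Th:Bi-Modal}). This gives replication.

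\emph{The $\delta>0$ case.} Now $\expec[Y_{k:n}]/n=\delta/k+g(k)$ is a strictly decreasing convex term plus the non-decreasing term just obtained --- the same shape treated for data-dependent scaling in \Cref{thm:data,Sec:data-pareto}. I would pass to continuous $k$, use the $\log$-approximation to $\expec[Z_{k:n}]$, and differentiate to see that the minimiser leaves $k=1$ as soon as $\delta>0$, moving to an interior value (coding) for moderate $\delta/\expec[Z]$ and to $k=n$ (splitting) for large $\delta$, exactly as in \Cref{thm:data,Th:Paretomethod,Th:Bi-Modal,Thm:bimodal_LLN}. A transparent sufficient condition is $\expec[Y_{1:n}]>\expec[Y_{k_0:n}]$ whenever $\delta>\tfrac{k_0}{k_0-1}\bigl(g(k_0)-g(1)\bigr)$, where $k_0$ is the smallest divisor of $n$ above $1$.

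\emph{The main obstacle.} The delicate point is the order-statistic monotonicity of the $\delta=0$ step for a \emph{general} $Z$. The cut-point argument needs $F$ concave; dropping this in favour of ``$0\in\operatorname{supp}(Z)$'' alone is not automatic --- for $Z$ with density $2z$ on $[0,1]$ one finds $g(k)\propto\Gamma(k+\tfrac12)/\Gamma(k+1)$, which is \emph{decreasing}, so there splitting beats replication. Hence a complete proof either invokes a mild extra regularity assumption near $0$ (satisfied by every service-time component in this paper) or requires a genuinely sharper comparison of the expected spacings $\expec[Z_{j+1:n}-Z_{j:n}]$ --- and the same refinement is exactly what would close the small-$\delta$ end of the $\delta>0$ case. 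I expect this to be where the real work lies.
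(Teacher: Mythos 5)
This statement is a \emph{conjecture}. The paper offers no proof, only the remark that ``the claim in the conjecture holds for the PDFs considered in this paper,'' so there is no paper argument to compare your proposal against. Your decomposition $\expec[Y_{k:n}]=n\bigl(\delta/k+g(k)\bigr)$ with $g(k)=\tfrac{1}{k}\expec[Z_{k:n}]$ is the right way to organise the problem, and your Beta-kernel formula for $g(k)-g(k-1)$ is correct. The most valuable part of the proposal is the example you frame as a ``gap'': $Z$ with density $2z$ on $[0,1]$ has $0$ in its support yet makes $g(k)\propto\Gamma(k+\tfrac12)/\Gamma(k+1)$ strictly decreasing, so replication is strictly \emph{worst} at $\delta=0$. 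You should state this as what it actually is --- a counterexample to the conjecture as written, not merely an obstacle to your proof technique. Your identification of a non-increasing density near $0$ (equivalently $F$ concave, hence $Q'$ non-decreasing) as the missing hypothesis is sharp, and it is exactly the regularity shared by the three PDFs in the paper (the exponential, the Pareto shifted by $\lambda$, and the Bi-Modal atom structure).

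The $\delta>0$ half is also problematic, and your own ``transparent sufficient condition'' already shows why. When $g$ is strictly increasing (e.g.\ $Z\sim\expD(W)$, for which $g(k+1)>g(k)$ by the argument used in Theorem~\ref{thm:server} with $\Delta=0$), any $0<\delta\le\tfrac{k_0}{k_0-1}\bigl(g(k_0)-g(1)\bigr)$ still leaves $k^*=1$, contradicting the claim that $\delta>0$ forces coding or splitting. The paper's own Pareto family supplies an instance: with $n=2$ and $\alpha=1.5$, the ratio in the proof of Theorem~\ref{Th:Paretomethod} gives $\expec[Y_{1:2}]/\expec[Y_{2:2}]=2/3<1$, so $k^*=1$ even though $\delta=\lambda>0$. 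Your overall strategy correctly isolates the two places the conjecture breaks, but a correct statement would need both a concavity (or non-increasing density) hypothesis on $Z$ near $0$ and a quantitative lower bound $\delta>\tfrac{k_0}{k_0-1}\bigl(g(k_0)-g(1)\bigr)$ in the second clause.
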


\paragraph*{Numerical Analysis}

\begin{figure}
    \centering
    \includegraphics[width=0.4\textwidth]{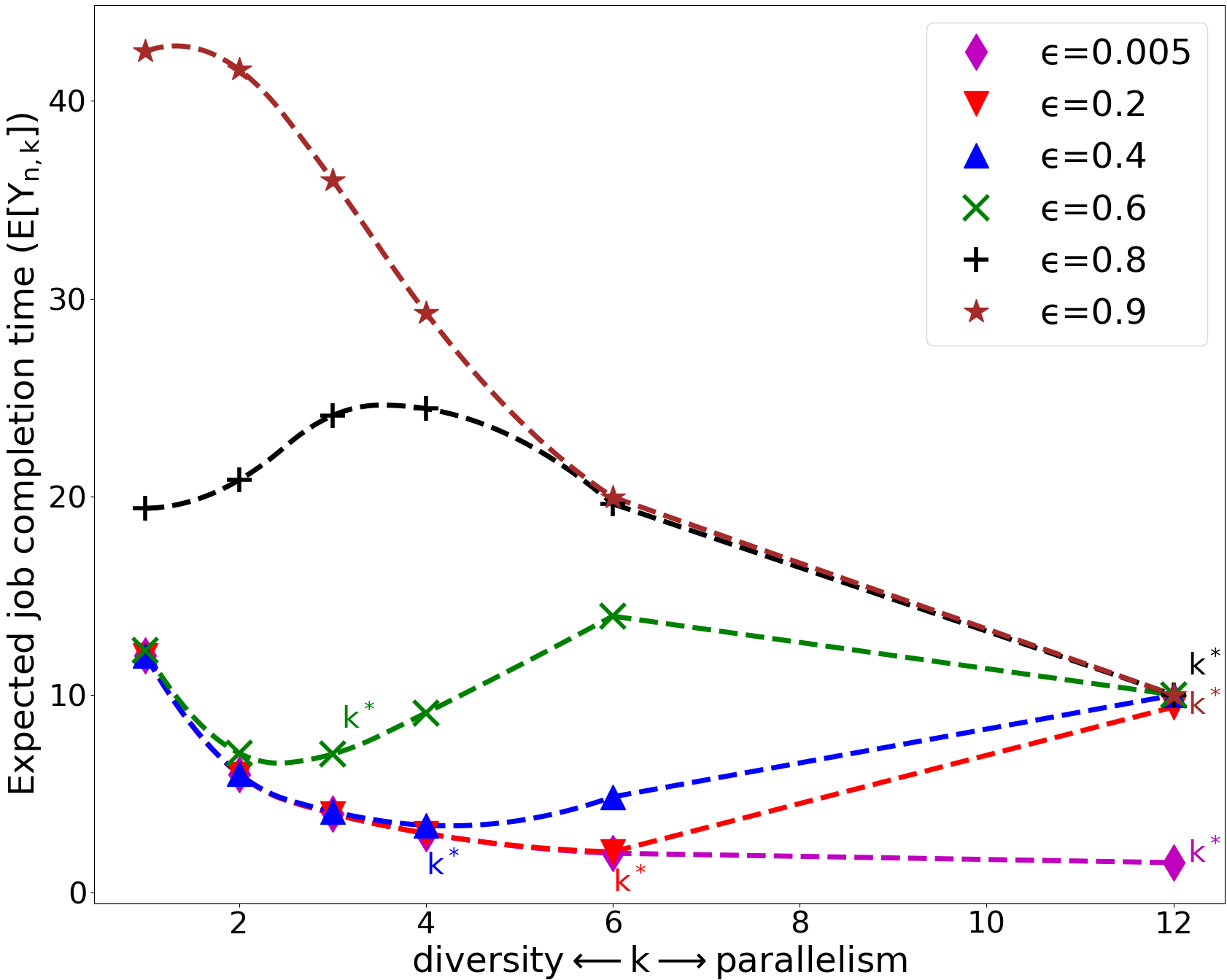} 
    \caption{Expected job completion time $\expec[Y_{k:n}]$ for Bi-Modal service time with server-dependent scaling as a function of the diversity/parallelism parameter $k$ (cf.~\eqref{Eq:bimodal_server}).  The straggling magnitude $B$ is $10$. The number of workers (job size) is $n=12$, and task size per worker is $s=n/k$ (Since both $k$ and $s$ are integers, we have $k\in\{1,2,3,4,6,12\}$. We use dashed curves to connect the points corresponding to different allowed values of $k$ for a given $\alpha$.) The optimal code rate decreases as $\epsilon$ increases, except when $\epsilon$ approaches $1$, where the maximal parallelism is optimal.}
    \label{fig:server_dep_e}
\end{figure}
In Fig.~\ref{fig:server_dep_e}, we evaluate the expression of $\expec[Y_{k:n}]$ to see how the expected job completion time changes with the diversity/parallelism parameter $k$. We consider a system with $n=12$ workers for six different values of $\epsilon\in\{0.005,0.2,0.4,0.6,0.8,0.9\}$. 
Some observations can be made from the figure:
when $\epsilon\rightarrow0$ (e.g. $0.005$), $\expec[Y_{k:n}]$ decreases with $k$, and splitting is optimal. When $\epsilon$ is small (e.g. $0.2$, $0.4$, $0.6$), $\expec[Y_{k:n}]$ reaches its minimum at $k\in[2,6]$, thus coding is optimal and the optimal code rate decreases with increasing $\epsilon$. When $\epsilon$ is large (e.g. $0.8$, $0.9$), $\expec[Y_{k:n}]$ reaches its minimum at $k=12$, and splitting is optimal. From these observations, we conclude that as $\epsilon$ increases, $\expec[Y_{k:n}]$ is minimized by introducing more diversity. However, when $\epsilon$ approaches $1$, $X$ approaches to a deterministic random variable, then $\expec[Y_{k:n}]$ is minimized by maximal parallelism.

\begin{figure}
    \centering
    \includegraphics[width=0.4\textwidth]{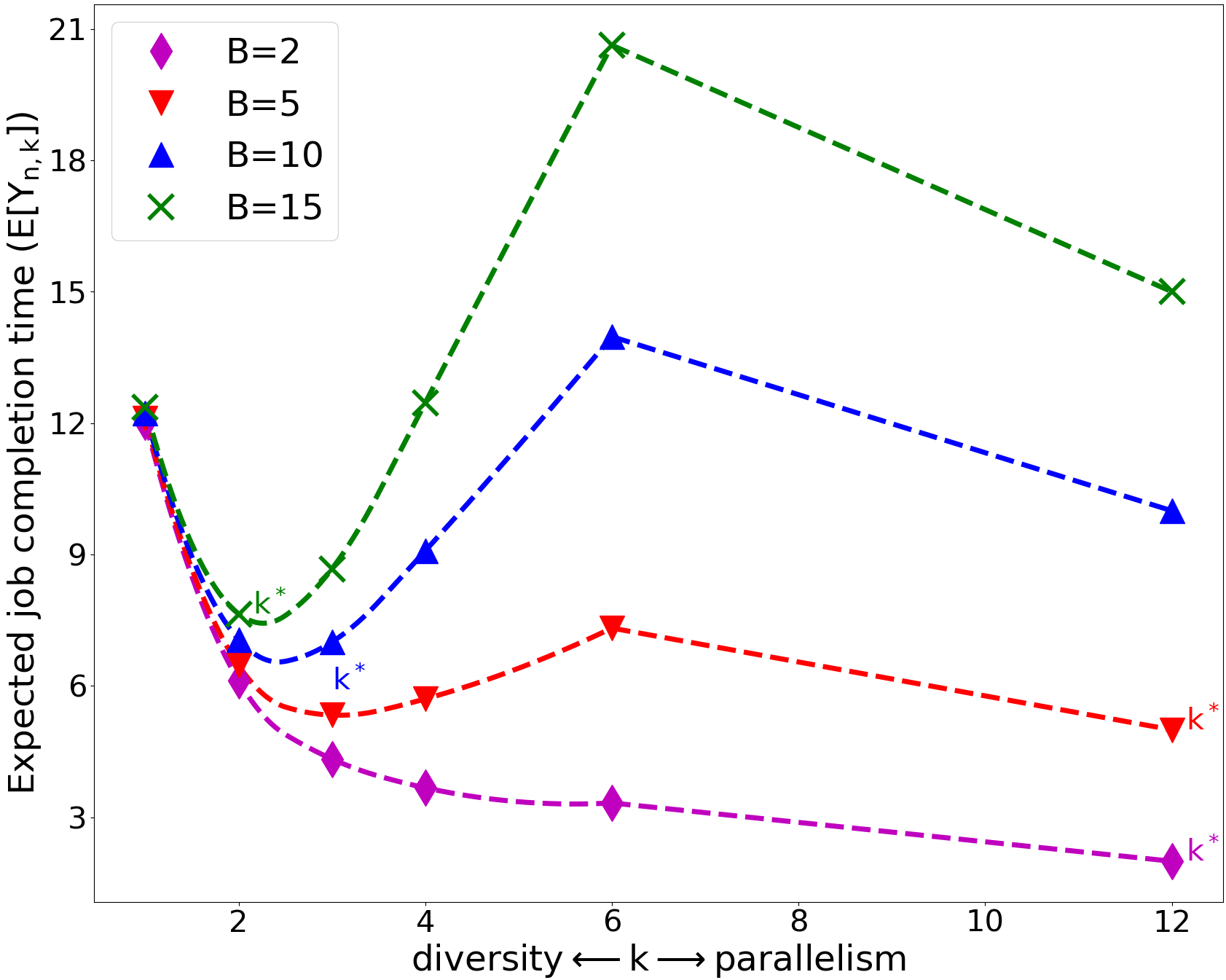}
    \caption{Expected job completion time $\expec[Y_{k:n}]$ for Bi-Modal service time with server-dependent scaling as a function of the diversity/parallelism parameter $k$ (cf.~\eqref{Eq:bimodal_server}).  The straggling probability $\epsilon$ is $0.6$.  The number of workers (also job size) is $n=12$, and task size per worker is $s=n/k$ (Since both $k$ and $s$ are integers, we have $k\in\{1,2,3,4,6,12\}$. We use dashed curves to connect the points corresponding to different allowed values of $k$ for a given $\alpha$. ) The optimal code rate decreases with increasing $B$. }
    \label{fig:server_dep_B}
\end{figure}
In Fig.~\ref{fig:server_dep_B}, we evaluate $\expec[Y_{k:n}]$ vs. $k$ for four different values of $B$ (from 2 to 15). We observe that
when $B$ is small (e.g., 2, 5), splitting is optimal. When $B$ is large (e.g. 10, 15), $\expec[Y_{k:n}]$ is minimized by coding and the optimal code rate increases with $B$. When $B$ is very large (e.g. $B>150$), not shown in Fig.~\ref{fig:server_dep_B}, replication is optimal.
From the above, we conclude that the magnitude of $B$ determines the diversity/parallelism trade-off: when $B$ is small, we gain more from parallelism, s.t. splitting. When $B$ is large, we gain more from diversity, s.t. coding or replication.

\begin{figure}[hbt]
	\centering
	\includegraphics[width=0.4\textwidth]{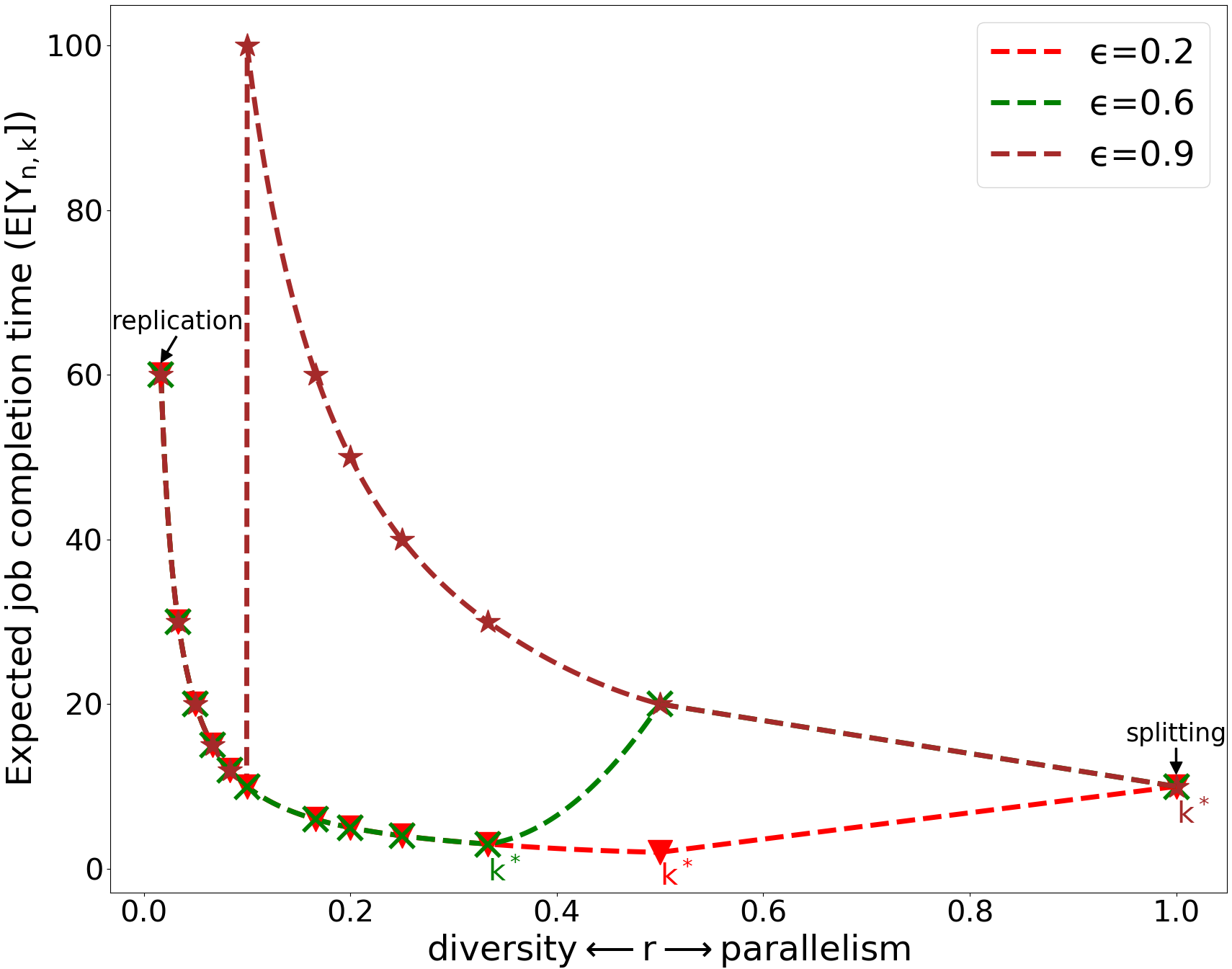}~\\
	\includegraphics[width=0.4\textwidth]{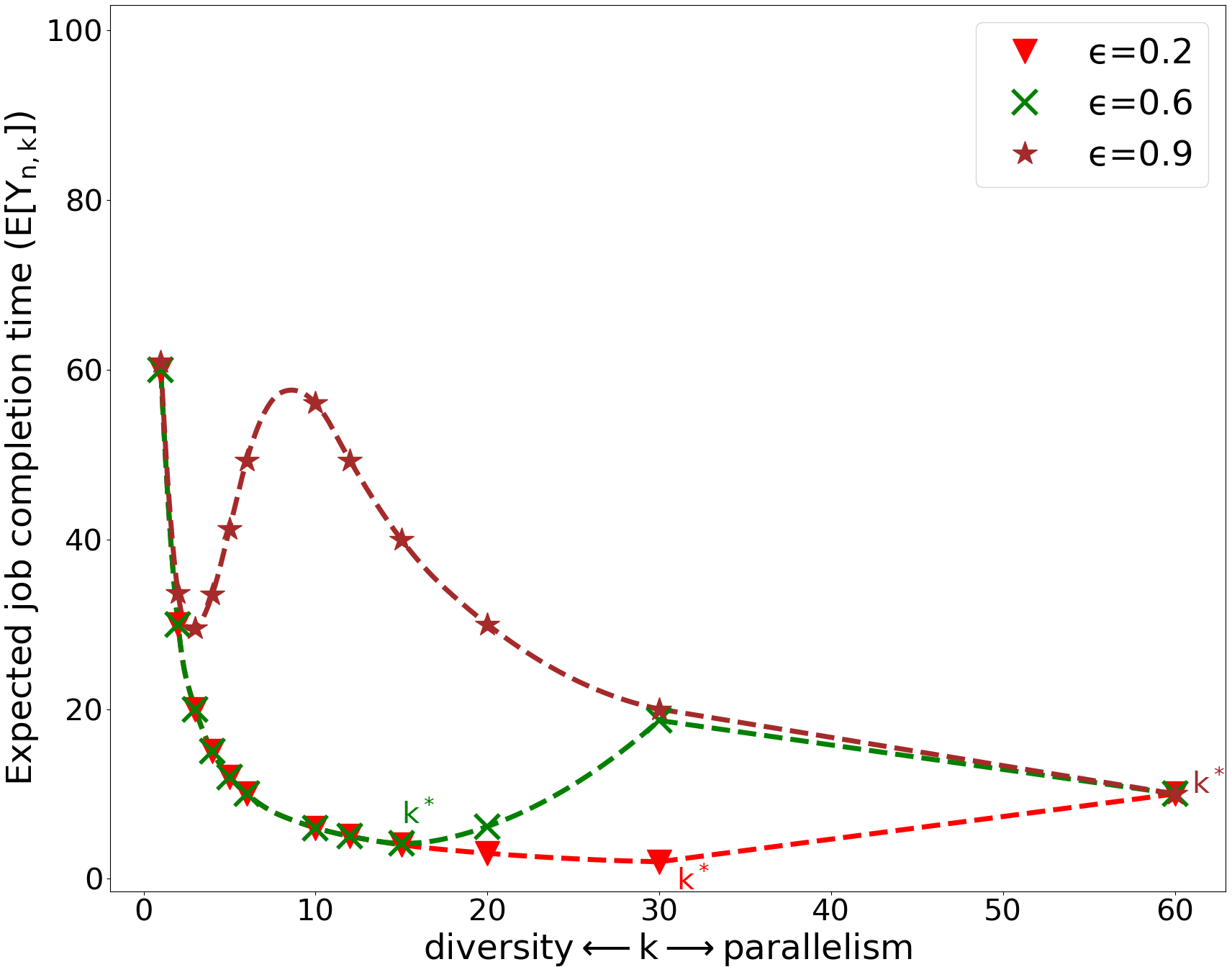}~
	\caption{Comparisons between the LLN approximation of $\expec[Y_{k:n}]$ given by \eqref{eq:BMD-LLN} and the exact result given by \eqref{Eq:bimodal_server} for Bi-Modal service time with server-dependent scaling. The straggling magnitude $B$ is $10$. The number of workers (job size) is $n=60$, task size per worker is $s=n/k$. (upper) The LLN approximation shows the $\expec[Y_{k:n}]$ vs.\ the code rate $r=k/n$. (lower) The exact dependence of $\expec[Y_{k:n}]$ on the diversity/parallelism parameter $k$. (Since both $k$ and $s$ are integers, we have $k\in\{1,2,3,4,5,6,10,12,15,20,30,60\}$. We use dashed curves to connect the points corresponding to different allowed values of $k$ for a given $\alpha$.) The LLN approximation performs well here. }
	\label{fig:server_dep_LLN}
\end{figure}
In Fig.~\ref{fig:server_dep_LLN}, we compare the LLN approximation of $\expec[Y_{k:n}]$  with the exact result \eqref{Eq:bimodal_server}. The upper graph shows the LLN approximation of the dependence of $\expec[Y_{k:n}]$ on $r=k/n$,  and the lower graph shows the exact dependence of $\expec[Y_{k:n}]$ on $k$. There are $n=60$ workers and three different values of $\epsilon$. Since $k$ and $s$ are integers, we can only have $k\in\{1,2,3,4,5,6,10,12,15,20,30,60\}$. The corresponding $\expec[Y_{k:n}]$ are marked in the figure.
To evaluate the approximation, we compare three important metrics: the local minimums, the optimal $k^*$ and the minimum $\expec[Y_{k:n}]$, and make the following observations. First, for each value of $\epsilon$, both LLN and the exact result have the same number of local minimums. However, when $\epsilon=0.9$, the LLN gives a much smaller value of the first local minimum. Second, when $\epsilon=0.2$ and $0.9$, the LLN shows the same values of $k^*$'s as the exact result. When $\epsilon=0.6$, the LLN approximate optimal value is $r=1/3$ ($k^*=20$), whereas the exact value is $k^*=15$. Third, the minimum $\expec[Y_{k:n}]$'s in the LLN are close to the values in the exact result. Therefore, we see that in spite of some differences, the LLN approximation is good, as it shows well the general trend and gives some  exact results. Furthermore, the approximation should be even better for larger $n$.

\subsection{Data-Dependent Scaling}
Under the data-dependent scaling, the service time of a task consisting of $s$ CUs is given by $Y=s\cdot\Delta+X$. Therefore, the job completion time is given by
\[
%\text{service time} ~~~ 
Y_{k:n}=s\cdot\Delta + X_{k:n}, ~~\text{where}~~X \sim \BMD(B, \epsilon)
\]
and, by using the expression for $\expec[X_{k:n}]$ in  \eqref{Eq:bimodal_server}, we have
\begin{equation}
\label{Eq:bimodal_data}
    \expec[Y_{k:n}]= s\Delta+1+(B-1)\sum^{k-1}_{i=0} \binom{n}{i} \epsilon^{n-i}(1-\epsilon)^{i}
\end{equation}

Note that $s\Delta$ is a decreasing function of $k$ (since $s=n/k$), whereas $\expec[X_{k:n}]$ is an increasing function of $k$ (by the definition of order statistics). Then, there is a balance between $s\Delta$ and $\expec[X_{k:n}]$ that minimizes the expected job completion time $\expec[Y_{k:n}]$. However, since the expression of $\expec[Y_{k:n}]$ is very complicated, it is difficult to find the minimum value.
Instead of finding the exact value of minimum $\expec[Y_{k:n}]$, we can find the approximation by applying law of large numbers for the large $n$ scenario.

{\it Large n Scenario:}

By applying LLN, we find the approximation for the expected job completion time $\expec[Y_{k:n}]$ in Theorem~\ref{Thm:bimodal_LLN2}.
\begin{theorem}
\label{Thm:bimodal_LLN2}
Considering $\BMD(B, \epsilon)$ service time with data-dependent scaling, when $n$ is sufficiently large, we find the LLN approximation for $\expec[{Y_{k:n}}]$,
\begin{equation}
\label{eq:BMD-LLN2}
    \expec[{Y_{k:n}}] \sim\frac{\Delta}{r}+ p_r + B q_r,~~\text{as}~~n\rightarrow \infty
\end{equation}
where $r=\frac{k}{n}$ is the code rate, $p_r\rightarrow 1$ if $ 1 - \epsilon > r$ and 0 if the inequality is reversed, and $q_r=1-p_r$.
\end{theorem}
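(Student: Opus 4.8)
The plan is to mirror the proof of Theorem~\ref{Thm:bimodal_LLN}, exploiting the fact that under data-dependent scaling the task-size dependence enters only through the deterministic additive shift $s\Delta = \Delta/r$, which is a constant once the code rate $r=k/n$ is fixed. So the entire problem reduces to understanding the limiting behaviour of $\expec[X_{k:n}]$ for $X\sim\BMD(B,\epsilon)$, and then adding back this constant.

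First I would recall (as used in deriving \eqref{Eq:bimodal_server} and \eqref{Eq:bimodal_data}) that $X_{k:n}$ is itself a Bi-Modal random variable: it equals $1$ exactly when at least $k$ of the $n$ i.i.d.\ CU service times take the fast value $1$, and equals $B$ otherwise. Let $M=\sum_{i=1}^{n}\mathbf{1}\{X_i=1\}$, so $M\sim\BinomD(n,1-\epsilon)$ counts the non-straggling servers; then $\Pr\{X_{k:n}=1\}=\Pr\{M\ge k\}$ and $\Pr\{X_{k:n}=B\}=\Pr\{M<k\}$. Next I would invoke the weak law of large numbers for $M/n$: since $\expec[M/n]=1-\epsilon$ and $\var(M/n)=\epsilon(1-\epsilon)/n\to 0$, we have $M/n\to 1-\epsilon$ in probability, so $p_r := \Pr\{M/n\ge r\}\to 1$ when $1-\epsilon>r$ and $p_r\to 0$ when $1-\epsilon<r$; set $q_r=1-p_r$. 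Because $X_{k:n}$ takes values in the bounded set $\{1,B\}$, convergence in probability of the indicator $\mathbf{1}\{M\ge k\}$ transfers to convergence of expectations, giving $\expec[X_{k:n}]=p_r+Bq_r$ with the stated limit. Adding the deterministic term $s\Delta=\Delta/r$ then yields $\expec[Y_{k:n}]\sim \Delta/r + p_r + Bq_r$, which is \eqref{eq:BMD-LLN2}.

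The only subtlety — the same one present in Theorem~\ref{Thm:bimodal_LLN} — is the boundary rate $r=1-\epsilon$, where the LLN alone does not determine $\lim p_r$ (a central-limit refinement would be needed, and the limit would be $1/2$). The theorem statement sidesteps this by asserting only the two cases $1-\epsilon>r$ and $1-\epsilon<r$, so no extra work is required; I would simply note the exclusion. Beyond that there is no real obstacle: the content of the result is precisely the clean separation of the constant shift $\Delta/r$ from the asymptotically degenerate order statistic $X_{k:n}$, exactly as in the server-dependent analysis but with the multiplicative operation $s\cdot(\cdot)$ replaced by the affine operation $s\Delta+(\cdot)$.
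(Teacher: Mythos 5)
Your proof is correct and follows essentially the same route as the paper: decompose $Y_{k:n}=s\Delta+X_{k:n}$, observe $X_{k:n}$ is Bi-Modal with probabilities governed by $M=\sum_i\mathbf{1}\{X_i=1\}$, and apply the LLN to $M/n$ to get $p_r\to 1$ or $0$ according as $1-\epsilon\gtrless r$. Your additional remarks (the variance computation justifying the LLN, boundedness justifying the passage from convergence in probability to convergence of expectations, and the boundary case $r=1-\epsilon$) are correct refinements of points the paper leaves implicit, but they do not change the approach.
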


\begin{proof}
At server $i$, $i=1,\dots, n$, the task completion time $Y_i=s\Delta+X_i$, where $X_i$ is a Bi-Modal random variable taking value $1$ or $B$. Therefore, the job completion time $Y_{k:n}=s\cdot\Delta + X_{k:n}$. We define an indicator function $\mathbf{1}_{\{X_i|\{1\}\}}:\{X_i|\{1,B\}\}\rightarrow \{0,1\}$. Let $M$ be the sum of $n$ i.i.d. indicators $\mathbf{1}_{\{X_i|\{1\}\}}$ where $i\in\{1, \cdots, n\}$. Then $
    {\Pr}\{X_{k:n}=B\} = {\Pr}\{M<k\}~~ \text{and} ~~  {\Pr}\{X_{k:n}= 1\} = {\Pr}\{M\ge k\}$ %typo-15

We next look into $M$ through the LLN lens. Let $r\doteq\frac{k}{n}$ and $M_n\doteq M/n$. Observe that $M_n\rightarrow1-\epsilon$ as $n \rightarrow \infty$. From the above, we see that $p_r= {\Pr} \{ X_{k:n} = 1\}={\Pr}\{M_n\ge r\}\rightarrow 1$ if $ 1 - \epsilon > r$ and 0 if the inequality is reversed, and $q_r=1-p_r$. Since $Y_{k:n}=s\cdot\Delta + X_{k:n}$ and $s=1/r$, the LLN approximation yields the expression \eqref{eq:BMD-LLN2}.
\end{proof}

From \eqref{eq:BMD-LLN2}, we see that $\expec[{Y_{k:n} }]$ is a convex, unimodal function of $r$ on $[0,1-\epsilon]$ and a decreasing function of $r$ on $(1-\epsilon,1]$. Therefore, $\expec[{Y_{k:n} }]$ has two local minimums: $1+ \Delta/(1 - \epsilon)$ at $r=1-\epsilon$ and $\Delta + B$ at $r=1$, which we compare and reach the following conclusions. When 
$\epsilon \le (B-1)/(\Delta + B -1)$, the global minimum is $1+ \Delta/(1 - \epsilon)$, and thus  coding at rate rate $r=1-\epsilon$ is optimal. When $\epsilon > (B-1)/(\Delta + B -1)$, the global minimum is $\Delta + B$, and thus splitting is optimal.

\paragraph*{Numerical Analysis}
In Fig.~\ref{fig:data_Bi-Modal_e}, we evaluate  $\expec[{Y_{k:n} }]$ vs. $k$. There are $n=12$ workers and $\Delta=5$. By comparing five different values of $\epsilon\in\{0.05,0.2,0.5,0.6,0.9\}$, we observe that when $\epsilon\rightarrow0$ (e.g. $0.05$), $\expec[{Y_{k:n} }]$ decreases with increasing $k$, thus splitting is optimal. When $\epsilon$ is small (e.g. $0.2$, $0.5$), $\expec[{Y_{k:n} }]$ reaches its minimum at $k=6$, thus coding is optimal. When $\epsilon$ is large (e.g. $0.6$, $0.9$), $\expec[{Y_{k:n} }]$ decreases with increasing $k$ again, thus splitting is optimal. 
\begin{figure}
    \centering
    \includegraphics[width=0.4\textwidth]{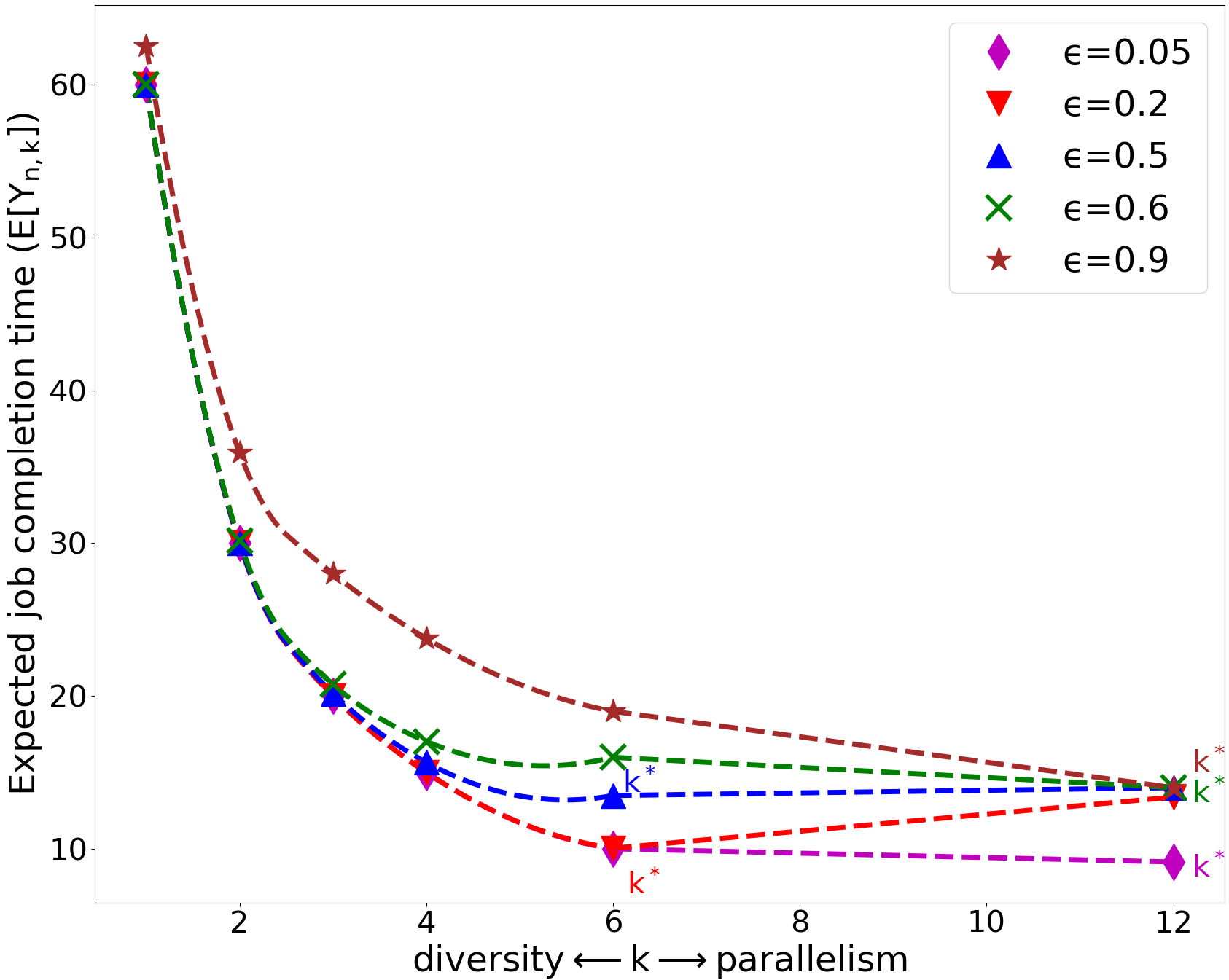}
    \caption{Expected job completion time $\expec[Y_{k:n}]$ for Bi-Modal service time with data-dependent scaling as a function of the diversity/parallelism parameter $k$ (cf.~\eqref{Eq:bimodal_data}). The straggling magnitude $B$ is $10$. The number of workers (job size) is $n=12$, and task size per worker is $s=n/k$ (Since both $k$ and $s$ are integers, we have $k\in\{1,2,3,4,6,12\}$. We use dashed curves to connect the points corresponding to different allowed values of $k$ for a given $\alpha$.)  When $\epsilon$ is small, the optimal code rate decreases with increasing $\epsilon$; When $\epsilon$ is relatively large, maximal parallelism is optimal.}
    \label{fig:data_Bi-Modal_e}
\end{figure}
From these observations, we conclude that as $\epsilon$ increases, $\expec[{Y_{k:n} }]$ is minimized by introducing more diversity, but when $\epsilon$ approaches $1$, maximal parallelism is optimal.

\begin{figure}
    \centering
    \includegraphics[width=0.4\textwidth]{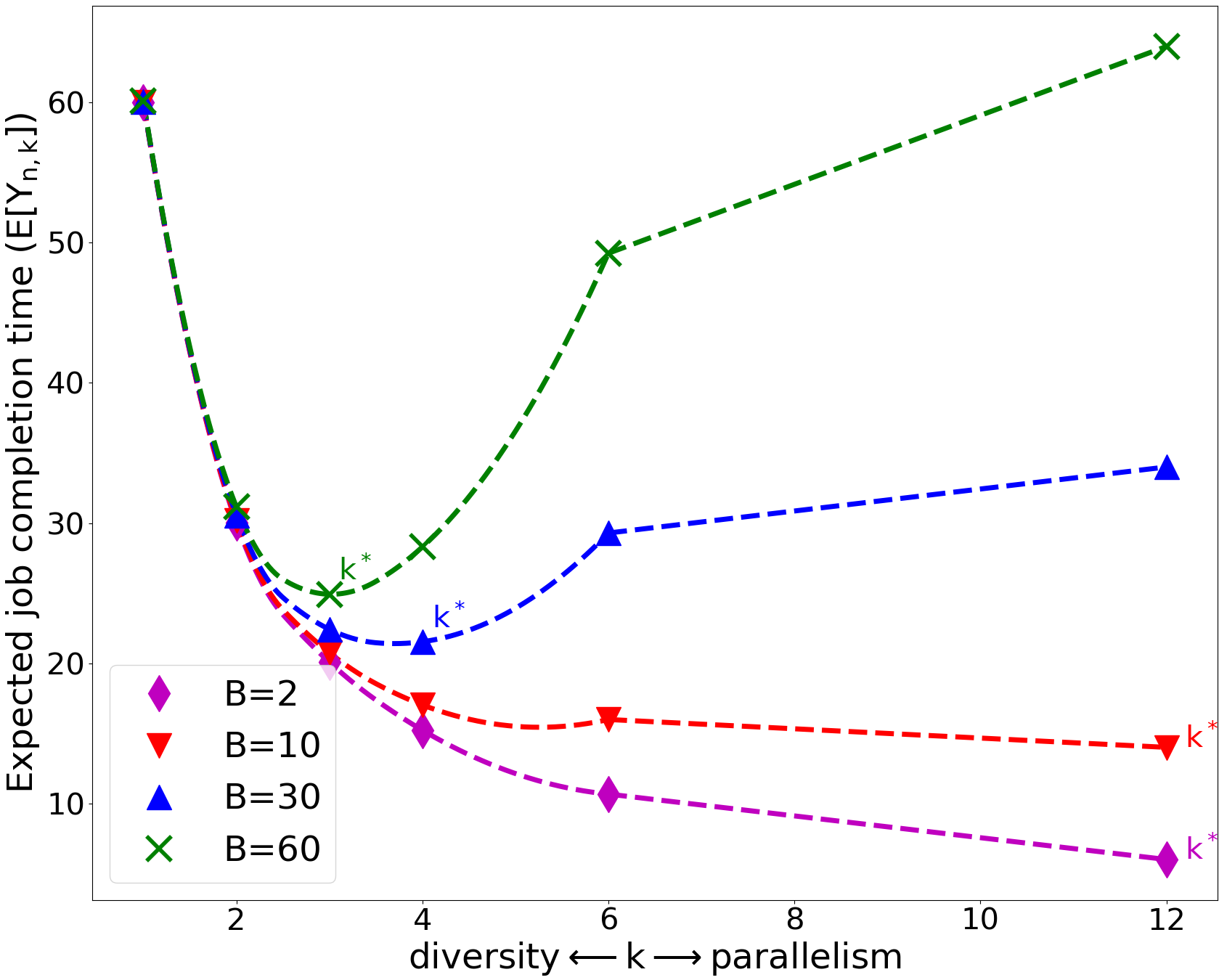}
    \caption{Expected job completion time $\expec[Y_{k:n}]$ for Bi-Modal service time with data-dependent scaling as a function of the diversity/parallelism parameter $k$ (cf.~\eqref{Eq:bimodal_data}). The straggling probability $\epsilon$ is $0.6$. The number of workers (job size) is $n=12$, and task size per worker is $s=n/k$ (Since both $k$ and $s$ are integers, we have $k\in\{1,2,3,4,6,12\}$. We use dashed curves to connect the points corresponding to different allowed values of $k$ for a given $\alpha$.) The optimal code rate decreases with increasing $B$.}
    \label{fig:data_Bi-Modal_B}
\end{figure}
In Fig.~\ref{fig:data_Bi-Modal_B}, we analyze $\expec[{Y_{k:n} }]$ vs. $k$ in a system with $n=12$ workers and $\Delta=5$ for four different values of $B\in\{2,10,30,60\}$. The diversity/parallelism trade-off is determined by the magnitude of $B$. When $B$ is small (e.g. $2$, $10$), $\expec[{Y_{k:n} }]$ decreases with increasing $k$, thus splitting is optimal. When $B$ is large (e.g. $30$, $60$), $\expec[{Y_{k:n} }]$ reaches its minimum at $k=6$, thus coding is optimal. Notice that if $\Delta=0$, replication is optimal; If $\Delta\gg B$, splitting is optimal.

\begin{figure}[hbt]
	\centering
	\includegraphics[width=0.4\textwidth]{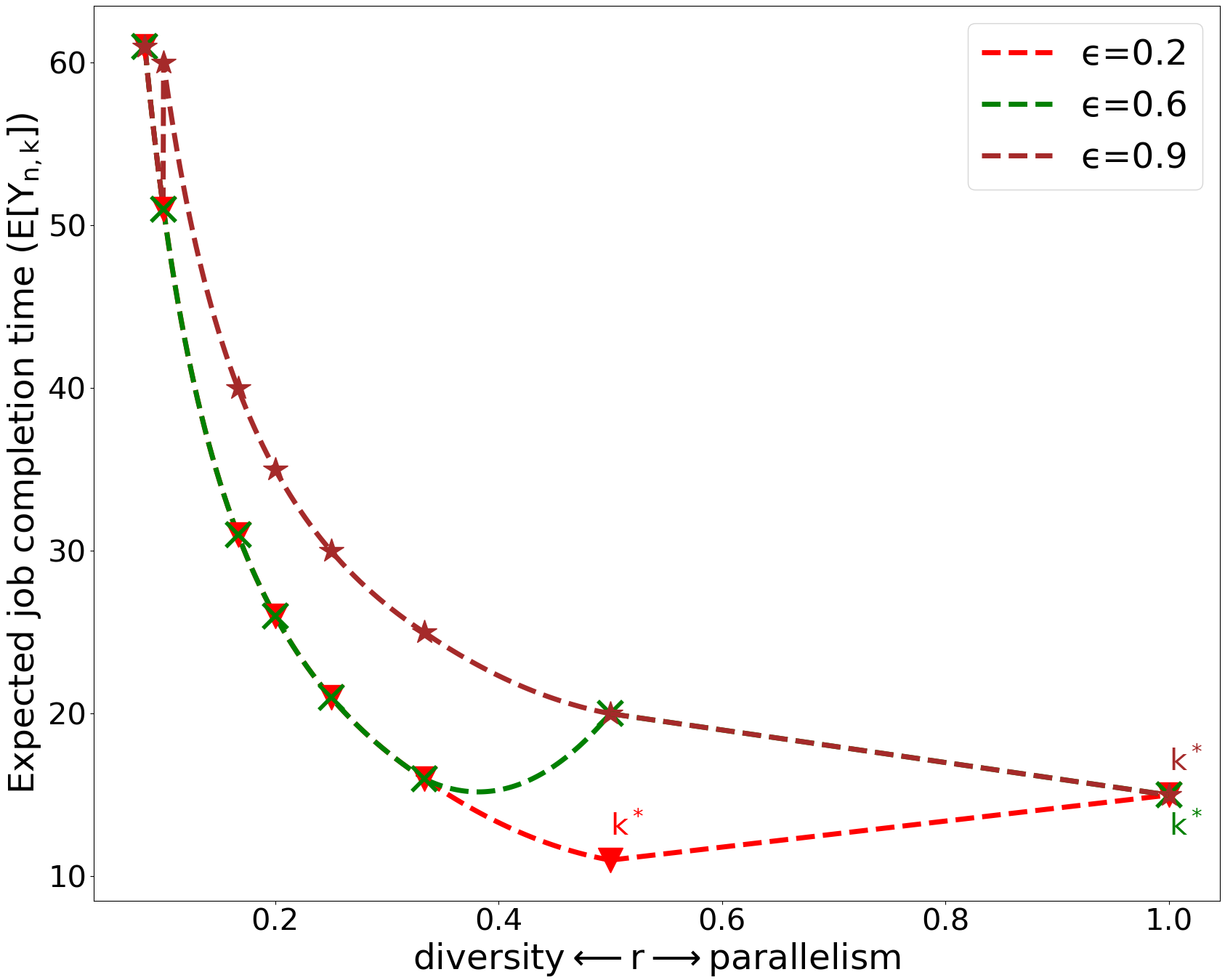}~\\
	\includegraphics[width=0.4\textwidth]{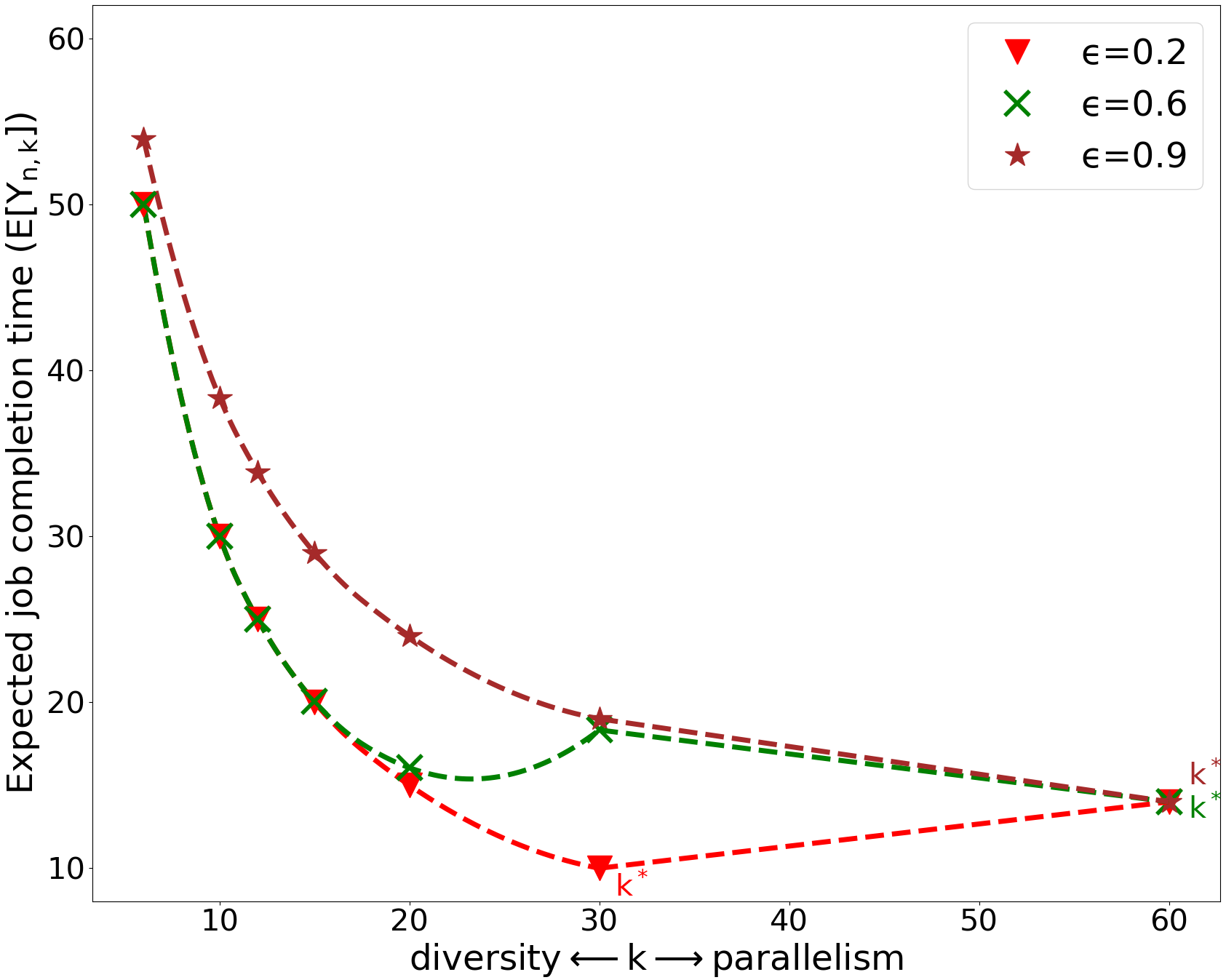}~
	\caption{Comparisons between the LLN approximation of $\expec[Y_{k:n}]$ given by \eqref{eq:BMD-LLN2} and the exact result given by \eqref{Eq:bimodal_data} for Bi-Modal service time with data-dependent scaling. The straggling magnitude $B$ is $10$. The number of workers (job size) is $n=60$, task size per worker is $s=n/k$. (upper) The LLN approximation shows the $\expec[Y_{k:n}]$ vs.\ the code rate $r=k/n$. (lower) The exact dependence of $\expec[Y_{k:n}]$ on the diversity/parallelism parameter $k$. (Since both $k$ and $s$ are integers, we have $k\in\{1,2,3,4,5,6,10,12,15,20,30,60\}$. We use dashed curves to connect the points corresponding to different allowed values of $k$ for a given $\alpha$.) The LLN approximation performs well here.}
	\label{fig:data_Bi-Modal_LLN}
\end{figure}
In Fig.~\ref{fig:data_Bi-Modal_LLN}, we compare the LLN approximation of $\expec[Y_{k:n}]$ (upper) with the exact result  \eqref{Eq:bimodal_data} (lower). There are $n=60$ workers, $\Delta=5$, and three values of $\epsilon$. Since $\expec[Y_{k:n}]$ is very large when $k$ ($r$) is small, we only plot the points for $k\ge 5$ ($r>1/12$). We compare three important metrics to evaluate the approximation: the local minimums, the optimal $k^*$ and the minimum $\expec[Y_{k:n}]$, and observe the following. First, for each value of epsilon, both the LLN and the exact result have the same number of local minimums. The values of local minimums in both graphs are close to each other. Second, the LLN shows the same values of $k^*$'s as the exact result. Third, the minimum $\expec[Y_{k:n}]$'s obtained by the LLN approximation are close to the exact values. Overall, the LLN gives a very good approximation to the exact result, and the approximation will be more even more accurate when $n$ is larger.

\subsection{Additive Scaling}
Under the additive scaling, the service time of a task consisting of $s$ CUs is given by 
\[
Y=X_1+\dots +X_{s}, ~~ \text{where}~ X_i\sim \BMD(B,\epsilon).
\]
We derive the expressions for $Y$ and the expected job completion time $\expec[Y_{k:n}]$ in Lemma~\ref{Le:biorder} (see Appendix~\ref{AP:bimodal}). These expressions are unsuitable for theoretical analysis, but can be numerically evaluated. By theoretical analysis, we only find that splitting is optimal when $B\le 2$ in Proposition~\ref{Th:Bi-Modal-add}.

\begin{proposition}
\label{Th:Bi-Modal-add}
 For $\BMD(B,\epsilon)$ service time, if $B \le 2$, the expected job completion time $\expec[Y_{k:n}]$ reaches its minimum when $k=n$ (maximal parallelism).
\end{proposition}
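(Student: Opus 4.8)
The plan is to reuse, with one substitution, the two-case dichotomy behind the proof of Proposition~\ref{Th:Bi-Modal}. That proof only needs a pointwise lower bound on the per-task service time $Y$ in terms of $s = n/k$: under server-dependent scaling it was $Y = s\cdot X \ge s$, and under additive scaling the same bound $Y = X_1 + \dots + X_s \ge s$ holds almost surely, since each $X_i \sim \BMD(B,\epsilon)$ satisfies $X_i \ge 1$. Because $s = n/k$ must be a positive integer, the feasible $k$ split into exactly two regimes, $k = n$ (whence $s = 1$) and $k \le n/2$ (whence $s \ge 2$), and I will compare $\expec[Y_{k:n}]$ across them.

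First I would dispatch the regime $k \le n/2$. From $Y \ge s$ a.s.\ it follows that $Y_{k:n} \ge s$ pointwise, so $\expec[Y_{k:n}] \ge s = n/k \ge 2$. Next I would evaluate the $k = n$ endpoint. There $s = 1$, so $Y = X_1 \sim \BMD(B,\epsilon)$ and $Y_{n:n} = X_{n:n}$ is the maximum of $n$ i.i.d.\ copies of $X$; this maximum equals $B$ exactly when at least one of the $n$ copies takes the value $B$, and equals $1$ otherwise, giving
\[
\expec[Y_{n:n}] = 1 + (B-1)\bigl(1 - (1-\epsilon)^n\bigr)
\]
(equivalently, \eqref{Eq:bimodal_server} with $s = 1$, $k = n$). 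Since $B \le 2$, the increment obeys $(B-1)\bigl(1 - (1-\epsilon)^n\bigr) \le B - 1 \le 1$, hence $\expec[Y_{n:n}] \le 2$, with equality only in the degenerate corner $B = 2$, $\epsilon = 1$. In every other case this yields $\expec[Y_{n:n}] < 2 \le \expec[Y_{k:n}]$ for all admissible $k \le n/2$, so $k = n$ is the strict minimizer; and in the corner case $X$ is a.s.\ constant, so $Y_{k:n} = sB = 2s \ge 4 > 2 = \expec[Y_{n:n}]$ for $k \le n/2$, again leaving $k = n$ optimal. Either way the minimum of $\expec[Y_{k:n}]$ is attained at $k = n$, which is the claim.

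I do not anticipate a real obstacle: the argument is essentially a transcription of Proposition~\ref{Th:Bi-Modal} with the additive bound $Y \ge s$ replacing the multiplicative one. The only delicate point is keeping the comparison strict at the boundary $B = 2$, $\epsilon = 1$, where the clean estimate $\expec[Y_{n:n}] < B$ that sufficed for server-dependent scaling degenerates to an equality, so one must instead argue directly from $\expec[Y_{k:n}] \ge sB$ on the other side.
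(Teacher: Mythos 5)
Your proof is correct and follows the paper's own two-case argument: for $k \le n/2$ bound $\expec[Y_{k:n}] \ge s \ge 2$ from $Y \ge s$ a.s., and for $k = n$ bound $\expec[Y_{n:n}] \le B \le 2$. You are in fact slightly more careful than the paper at the boundary $B=2$, $\epsilon=1$, where the paper's claimed strict inequality $\expec[Y_{k:n}] > 2$ for $k < n$ can degenerate; your direct comparison $Y_{k:n} = sB \ge 2B > B = \expec[Y_{n:n}]$ closes that gap cleanly.
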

\begin{proof}
When $k=n$, we have $s=n/k=1$, and thus $Y\sim \BMD(B,\epsilon)$. Then we have $Y_{n:n}=1$ with the probability $(1-\epsilon)^n$, and $Y_{n:n}=B$ with the probability $1-(1-\epsilon)^n$. Therefore, $\expec[Y_{n:n}]\le B\le2$. When $k<n$, we have $s=n/k\ge2$, then $\expec[Y_{k:n}]>2$.%typo-16
\end{proof}
%\textcolor[rgb]{0,0,1}{Since the $4$-th moment of the sum of $s$ Bi-Modal exists, according to Lemma~\ref{Le:general}, we know that splitting outperforms replication for a sufficient large $n$.}

{\it Numerical Analysis:}
In Fig.~\ref{fig:add_Bi-Modal_e}, we evaluate $\expec[Y_{k:n}]$ vs.\ $k$. We consider a system with $n=12$ workers for four different values of $\epsilon\in\{0.005,0.2,0.6,0.9\}$. Some observations can be made from the figure:
when $\epsilon\rightarrow 0$ (e.g. $0.005$), splitting outperforms the other two strategies slightly.
 When $\epsilon$ is small (e.g. $0.2$), there is a balance between diversity and parallelism, and coding with the code rate $1/2$ is optimal.
When $\epsilon$ is large (e.g. $0.6$, $0.9$),  splitting is optimal.
\begin{figure}
    \centering
    \includegraphics[width=0.4\textwidth]{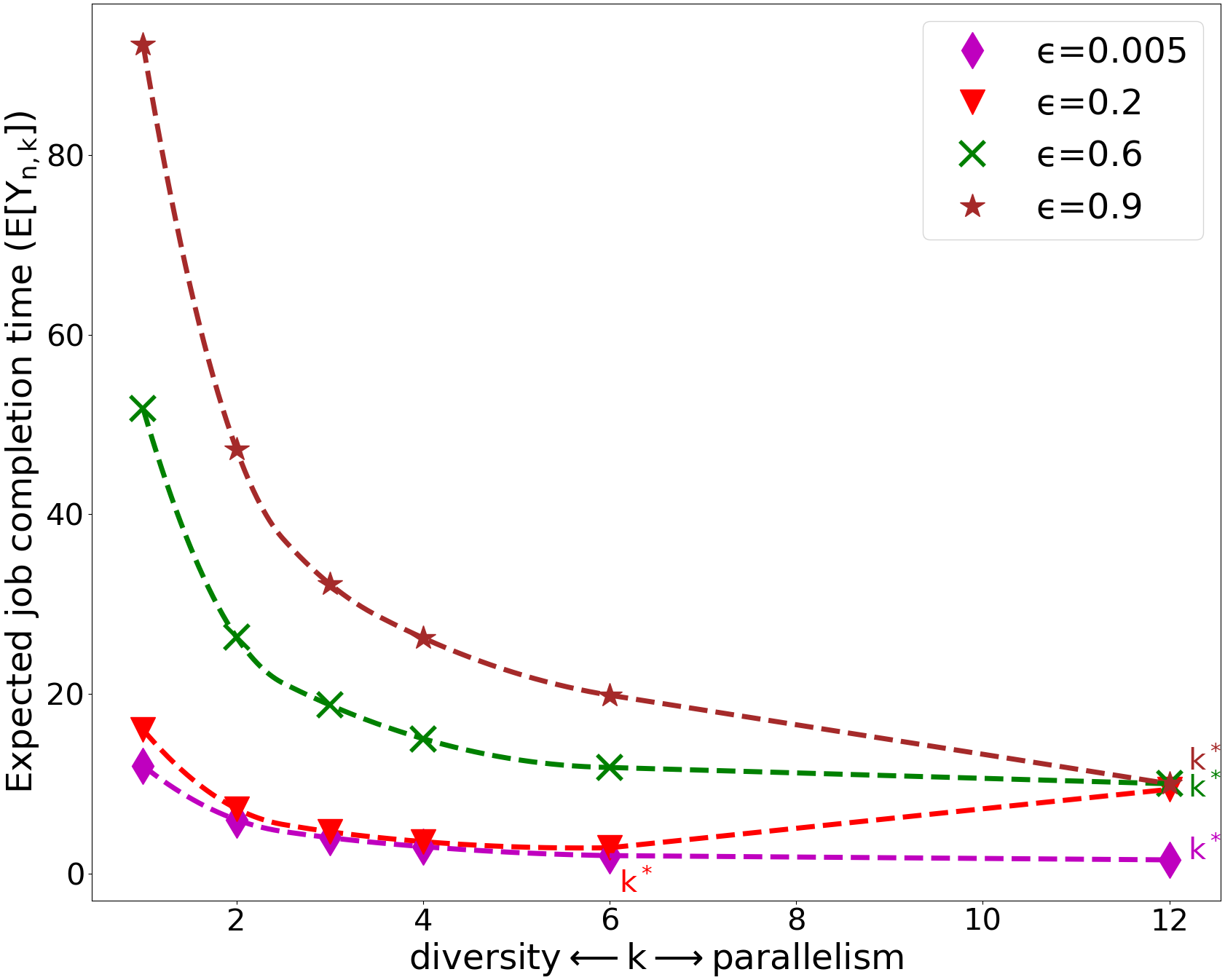}
    \caption{Expected job completion time $\expec[Y_{k:n}]$ for Bi-Modal service time with additive scaling as a function of the diversity/parallelism parameter $k$ (cf.~\eqref{Eq:BiModal}). The straggling magnitude $B$ is $10$. The number of workers (job size) is $n=12$, and task size per worker is $s=n/k$ (Since both $k$ and $s$ are integers, we have $k\in\{1,2,3,4,6,12\}$. We use dashed curves to connect the points corresponding to different allowed values of $k$ for a given $\alpha$.) When $\epsilon$ approaches to $0$ or $1$, maximal parallelism is optimal. Otherwise, coding is optimal, and the code rate is around $1/2$. }
    \label{fig:add_Bi-Modal_e}
\end{figure}
These observations are similar to those for server-dependent and data-dependent scaling. %It seems that the optimal code rate can not be smaller than $1/2$ for any value of $\epsilon$. \textcolor{red}{Do we want to conjecture the code rate claim?}
We conjecture that coding with a proper code rate is always better than replication in Conjecture~\ref{CJ:BiModal}.  Recall that this is not the case for server-dependent and data-dependent scaling, where replication may be optimal for certain (large) values of $B$.
\begin{conjecture}
\label{CJ:BiModal}  
For $\BMD(B,\epsilon)$ service time with additive scaling, either coding or splitting outperforms replication, i.e. there exists $k$, $2\le k\le n$ such that $\expec[Y_{k:n}]<\expec[Y_{1:n}]$.
\end{conjecture}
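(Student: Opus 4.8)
The plan is to pass to the ``straggler-count'' representation of the service time. Under additive scaling a task of size $s=n/k$ has completion time $Y=\sum_{i=1}^{s}X_i = s+(B-1)N$, where $N\sim\BinomD(s,\epsilon)$ counts the straggling CUs in the task. Since $x\mapsto s+(B-1)x$ is increasing, the $k$-th order statistic of $n$ i.i.d.\ copies of $Y$ is $Y_{k:n}=s+(B-1)N_{k:n}$, where $N_{k:n}$ is the $k$-th order statistic of $n$ i.i.d.\ $\BinomD(n/k,\epsilon)$ variables; hence $\expec[Y_{k:n}]=n/k+(B-1)\expec[N_{k:n}]$, and in particular $\expec[Y_{1:n}]=n+(B-1)\expec[N_{1:n}]$ with $N_{1:n}$ the minimum of $n$ i.i.d.\ $\BinomD(n,\epsilon)$ variables. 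First I would record the $B$-free sufficient condition this yields: if for some admissible $k$ with $2\le k\le n$ (so $k\mid n$) one has $\expec[N_{k:n}]\le\expec[N_{1:n}]$, then for \emph{every} $B>1$,
\[
\expec[Y_{k:n}]-\expec[Y_{1:n}]=\Bigl(\tfrac{n}{k}-n\Bigr)+(B-1)\bigl(\expec[N_{k:n}]-\expec[N_{1:n}]\bigr)<0,
\]
the first bracket being strictly negative and the second $\le 0$. So the conjecture reduces to an order-statistics inequality for Binomials that no longer involves $B$.

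Next I would dispose of the large-$n$ regime outright: $\BMD(B,\epsilon)$ is bounded, hence has a finite fourth moment, so Corollary~\ref{Le:general} already gives $\expec[Y_{n:n}]<\expec[Y_{1:n}]$ once $n$ exceeds a threshold $n_0(B,\epsilon)$ — i.e.\ $k=n$ (splitting) works. In the reduced language this is just $\expec[N_{n:n}]=1-(1-\epsilon)^n$ against a Chernoff lower bound $\expec[N_{1:n}]\ge n\epsilon-c\sqrt{n\epsilon\log n}$.

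For the remaining moderate values of $n$ the idea is to use coding instead of splitting. Take $k$ to be the largest proper divisor of $n$, so $s=n/k$ is the smallest integer $>1$ dividing $n$ (when $n$ is even, $s=2$, $k=n/2$). The event $\{N_{k:n}\ge 1\}$ forces at least $n-k+1$ of the $n$ tasks to contain a straggler, i.e.\ $\BinomD\bigl(n,\,1-(1-\epsilon)^s\bigr)\ge n-k+1$; whenever $1-(1-\epsilon)^s<(n-k+1)/n$, a Chernoff bound makes $\Pr\{N_{k:n}\ge 1\}$, and hence $\expec[N_{k:n}]\le s\,\Pr\{N_{k:n}\ge1\}$, exponentially small in $n$. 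Comparing this with a lower bound on $\expec[N_{1:n}]$ (for instance $\expec[N_{1:n}]\ge\bigl(1-(1-\epsilon)^n\bigr)^n$, or the sharper $\expec[N_{1:n}]=\sum_{j\ge1}\Pr\{\BinomD(n,\epsilon)\ge j\}^n$) would establish $\expec[N_{k:n}]\le\expec[N_{1:n}]$ and close the case.

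The hard part is precisely the corner where $n$ is small, the straggling magnitude $B$ is very large, and the straggling probability $\epsilon$ is very small — worst of all when $n$ is prime, so that $k\in\{1,n\}$ are the only options and coding is unavailable. There the $(B-1)$-coefficient under splitting, $1-(1-\epsilon)^n\approx n\epsilon$, need not be dominated by $\expec[N_{1:n}]$ (which is far smaller than $n\epsilon$ when the minimum of $n$ Binomials is typically $0$), so the $B$-free sufficient condition fails and the sign of $\expec[Y_{k:n}]-\expec[Y_{1:n}]$ really does depend on how large $B$ is. I therefore expect a fully general proof to need an extra hypothesis — e.g.\ $B$ controlled relative to $n$ and $\epsilon$, or $n\epsilon$ bounded away from $0$ — under which the two-regime argument above goes through with the Binomial order-statistic comparison as the only quantitative step remaining.
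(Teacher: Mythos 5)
The paper does not actually prove this statement --- it is offered as a \emph{conjecture}, backed only by numerical evidence at $n=12$, $\epsilon=0.4$, and $B$ ranging from $2$ to $10^4$. So there is no paper proof to compare against; what you have written is a well-organized partial attempt together with a clear-eyed account of where it stalls, and that account is the most valuable part.

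Your linearization is correct and is exactly the right first move. With $N_i\sim\BinomD(s,\epsilon)$ counting straggling CUs in task $i$, monotonicity gives $Y_{k:n}=s+(B-1)N_{k:n}$, hence $\expec[Y_{k:n}]=n/k+(B-1)\expec[N_{k:n}]$, and the problem becomes a Binomial order-statistics comparison plus a deterministic term. The $B$-free sufficient condition you extract and the appeal to Corollary~\ref{Le:general} for the large-$n$ regime are both sound.

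But the corner you flag at the end is not merely a hole in the argument --- it refutes the conjecture as stated. Fix $n$ and let $\epsilon\to0$. For $k\ge1$ one has $\expec[N_{k:n}]=\Theta(\epsilon^{\,n-k+1})$, because $N_{k:n}\ge1$ requires at least $n-k+1$ of the $n$ servers to draw at least one straggler; this is \emph{smallest} at $k=1$. Choosing $B$ of order $\epsilon^{-(n-1)}$ therefore drives the $(B-1)$-penalty under replication to a vanishing $\Theta(\epsilon)$ while the penalty under every admissible $k\ge2$ stays bounded away from $0$, and the deterministic saving $n/k-n$ cannot compensate. A concrete instance: $n=4$, $\epsilon=0.01$, $B=10^5$ gives (to three significant figures) $\expec[Y_{1:4}]\approx 4.24$, $\expec[Y_{2:4}]\approx 5.10$, $\expec[Y_{4:4}]\approx 3.94\times10^{3}$, so replication strictly beats both $k=2$ and $k=4$. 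For prime $n$ the failure is starker still, since $k=n$ is the only alternative and its $(B-1)$-coefficient is $1-(1-\epsilon)^n\approx n\epsilon$, far larger than $\expec[N_{1:n}]\approx (n\epsilon)^n$. The paper's numerics do not detect this because at $\epsilon=0.4$ one has $\expec[N_{1:n}]\ge1>\expec[N_{n:n}]$, so the $B$-free sufficient condition holds and splitting wins outright. The extra hypothesis you suggest (e.g.\ $n\epsilon$ bounded away from $0$, or $B$ controlled relative to $n$ and $\epsilon$) is therefore not a technical convenience but a necessity; without it, the conjecture is false.
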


In Fig.~\ref{fig:add_Bi-Modal_B}, we plot $\expec[Y_{k:n}]$ vs.\ $k$ in a system with $n=12$ workers for four different values of $B\in\{2,5,10,20\}$. We see that the diversity/parallelism trade-off is determined by the magnitude of straggling $B$. The figure also shows that the optimal code rate is either $1/2$ or $1$, which coincides with Conjecture~\ref{CJ:BiModal}. To examine this result, we evaluated $\expec[Y_{k:n}]$ for values of $B$ from $2$ to $10000$. We observed that when $B\le 106$, the optimal code rate is $1/2$, and when $B\ge 107$, the optimal code rate is $1/3$. 
These simulation results provide some support to Conjecture~\ref{CJ:BiModal}. 
\begin{figure}
    \centering
    \includegraphics[width=0.4\textwidth]{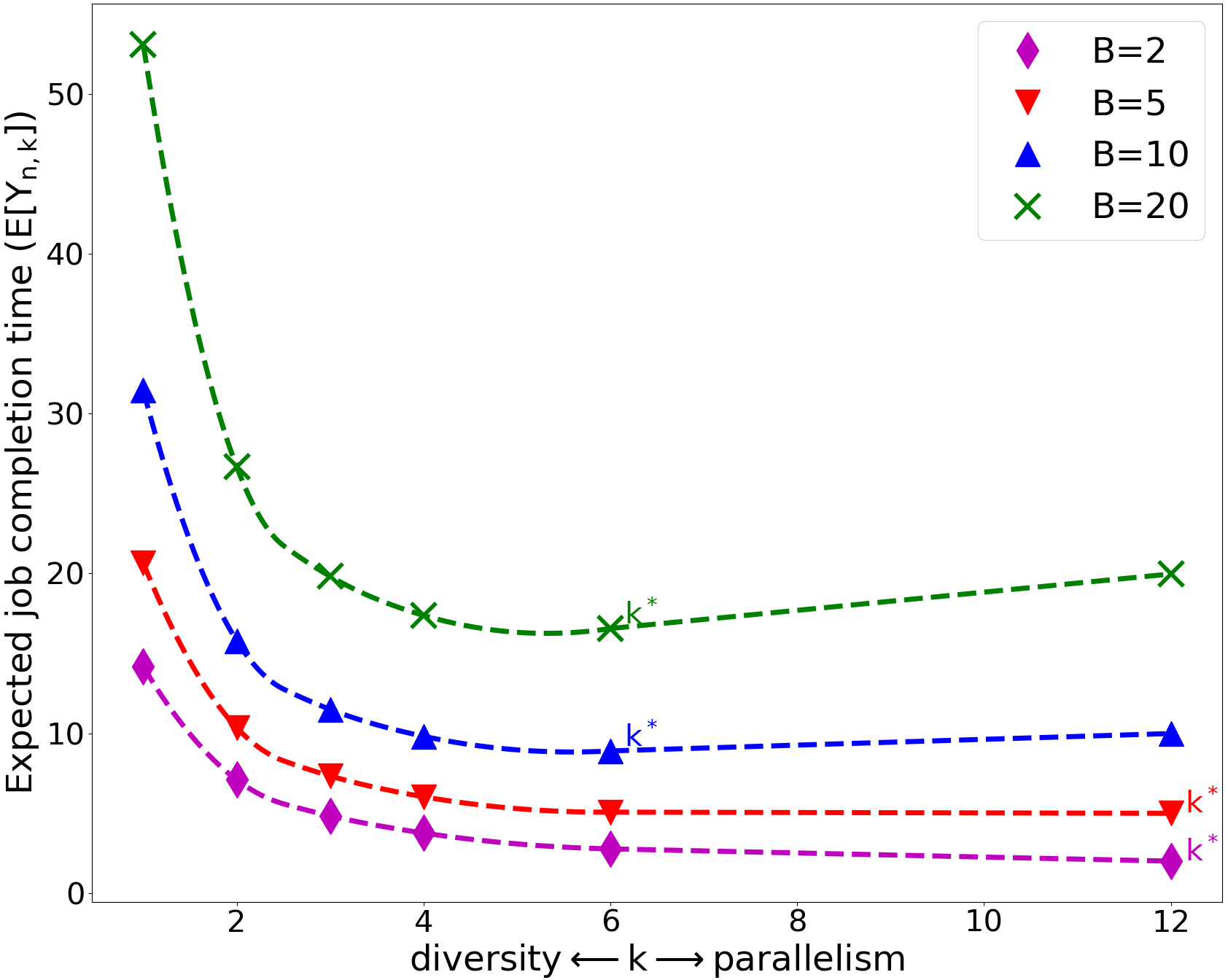}
    \caption{Expected job completion time $\expec[Y_{k:n}]$ for Bi-Modal service time with additive scaling as a function of the diversity/parallelism parameter $k$ (cf.~\eqref{Eq:BiModal}). The straggling probability $\epsilon$ is $0.4$. The number of workers (job size) is $n=12$, and task size per worker is $s=n/k$ (Since both $k$ and $s$ are integers, we have $k\in\{1,2,3,4,6,12\}$. We use dashed curves to connect the points corresponding to different allowed values of $k$ for a given $\alpha$.) The optimal code rate decreases as $B$ increasing, reaching the minimum at $1/3$.}
    \label{fig:add_Bi-Modal_B}
\end{figure}

\paragraph*{Remark}
Based on the insight we gained from the analysis up to this point, we make the following conjecture about the optimal strategy for general service time PDFs. The claim in the conjecture holds for the PDFs considered in this paper.
\begin{conjecture}
\label{conject:additive}
Under additive scaling and a general CU service time, either coding or splitting outperforms replication.
\end{conjecture}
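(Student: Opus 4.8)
The natural plan is to split on the tail/moment behaviour of the CU service time $X$ and, in each regime, to compare replication ($k=1$) against its most plausible competitor among coding and splitting. If $X$ has a finite fourth moment, Corollary~\ref{Le:general} already gives $\expec[Y_{n:n}]<\expec[Y_{1:n}]$ for $n$ large, so splitting beats replication there; and the same conclusion in fact needs only $\expec[X]<\infty$, since under additive scaling a replicating worker's time is $\sum_{j=1}^{n}X_{i,j}$, so by the weak law of large numbers (plus a Chebyshev-type lower bound) $\expec[Y_{1:n}]=\expec[\min_i\sum_{j=1}^{n}X_{i,j}]$ grows linearly in $n$, whereas $\expec[Y_{n:n}]=\expec[\max_{i\le n}X_i]=o(n)$ by the classical fact that the expected maximum of $n$ i.i.d.\ integrable variables is sublinear. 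Hence $\expec[Y_{n:n}]<\expec[Y_{1:n}]$ for all $n\ge n_0$, where $n_0$ depends only on the law of $X$.

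What remains in the integrable case is the range $2\le n<n_0$, where the LLN gives nothing and one needs a genuinely finite-$n$ argument. The template is the halting/coupling construction of Theorem~\ref{thm:sca}: realise the $s=n/k$ coded system, freeze each server the instant it completes its first CU, resume the frozen servers only once $k$ of them have reached that point, and compare the result against replication (and against splitting). The difficulty is that Theorem~\ref{thm:sca} leaned on memorylessness of the exponential law to treat a resumed server as facing a fresh copy of $X$; for general $X$ one would instead need a stochastic-dominance / majorization statement for the $k$-th order statistic of $n$ i.i.d.\ $s$-fold convolutions of $X$, monotone in $k$ in the appropriate direction. Establishing such an inequality---or, absent a clean monotonicity, pinning down a mild shape hypothesis on $X$ (support bounded away from $0$, or an NBU/DFR-type condition) under which it holds---is the crux. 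A two-point law whose smaller atom is pushed toward $0$ already breaks the pure splitting-versus-replication comparison at $n=2$, which is why the statement is meant for service-time families of the type studied here rather than for truly arbitrary $X$.

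If instead $\expec[X]=\infty$ (e.g.\ $\parD(\lambda,\alpha)$ with $\alpha\le 1$) then $\expec[Y_{n:n}]=\expec[\max_iX_i]=\infty$, so splitting is useless and one must exhibit a \emph{coding} rate that beats replication. Here the plan is purely tail-based: for a subexponential $X$ the time of a worker holding $s$ CUs has tail $\asymp s\,\Pr\{X>t\}$, hence $\Pr\{Y_{k:n}>t\}\asymp\binom{n}{n-k+1}\bigl(s\,\Pr\{X>t\}\bigr)^{\,n-k+1}$ for large $t$, which is integrable precisely when $\alpha(n-k+1)>1$. Replication is the choice $k=1$; comparing $\expec[Y_{1:n}]$ with $\expec[Y_{2:n}]$ (or with the smallest admissible $k$) through these tail asymptotics, and lower-bounding the competitor by the time for one server to finish its first CU, should give $\expec[Y_{k:n}]<\expec[Y_{1:n}]$ for a suitable $k\ge 2$.

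I expect the second step to be the main obstacle: once $n$ is small the problem is no longer asymptotic, the exponential-specific coupling of Theorem~\ref{thm:sca} does not carry over, and what is really needed is a new comparison inequality for the order statistics of i.i.d.\ convolutions---very likely valid only under a mild regularity assumption on the CU service time, which would also explain why the conjecture is stated for the service-time models considered in this paper.
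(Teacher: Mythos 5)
This statement is labeled a \emph{conjecture} in the paper, and the paper offers no proof of it; it only observes that the claim holds for the specific service-time families analyzed in the earlier sections (and relies on Corollary~\ref{Le:general} and numerical evidence). So there is no ``paper's own proof'' to compare against; what you have written is a strategy sketch, not a proof, and you say as much yourself. That said, a few remarks on the sketch's substance.

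Your large-$n$ step is sound and is essentially the content of Corollary~\ref{Le:general} plus the classical sublinearity of $\expec[\max_{i\le n}X_i]$ for integrable $X$. One detail to be careful about: the phrase ``Chebyshev-type lower bound'' suggests you only need a finite second moment, which is indeed enough (Chebyshev gives $p_n=O(1/n)$, so $(1-p_n)^n$ stays bounded away from $0$ and $\expec[Y_{1:n}]\ge c\,n(\mu-\eta)$, while $\expec[Y_{n:n}]=O(\sqrt n)$). If you really want just $\expec[X]<\infty$, you should say explicitly that the lower bound comes from truncating $X$ at a level $M$, applying the fourth-moment argument to the truncated variable, and using the pathwise monotonicity $\sum_j X_{ij}\ge\sum_j X_{ij}^{(M)}$; as written, this step is glossed over.

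Your diagnosis of the finite-$n$ gap is correct, and it is the real obstruction: Theorem~\ref{thm:sca}'s halting argument leans on the memorylessness of the exponential to treat a resumed server as fresh, and nothing like that is available in general. Moreover your worry that the conjecture fails for genuinely arbitrary $X$ at small $n$ is right even within the paper's own model class: for the Bi-Modal law in~\eqref{eqn:toy_X_dist} with, say, $\epsilon=0.3$ and $B=20$ at $n=2$, one computes $\expec[\max(X_1,X_2)]\approx 10.7$ while $\expec[\min(X_{1,1}+X_{1,2},\,X_{2,1}+X_{2,2})]\approx 7.1$, so replication beats the only available alternative $k=2$. This shows the conjecture cannot be read as a for-all-$n$ statement without some restriction (on $n$, on the divisors of $n$, or on the law of $X$). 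Flagging this explicitly, rather than burying it in the phrase ``meant for service-time families of the type studied here,'' would strengthen the discussion.

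The heavy-tail branch ($\expec[X]=\infty$) is the most speculative part. The tail heuristic $\Pr\{Y_{k:n}>t\}\asymp\binom{n}{n-k+1}\bigl(s\Pr\{X>t\}\bigr)^{n-k+1}$ is plausible for subexponential $X$, and the conclusion that some $k\ge 2$ has a lighter tail than $k=1$ (because one extra order-statistic raises the power $n-k+1$ by one) is credible, but the constants and the finite-$n$ crossover are not established, and when $\alpha\le 1/n$ even $\expec[Y_{1:n}]$ is infinite, which needs separate handling. If you pursue this, the cleanest first target is to show directly $\expec[Y_{2:n}]<\expec[Y_{1:n}]$ for Pareto with $\alpha\in(1/n,1]$, which is a concrete inequality in $\Gamma$-functions.

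In short: a reasonable outline, correctly identifies where the known tools run out, but it remains a sketch with real gaps; the conjecture itself is open in the paper and, as stated, is too strong for small $n$ without additional hypotheses.
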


\section{Conclusions and Future Directions}
\label{sec:conclusions}
In distributed computing with redundancy, smaller tasks that make up a large computing job are executed in parallel, and redundancy is added as a form of diversification that reduces the job service time dependence on the execution of straggling tasks. Both parallelism and diversity reduce job completion time. However, in systems where a constant number of workers is available for job execution, more redundancy means less parallelism and vice versa. We considered the
 trade-off between diversity and parallelism with the purpose to minimize the job completion time. 
 
 Depending on the level of redundancy used in the system, each worker has to execute a task consisting of one or more computing units (a minimum-size task below which distributed computing would be inefficient). In Sections \ref{sec:shift}, \ref{sec:Pareto}, and \ref{sec:Bi-Modal}, we considered three common models for the computing unit service time PDF. For each of these models, we adopted three common assumptions about service time scaling with the task size. For each service time model, the results are summarised in a table at the end of the  corresponding section. We also drew several conclusions and conjectures about a general service time distributions with the finite fourth moment.

In our summary of the results in Table~\ref{Table:models}, we distinguished only between the following three regimes: 1) maximum parallelism (splitting the job across the workers),  2) maximum diversity (replication of the job at each worker), and 3) the region where coding is used to enable a trade-off between diversity and parallelism. We indicated in the table how the optimal strategy changes as the tail of the service time PDF becomes heavier. 
The general conclusion is that the optimal level of redundancy strongly depends on the assumptions made about the task service time PDF and its scaling with the task size. This work sets the stage for many problems of interest to be studied in the future. We briefly describe three directions of immediate interest.

\subsubsection{Diversity/parallelism tradeoff for non-MDS codes} As we mentioned in Sec.~\ref{sec:coding}, our system model and analysis approach are not limited to MDS codes. 
If an $[n, k]$ code with minimum distance $d$ is used, then the job is completed when any $m = n - (d - 1)$ out of $n$ tasks are completed. The Singleton bound imposes the constraint $m \ge k$, where $m = k$ for MDS codes.
When $m>k$, the diversity/parallelism tradeoff may change, since for such systems, the task size $s=n/k$ is the same as the MDS coded ones, but they are able to mitigate fewer stragglers. 
Of particular interest is characterizing the diversity/parallelism tradeoff for codes with features that are attractive in practice, e.g., the codes proposed in \cite{yu2020straggler} with linear encoding and decoding. Observe that characterizing the diversity/parallelism tradeoff is a complementary task to code design and selection. It determines the code parameters within a class of codes that are optimal under the given system model.

\subsubsection{Diversity/parallelism tradeoff for for general job sizes}
Since we are concerned with systems with a fixed number of workers $n$, we have assumed that each job is split into $n$ CUs (has size $n$). To generalize this assumption, we can set the job size to be $bn$ CUs, where $b$ is an integer and $b\ge 2$. This generalization will require that some of our results be slightly modified (Theorem.~\ref{thm:server}, \ref{Th:Paretomethod} and \ref{Thm:bimodal_LLN}).  Some other claims may need new statements or proofs, e.g., Theorem.~\ref{th:splitvsrepli} and \ref{LLN:Pareto}. This generalization will allow us to analyze the full spectrum of code rates. Further generalizations of job sizes may be much more difficult but worth studying.

\subsubsection{Diversity/parallelism tradeoff for other system models} 
This paper analyzed the most common service time and scaling models. Some other distributions, e.g., Weibull distribution used in \cite{reisizadeh2019coded} are also of practical interest. In previous sections, we stated Conjectures \ref{conject:server}, \ref{CJ:BiModal} and \ref{conject:additive} regarding server-dependent and additive scaling. Conjectures \ref{conject:server} and \ref{conject:additive} concern general service time distributions, but knowing whether they hold for classes of distributions other than those considered in the paper is of interest.
This paper considered distributed, parallel  architectures commonly implemented in modern computing frameworks, e.g., Kubernetes and Apache Mesos, and adopted the model that corresponds to these systems. However, other computing architectures which give rise to different models are also important to study. One such example is the emerging wireless edge computing where, for example, the worker nodes may not be statistically identical.

%%\newpage

\appendices
\section{Mathematical Background}
\subsection{Order Statistics}
In executing jobs with redundant tasks, the notion of order statistics plays a central role. We here state the results we use throughout the paper.
More information can be found in, e.g., \cite{Order:Renyi53,Order:ArnoldBN08,Order:David11, Pareto:Arnold15}.
Let $X_1,X_2, \dots, X_n$ be $n$ samples of some RV $X$. Then the $k$-th smallest is an RV denoted by $X_{k:n}$ and known as the $k$-th order statistic of $X_{1},X_{2},\cdots,X_{n}$ . 
%
%Suppose that in the execution of a job, completion time of each launched task takes an iid time $X$, . When $c$ replicas are launched in parallel with each task, then each task will complete as soon as either the original copy or one of its replicas finish; task completion times can be expressed as $X_{1:c+1}$. 
%Suppose job is split in $k$ tasks and encoded in $n$ s.t.\ any of the $k$ encoded tasks are sufficient for job completion, and each task takes time $X_i$ (where $X_i$ are i.i.d.) to finish. Then the job will complete in random time corresponding to the $k$-th order statistics $X_{k:n}$.

\subsubsection{Exponential Distribution} If $X_{1},X_{2},\cdots,X_{n}$ are $\expD(W)$, then $X_{1:n}$ is $\expD(W/n)$, and
\begin{equation}
X_{k:n}=X_{1:n}+X_{1:(n-1)}+\dots +X_{1:(n-k+1)}. %typo-17
\label{eq:OSsum}
\end{equation}
The expectation of $X_{k:n}$ is given by
\begin{align}
\label{eq.orderstat_mean_variance}
& \expec[X_{k:n}] = W\sum_{i=1}^{k} {\frac{1}{n-k+i}} = W \bigl(H_n - H_{n-k}\bigr)
\end{align}
where $H_n$ is (generalized) harmonic numbers defined as
$H_n =  \sum_{j=1}^n \frac{1}{j}$.  We  often use the approximation
$H_n=\log n+\gamma + \mathcal{O}(n^{-1})$, where
$\gamma = 0.577$ is Euler's constant.

%Note that if task completion times are exponential, then $\expec[X_{k:n}]$ decreases with $n$ for a fixed $k$, which indicates the reduction in expected value of the latency by running $n-k$ coded redundant tasks in parallel with $k$ initial tasks.

\subsubsection{Erlang Distribution}
If $X_{1},X_{2},\cdots,X_{n}$  are $\erD(s,W)$, then, according to the formula of gamma order statistics in \cite{gupta1960order}, we have
\begin{equation}
\begin{aligned}
        X_{k:n} = & \frac{Wk}{(s-1)!} \binom{n}{k} \sum^{k-1}_{i=0} (-1)^{i} \binom{k-1}{i}\\& \sum^{(s-1)(n-k+i)}_{j=0}\!\!\! \alpha_{j}(s,n-k+i) \frac{(s+j)!}{(n-k+i+1)^{s+j+1}}
        \label{Eq:Erlang-order}
\end{aligned}
\end{equation}
where $\alpha_{z}(x,y)$ is the coefficient of $t^{z}$ in the expansion of $\bigl(\sum^{x-1}_{l=0} t^{l}/l!\bigr)^{y}$. 

\subsubsection{Pareto Distribution}
If $X_{1},X_{2},\cdots,X_{n}$ are $\parD(\lambda, \alpha)$, then the expectation of $X_{k:n}$ for $\alpha > 1$ is given by
\begin{equation}
\label{Eq:Pareto}
    \expec[X_{k:n}] = \lambda\frac{n!}{(n-k)!}\frac{\Gamma(n-k+1-1/\alpha)}{\Gamma(n+1-1/\alpha)} 
\end{equation} 
where the complete gamma function is defined as $\Gamma(x) = \int_0^{\infty} u^{x-1}e^{-u}du$. %and the covariance of $X_{k_1:n}$ and $X_{k_2:n}$ for $k_2 > k_1$ and $\alpha > \max\{2/(n-k_1+1), 1/(n-k_2+1)\}$ is
%\begin{align*}
%    \text{Cov}[X_{k_1:n}X_{k_2:n}] &= \lambda^2\frac{n!}{(n-k_2)!}\frac{\Gamma(n-k_1+1-2/\alpha)}{\Gamma(n+1-2/\alpha)}\\&\frac{\Gamma(n-k_2+1-1/\alpha)}{\Gamma(n-k_1+1-1/\alpha)} - E[X_{k_1:n}]E[X_{k_2:n}]
%\end{align*}
%using which $\var[X_{k:n}]$ can be obtained by setting $k_2 = k_1$.
%

In our work on straggler mitigation (see \cite{aktas2017effective}), we have obtained the following approximation: %typo-18
\begin{equation}
\label{Ap:Pareto}
    \Gamma(x + \beta)/\Gamma(x + \alpha) \sim x^{\beta-\alpha}
\end{equation} 
by using Stirling's approximation or by an induction on Gautschi's inequality \cite{Ineq:Gautschi59}. This result is useful in finding numerically good approximations of $\expec[X_{k:n}]$ for large $n$.

\subsubsection{Bi-Modal Distribution}
\label{AP:bimodal}

\begin{lemma}
\label{Le:biorder}
If $Y=X_1+\cdots +X_s$ is the sum of $s$ i.i.d. $\BMD(B,\epsilon)$ RVs, then
\begin{equation}
Y=s-w+wB ~~\text{w.p.}~~\textstyle{\binom{s}{w}
(1-\epsilon)^{s-w}\epsilon^{w},} ~~0\le w\le s.
\label{eq:BMDa}
\end{equation}
The expectation of $Y_{k:n}$ is
\begin{equation}
\label{Eq:BiModal}
    \begin{aligned}
    \expec[Y_{k:n}]&=s+(B-1)\sum_{w=1}^{s-1}w\sum^{k-1}_{i=0} \binom{n}{i}\bigr(\sum^{w-1}_{j=0} p_{j}\bigl)^{i} \\&\bigl[\sum^{n-i}_{l=k-i} \binom{n-i}{l}  p_{w}^{l}
\bigl(\sum^{s}_{h=w+1}p_{h}\bigr)^{n-i-l}\bigr]\\&+s(B-1)\sum^{k-1}_{i=0} \binom{n}{i} p_{s}^{n-i}(1-p_{s})^{i}
\end{aligned}
\end{equation}

\end{lemma}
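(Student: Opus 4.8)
The plan is to treat the two assertions in turn: first the marginal law \eqref{eq:BMDa} of $Y$, then the expectation of its $k$-th order statistic. For the marginal, write $Y=\sum_{i=1}^{s}X_i$ with each $X_i\sim\BMD(B,\epsilon)$ as in \eqref{eqn:toy_X_dist}, and let $w$ be the number of indices $i$ with $X_i=B$ (the ``slow'' computing units). Then $w\sim\BinomD(s,\epsilon)$, so $\Pr\{w\text{ slow units}\}=\binom{s}{w}(1-\epsilon)^{s-w}\epsilon^{w}$, and on that event $Y=(s-w)\cdot 1+w\cdot B=s-w+wB$, which is exactly \eqref{eq:BMDa}. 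Since $B>1$, the values $v_w:=s-w+wB=s+w(B-1)$ for $w=0,1,\dots,s$ are distinct and strictly increasing in $w$; write $p_w:=\binom{s}{w}(1-\epsilon)^{s-w}\epsilon^{w}=\Pr\{Y=v_w\}$, consistent with the notation in \eqref{Eq:BiModal}.

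For the order statistic, note that $Y_{k:n}$ takes values only in $\{v_0,\dots,v_s\}$, so that
\[
\expec[Y_{k:n}]=\sum_{w=0}^{s}v_w\,\Pr\{Y_{k:n}=v_w\}=s+(B-1)\sum_{w=1}^{s}w\,\Pr\{Y_{k:n}=v_w\},
\]
using $v_w=s+w(B-1)$ and that the probabilities sum to one. It then remains to evaluate $\Pr\{Y_{k:n}=v_w\}$. Because the support has no points strictly between consecutive $v$'s, the event $\{Y_{k:n}=v_w\}$ coincides with the event that among the i.i.d.\ copies $Y_1,\dots,Y_n$ the number that are $<v_w$ is some $i\le k-1$ and, of the remaining $n-i$, at least $k-i$ equal $v_w$ (the rest being $>v_w$). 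Setting $a_w:=\Pr\{Y<v_w\}=\sum_{j=0}^{w-1}p_j$ and $c_w:=\Pr\{Y>v_w\}=\sum_{h=w+1}^{s}p_h$, the multinomial split gives, for $1\le w\le s-1$,
\[
\Pr\{Y_{k:n}=v_w\}=\sum_{i=0}^{k-1}\binom{n}{i}a_w^{\,i}\sum_{l=k-i}^{n-i}\binom{n-i}{l}p_w^{\,l}c_w^{\,n-i-l},
\]
while for $w=s$ there is nothing strictly above $v_s$, so the event is simply that at most $k-1$ copies are $<v_s$, yielding $\Pr\{Y_{k:n}=v_s\}=\sum_{i=0}^{k-1}\binom{n}{i}(1-p_s)^{i}p_s^{\,n-i}$. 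The case $w=0$ contributes nothing. Substituting these into the displayed expression for $\expec[Y_{k:n}]$ and separating the $w=s$ term reproduces \eqref{Eq:BiModal}.

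The routine parts are the binomial and multinomial bookkeeping; the one place that needs care — and the main (if minor) obstacle — is the order-statistic event for a \emph{discrete} variable: one must use that $Y_{k:n}=v_w$ iff (number of copies $\le v_{w-1}$)$\,\le k-1$ and (number of copies $\le v_w$)$\,\ge k$, and translate this correctly into the triple count over $\{<v_w\},\{=v_w\},\{>v_w\}$, handling the boundary $w=s$ (empty ``$>v_s$'' tail) separately. Distinctness of the $v_w$'s, which is precisely where the hypothesis $B>1$ enters, is what makes ``$<v_w$'' coincide with ``$\le v_{w-1}$'' and so makes the count clean.
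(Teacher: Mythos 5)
Your proposal is correct and follows essentially the same route as the paper: derive the binomial marginal of $Y$, then evaluate $\Pr\{Y_{k:n}=v_w\}$ by a multinomial split of the $n$ copies into those below, equal to, and above $v_w$, with the counting constraints $i\le k-1$ and $l\ge k-i$ coming from the definition of the $k$-th order statistic. Your write-up is in fact a bit cleaner than the paper's prose at the one delicate step (it explicitly states that $Y_{k:n}=v_w$ iff at most $k-1$ copies are $<v_w$ and at least $k$ are $\le v_w$, and notes why $B>1$ makes the $v_w$'s distinct so this translation is legitimate), but the decomposition and the resulting formula are identical.
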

\noindent
\begin{proof}
The expression in \eqref{eq:BMDa} is straightforward to derive.
Since $Y_{1},\cdots,Y_{n}$ are sums of $s$ i.i.d. $\BMD(B,\epsilon)$ RVs, we deduce $Y_{k:n}$ by using the definition of order statistics,
\begin{align*}
Y_{k:n} = \begin{cases}
s ~~&\text{w.p.} ~~ \textstyle{\sum^{n-k}_{i=0} \binom{n}{i} p_{0}^{n-i}(1-p_{0})^{i}}  \\
s-w+wB~~&\text{w.p.} ~~\textstyle{\Pr(w)}, 1\le w<s\\
sB ~~&\text{w.p.} ~~ \textstyle{\sum^{k-1}_{i=0} \binom{n}{i} p_{s}^{n-i}(1-p_{s})^{i}}
\end{cases} 
\end{align*}
where $p_{v}=\binom{s}{v}
(1-\epsilon)^{s-v}\epsilon^{v} ~~\text{for}~~ v=0,\dots, s$. When $w=0$ and $w=s$, the respective probabilities of $Y_{k:n} = s$ and $Y_{k:n} = sB$ are straightforward to derive. We consider the case when $0<w<s$ and $Y_{k:n} = s-w+wB$. Since we are concerned with $k$-th order statistics, we know that there are at most $k-1$ RVs among $\{Y_{1},\cdots,Y_{n}\}$ whose values are smaller than $s-w+wB$. Consider the event that $i$ ($i\le k-1$) of the RVs are smaller than $s-w+wB$. The probability of this event is $\binom{n}{i}\bigr(\sum^{w-1}_{j=0} p_{j}\bigl)^{i}$.  Among the remaining $n-i$ RVs in $\{Y_{1},\cdots,Y_{n}\}$, there are at most $n-k$ RVs whose values are larger than $s-w+wB$; the others are equal to $s-w+wB$. Consider an event where $\ell$ ($k-i\le \ell \le n-i$) of these RVs take values larger than $s-w+wB$. Thus all other $n-i-\ell$ RVs take value $s-w+wB$. The probability of this event under the condition of the previous event is
$\binom{n-i}{l} p_{w}^{l}\bigl(\sum^{s}_{h=w+1}p_{h}\bigr)^{n-i-l}$. 
Thus,
\begin{align*}
 \Pr(w)=&\sum^{k-1}_{i=0} \binom{n}{i}\bigr(\sum^{w-1}_{j=0} p_{j}\bigl)^{i} \\&\bigl[\sum^{n-i}_{l=k-i} \binom{n-i}{l} p_{w}^{l}
\bigl(\sum^{s}_{h=w+1}p_{h}\bigr)^{n-i-l}\bigr].
\end{align*}

Therefore, we have
\begin{align*}
    \expec[Y_{k:n}]&=s+(B-1)\sum_{w=1}^{s-1}w\sum^{k-1}_{i=0} \binom{n}{i}\bigr(\sum^{w-1}_{j=0} p_{j}\bigl)^{i} \\&\bigl[\sum^{n-i}_{l=k-i} \binom{n-i}{l}  p_{w}^{l}
\bigl(\sum^{s}_{h=w+1}p_{h}\bigr)^{n-i-l}\bigr]\\&+s(B-1)\sum^{k-1}_{i=0} \binom{n}{i} p_{s}^{n-i}(1-p_{s})^{i}
%\qedhere
\end{align*}
\end{proof}

\subsection{A Generalized Birthday Problem}
\label{Ap:birthday}
A generalized birthday problem is stated as follows: ``How many draws with replacement on average have to be made from a set of $n$ coupons until one of the coupons is drawn $d$ times?'' The expected number of draws $\expec(n,d)$ was determined in \cite{klamkin1967extensions}:
\begin{equation}
\label{Eq:birth}
\begin{aligned}
    &\expec(n,d)=\int_0^\infty \!\! e^{-t} \Bigl[S_d\Bigl(\frac{t}{n}\Bigr)\Bigr]^ndt~~~ \\&\text{where}~~
    S_d(x)=1+\frac{x}{1!}+\frac{x^2}{2!}+\dots+\frac{x^{d-1}}{(d-1)!}.
    \end{aligned}
\end{equation}
We also use an asymptotic expression for (\ref{Eq:birth}) given in 
\cite{klamkin1967extensions}, which says that for a fixed $d$, we have
\begin{equation}
\label{Eq:birthapprox}
    \expec(n,d) \sim \sqrt[d]{d!}\,\Gamma(1+1/d)n^{1-\frac{1}{d}}, ~~\text{as}~~n\rightarrow \infty.
\end{equation}

\bibliographystyle{IEEEtran}
\bibliography{DistRedundan,stragglers,CCML,BallsUrns,ReferencesOrganized}
\begin{IEEEbiographynophoto}{Pei Peng} 
received his B.S. degree in information engineering from South China University of Technology, Guangzhou, China, in 2011, and the M.S. degree in electronics and communication engineering from Shanghai Institute of Micro-system and Information Technology, Chinese Academy of Science, Shanghai, China, in 2014. 
Since 2015, he has been a Ph.D.\ student in the electrical and computer engineering department at Rutgers, the State University of New Jersey, USA. During his PhD.\ studies, Pei Peng has served as both research and teaching assistant and has received a 2020 ECE department teaching award. His research interests are coding and allocation in distributed computing systems, covert communications, and machine learning.
\end{IEEEbiographynophoto}

\begin{IEEEbiographynophoto}{Emina Soljanin}
is a professor at Rutgers University. Before moving to Rutgers in 2016, she was a (Distinguished) Member of Technical Staff for 21 years in the Mathematical Sciences Research of Bell Labs.
Her interests and expertise are wide. Over the past quarter of the century, she has participated in numerous research and business projects, as diverse as power system optimization, magnetic recording, color space quantization, hybrid ARQ, network coding, data and network security, distributed systems performance analysis, and quantum information theory. She served as an Associate Editor for Coding Techniques, for the IEEE Transactions on Information Theory, on the Information Theory Society Board of Governors, and in various roles on other journal editorial boards and conference program committees.
 Prof.~Soljanin an IEEE Fellow, an outstanding alumnus of the Texas A\&M School of Engineering, the 2011 Padovani Lecturer, a 2016/17 Distinguished Lecturer, and 2019 President of the IEEE Information Theory Society.
 \end{IEEEbiographynophoto}
 
\begin{IEEEbiographynophoto}{Philip Whiting}
received the B.A. degree from the University of Oxford,
the M.Sc. degree from the University of London, and the Ph.D. degree in
queueing theory from the University of Strathclyde.
After holding a post-doctoral position with the University of Cambridge, his research interests centered on wireless. In 1993, he participated in the Telstra trial of Qualcomm CDMA in South Eastern Australia. He then joined the Mobile Research Centre, University of South Australia, Adelaide. He was a Researcher at Bell Labs from January 1997 to June 2013. He is currently a Research Professor at Macquarie University, Sydney, Australia, and also a Consultant to Telstra for the past two years. He has over 25 patents in applications for DSL vectoring, wireless networks, and location and tracking. He has received several awards for his work in DSL vectoring and in wireless scheduling. He has held visiting positions including ones at Brown University (Maths), University of Korea (Engineering), and Vrij University (Maths). For the past three years, he has been a STAR Visiting Scholar with the Maths Department, Technical University of Eindhoven, where his collaborative work includes investigations of both CSMA networks and load balancing in queueing networks. Apart from papers in various aspects of telecommunications, he has been an author on various aspects of probability theory, including random Vandermonde matrices, large deviations theory for occupancy models and more recently random CSMA networks and load balancing in queueing networks. His current research includes storage systems and wireless mmwave networks.
\end{IEEEbiographynophoto}

% that's all folks
\end{document}